\documentclass[11pt,letterpaper]{article}
\usepackage{authblk}

\usepackage{thm-restate}

\usepackage[margin=1in]{geometry}
\usepackage[T1]{fontenc}
\usepackage{amsmath,amsthm,mathtools}
\usepackage{newtxtext,newtxmath}
\usepackage{microtype,booktabs,array,tabularx,enumitem}
\usepackage{algorithm,algpseudocode}
\usepackage[numbers,sort&compress]{natbib}
\usepackage{xcolor,hyperref,cleveref,fancyhdr}
\hypersetup{
  hidelinks,
  hypertexnames=false,
  pdftitle={Polynomial Algorithms for Simultaneous
    Share-Based and Envy-Based Guarantees},
  pdfauthor={Hannaneh Akrami, Arash Ashuri, Kurt Mehlhorn}
}
\setlist{itemsep=2pt,topsep=4pt}
\numberwithin{equation}{section}
\newtheorem{theorem}{Theorem}[section]
\newtheorem{observation}[theorem]{Observation}
\newtheorem{lemma}[theorem]{Lemma}
\newtheorem{corollary}[theorem]{Corollary}
\newtheorem{proposition}[theorem]{Proposition}
\theoremstyle{definition}\newtheorem{definition}[theorem]{Definition}
\theoremstyle{remark}\newtheorem{remark}[theorem]{Remark}

\crefname{lemma}{lemma}{lemmas}\Crefname{lemma}{Lemma}{Lemmas}
\crefname{corollary}{corollary}{corollaries}\Crefname{corollary}{Corollary}{Corollaries}
\crefname{proposition}{proposition}{propositions}\Crefname{proposition}{Proposition}{Propositions}
\crefname{definition}{definition}{definitions}\Crefname{definition}{Definition}{Definitions}
\DeclareMathOperator{\MMS}{MMS}
\DeclareMathOperator{\MXS}{MXS}
\DeclareMathOperator{\RMMS}{RMMS}

\DeclareMathOperator{\LGLB}{\ensuremath{LGLB}}

\newcommand{\poly}{\operatorname{poly}}

\newcommand{\Repair}{\mathsf{EFLRepair}}

\newcommand{\Complete}{\mathsf{CompleteEFL}}

\renewcommand{\xi}{\MXS}

\definecolor{darkgreen}{RGB}{0,100,0}

\newcommand{\vefx}{s^{\text{EFX}}}
\newcommand{\vefone}{s^{\text{EF1}}}
\newcommand{\vefl}{s^{\text{EFL}}}

\newcommand{\efxenvy}{EFX-envy}
\newcommand{\efxenvies}{EFX-envies}
\newcommand{\eflenvy}{EFL-envy}
\newcommand{\eflenvies}{EFL-envies}

\newcommand{\A}{\mathbf{A}}
\newcommand{\C}{\mathbf{C}}

\newcommand{\effx}{\textsf{EF2X}}

\newcommand{\efkx}{\textsf{EFkX}}

\newcommand{\APS}{\operatorname{APS}}

\newcommand{\Epoch}{\mathsf{Epoch}}

\numberwithin{equation}{section}
\title{\vspace{-1.2em}\textbf{Polynomial Algorithms for Simultaneous Share-Based and Envy-Based Guarantees}\\[0.35em]
}

\author[1,2]{Hannaneh Akrami \thanks{hakrami@mpi-inf.mpg.de}}
\author[1]{Arash Ashuri\thanks{aashuri@mpi-inf.mpg.de}}
\author[1]{Kurt Mehlhorn \thanks{mehlhorn@mpi-inf.mpg.de}}

\affil[1]{Max Planck Institute for Informatics \& Saarland Informatics Campus, Saarbrücken, Germany}
\affil[2]{Hertz  Chair for Algorithms and Optimization, University of Bonn, Germany}

\date{}
\begin{document}
\maketitle

\begin{abstract}
We study the simultaneous achievement of share-based and envy-based fairness for allocating indivisible goods with nonnegative additive valuations. The AnyPrice share (APS) dominates the maximin share (MMS). The strongest previously known approximation of APS is $\frac{2n}{3n-1}$, achievable in polynomial time~\cite{BEF21}. However, this result does not provide an accompanying envy-based guarantee, and, to the best of our knowledge, no nontrivial approximation of APS was previously known to be achievable simultaneously with envy-freeness up to one less-preferred good (EFL). In a separate line of work, allocations satisfying minimum EFX share (MXS) together with EFL, $\frac{2}{3}$-MMS along with EFL, and later the stronger RMMS together with EFL, were shown to exist. The known constructions for these simultaneous guarantees, however, did not provide even pseudopolynomial running-time bounds.

We strengthen both lines of results. First, for every  $\varepsilon\in(0,1)$, we give an algorithm that computes an EFL and $(1-\varepsilon)$-MXS allocation while guaranteeing
$\frac{2n}{3n-1}$
of every agent's APS. This matches the APS-only factor $\frac{2n}{3n-1}$. The running time is polynomial in the binary input length and $1/\varepsilon$. Thus, this gives the first constant APS guarantee that is simultaneously compatible with exact EFL, while also providing an arbitrarily close approximation to MXS.
We show that our approach for approximate APS cannot be directly generalized to compute an MXS+EFL allocation in polynomial time.

Second, we provide an algorithm that computes an allocation satisfying exact MXS, exact EFL, and
$
\frac{2n}{3n-1}\text{-APS}
$
in pseudo-polynomial time. Hence, we match the strongest previously known APS guarantee while simultaneously imposing exact MXS and exact EFL.

Third, we give an algorithm for exact RMMS together with exact EFL running in pseudopolynomial time for every fixed number of agents.

Finally, we show that it is weakly NP hard to compute a number between MXS and RMMS share given an additive valuation function.
\end{abstract}

    \section{Introduction}\label{sec:intro}

We study complete fair allocations of a set $M$ of $m$ indivisible goods to a set $N$ of $n$ agents, where each agent $i$ has a nonnegative additive valuation function
$v_i:2^M\to\mathbb{R}_{\geq 0}.$
A complete allocation is denoted by
$\mathbf{A}=\langle A_1,\ldots,A_n\rangle,$
where the bundles $A_1,\ldots,A_n$ form a partition of $M$.

A fair allocation should address two different questions. First, does an agent receive sufficiently high value relative to the resources available? Second, can the agent be convinced that another agent's bundle is not substantially better than her own? Share-based and envy-based notions formalize these two questions, respectively. Neither perspective subsumes the other when goods are indivisible. An allocation may give every agent a substantial share of the total resources while still creating significant envy; conversely, an allocation with little envy may give an agent only a small fraction of a natural share benchmark.

\paragraph{Envy-based fairness.}
With indivisible goods, an envy-free allocation need not exist. This has motivated relaxations that allow an agent to disregard one good when comparing her bundle with another agent's bundle. EF1 requires that, for every pair of agents, envy can be eliminated by removing some good from the envied bundle. EFL strengthens EF1 by additionally requiring that the removed good itself is worth no more than the observer's own bundle when there are at least two goods in the envied bundle. EFX is stronger still: envy must disappear after the removal of \emph{any} good from the envied bundle \cite{CKMPSW19}. Thus, $\mathrm{EFX}\ \Longrightarrow\ \mathrm{EFL}\ \Longrightarrow\ \mathrm{EF1}.$ EF1 and EFL admit efficient standalone algorithms~\cite{LMMS04,BBMN18}.

\paragraph{Share-based fairness.}
We next introduce the share benchmarks used in this paper. Let $\Pi_n(M)$ denote the set of all $n$-partitions
$
\mathbf{P}=(P_1,\ldots,P_n)
$
of $M$.

\begin{definition}[Maximin share]
The \emph{maximin share} (MMS) of agent $i$ is $\mu_i = \MMS_n(v_i;M) = \max_{\mathbf{P}\in\Pi_n(M)} \min_{j\in[n]}v_i(P_j).$ Equivalently, agent $i$ partitions the goods into $n$ bundles and then receives a least-valued bundle; $\mu_i$ is the largest value she can guarantee herself by choosing the partition.

For $\alpha\in[0,1]$, an allocation $\mathbf{A}$ is \emph{$\alpha$-MMS} if
$
v_i(A_i)\ge \alpha\mu_i$ for every $i\in N.$
The case $\alpha=1$ is called an MMS allocation.
\end{definition}

MMS fairness was introduced in this context by Budish~\cite{B10}. Exact MMS allocations need not exist, even for additive valuations~\cite{KPW18}, which motivates both approximations of MMS and alternative share notions.

\begin{definition}[AnyPrice share]
The \emph{AnyPrice share} (APS) of agent $i$ with equal entitlement $1/n$ is
\[
\tau_i
=
\APS_{1/n}(v_i;M)
=
\min_{\substack{p_g\ge0\\ \sum_{g\in M}p_g=1}}
\;
\max_{\substack{S\subseteq M\\ \sum_{g\in S}p_g\le1/n}}
v_i(S).
\]
Here the goods are assigned arbitrary nonnegative prices of total price one, and agent $i$ receives a budget of $1/n$. For a fixed price vector, she chooses her most valuable affordable bundle; APS is the value she can guarantee against the worst possible choice of prices~\cite{BEF21}.

For $\alpha\in[0,1]$, an allocation $\mathbf{A}$ is \emph{$\alpha$-APS} if $v_i(A_i)\ge\alpha\tau_i \qquad \text{for every }i\in N.$ \end{definition}

APS is at least as demanding a benchmark as MMS. In fact, $\mu_i\le\tau_i \qquad \text{for every agent }i.$ To see this, fix any normalized price vector and an MMS partition $\mathbf{P}=(P_1,\ldots,P_n)$. 
Since the prices of all goods sum to one, at least one of the $n$ bundles $P_j$ has price at most $1/n$. That bundle is affordable and, because $\mathbf{P}$ is an MMS partition, has value at least $\mu_i$. Hence, under \emph{every} price vector, agent $i$ can afford a bundle worth at least $\mu_i$. Taking the minimum over price vectors gives $\tau_i\ge\mu_i$. Consequently, every $\alpha$-APS allocation is also $\alpha$-MMS.

\begin{definition}[Minimum EFX share]
For a bundle $B$, define $\vefx_i(B) = v_i(B)-\min_{g\in B}v_i(g)$ when $B\neq\varnothing$, and let $\vefx_i(\varnothing)=0$.
Given an $n$-partition $\mathbf{B}=(B_1,\ldots,B_n)$, a bundle $B_a$ is \emph{EFX-feasible for agent $i$} if $v_i(B_a) \ge \max_{j\in[n]}\vefx_i(B_j).$ The \emph{minimum EFX share} (MXS) of agent $i$ is
\[
\xi_i
=
\MXS_n(v_i;M)
=
\min
\left\{
v_i(B_a):
\begin{array}{l}
\mathbf{B}=(B_1,\ldots,B_n)\in\Pi_n(M),\\
B_a\text{ is EFX-feasible for }i
\end{array}
\right\}.
\]
Thus $\xi_i$ is the minimum value of an EFX-feasible bundle over all partitions in which such a bundle is considered~\cite{CGRSV22}.

For $\alpha\in[0,1]$, an allocation $\mathbf{A}$ is \emph{$\alpha$-MXS} if $v_i(A_i)\ge\alpha\xi_i \qquad \text{for every }i\in N.$ The case $\alpha=1$ is called an MXS allocation.
\end{definition}

MXS therefore combines a share-based viewpoint with an EFX certificate: the numerical guarantee depends only on the bundle received by the agent, but the benchmark is defined through the existence of a partition in which that bundle is EFX-feasible. Unlike MMS, MXS is always feasible~\cite{CGRSV22,HANR25}.

A more recent share notion is the \emph{residual maximin share} (RMMS)~\cite{F25}. Its purpose is to capture what happens during a sequential allocation process, when some bundles have already been given to other agents and removed from consideration. To make this precise, fix an agent $i$ and a threshold $t$. Imagine that $k<n$ pairwise disjoint bundles
$D_1,\ldots,D_k$
have already been removed, and suppose that each of them is worth strictly less than $t$ to agent $i$.
The threshold $t$ is called residually feasible if, \emph{for every} $k<n$ and every such collection of removed bundles, the remaining goods
can still be partitioned into $n-k$ bundles, each worth at least $t$ to agent $i$. The RMMS value, which we denote by $\rho_i$, is the largest threshold having this property.

Thus RMMS strengthens the usual maximin reasoning by requiring the same threshold to survive every possible sequence of removals of bundles that the agent herself regards as insufficient. Notice that taking $k=0$ recovers the ordinary partition requirement: if $t$ is residually feasible, then $M$ itself admits an $n$-partition whose every bundle has value at least $t$. Consequently, $\rho_i\le\mu_i$.

On the other hand, Feige~\cite{F25} showed that RMMS dominates MXS,
$\xi_i\le\rho_i$,
and, for additive valuations, that
$\rho_i\ge\frac23\mu_i.$

Putting these share benchmarks together gives the useful hierarchy
\[
\boxed{
\xi_i
\;\le\;
\rho_i
\;\le\;
\mu_i
\;\le\;
\tau_i
}
\qquad
\text{for every agent }i.
\]
The notions nevertheless serve different purposes. MXS is derived from EFX-feasible partitions and is designed to be feasible; RMMS captures the robustness of a share under the removal of low-valued bundles; MMS is the classical divide-and-choose benchmark; and APS is the strongest of these four numerical benchmarks, interpreting fairness through adversarial prices. In particular, an $\alpha$-APS guarantee automatically implies an $\alpha$-MMS guarantee, while an exact RMMS guarantee implies both exact MXS and, for additive valuations, at least $2/3$-MMS.

    \emph{Combining share-based and envy-based fairness.}
    The questions most closely related to this paper concern simultaneous guarantees. Ashuri and Gkatzelis~\cite{AG25} showed that MXS can be satisfied simultaneously with EFL for a valuation class containing additive valuations. Akrami and Rathi~\cite{AR25} established the existence of partial allocations satisfying both $2/3$-MMS and EFX, as well as complete allocations satisfying both $2/3$-MMS and EF1. Feige~\cite{F25} introduced the residual maximin share and showed that it can be achieved simultaneously with EFL, which further were showed that the same algorithm also satisfies EEFX\footnote{A bundle is EEFX-feasible for some agent if there exists a partition including the bundle such that the bundle is EFX-feasible for the agent. An allocation is EEFX if every agent gets an EEFX-feasible bundle.} ~\cite{AFMMR26} for additive valuation functions. 
    None of these result provide an even pseudo-polynomial time algorithm.
    Our results complement this line of work by focusing on the computational complexity of obtaining simultaneous guarantees.
    Beyond additive valuations, Feige and Fine~\cite{FF26} study simultaneous approximate MMS and EF1 guarantees for submodular valuations.

\subsection{Main Results}
Write $\tau_i=\APS_{1/n}(v_i;M)$, $\xi_i=\MXS_n(v_i;M)$, and $\rho_i=\RMMS_n(v_i;M)$. Let $\ell$ be the total binary encoding length. For integer valuations, set $U_i=v_i(M),\qquad U=\max_i U_i.$ The distinction between $U$ and its encoding length is essential. Define $\beta_n=\frac{2n}{3n-1}>\frac23.$ \begin{restatable}{theorem}{mainthone}
\label{theorem:2/3aps bounded}
[Polynomially computable residual APS floors]
    For a nonnegative rational additive valuation $v$, we have 
    $\beta_n\APS_{1/n}(v;M)
    \le\RMMS_n(v;M)$, and
    a threshold $\widehat t(v,n)$ satisfying $\beta_n\APS_{1/n}(v;M) \le\widehat t(v,n) \le\RMMS_n(v;M) \le\MMS_n(v;M)$ can be computed in $O((m+1)^2)$. Moreover, $\LGLB$, which is an algorithm introduced in \cref{sec:certificate} and runs in polynomial time, is a residual splitter at this threshold: after deleting any $k<n$ disjoint bundles, each worth strictly less than $\widehat t(v,n)$, it returns $n-k$ residual bundles each worth at least $\widehat t(v,n)$.
\end{restatable}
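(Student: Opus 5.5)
The plan is to define $\widehat t(v,n)$ as the largest threshold for which a simple certificate condition holds. The condition will depend only on the sorted values $v(g_1)\ge\dots\ge v(g_m)$ and on prefix sums. I will then prove three things in order: (i) $\LGLB$ is a residual splitter at every threshold satisfying the certificate, which gives $\widehat t\le\RMMS_n(v;M)$; (ii) every $t\le\beta_n\APS_{1/n}(v;M)$ satisfies the certificate, which gives $\beta_n\APS\le\widehat t$; (iii) the largest such $t$ can be found in $O((m+1)^2)$ time. The inequality $\RMMS_n\le\MMS_n$ is already noted in the introduction (take $k=0$). The first claim of the theorem, $\beta_n\APS\le\RMMS_n$, then follows by chaining (ii) and (i).

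For (i), I intend $\LGLB$ to be the standard three-phase large-good procedure at threshold $t$, applied to the residual instance after removing $D_1,\dots,D_k$. First, every good of value at least $t$ becomes a singleton bundle. Second, goods of value in $[t/2,t)$ are paired largest with smallest while pairs reach $t$. Third, the remaining goods, all below $t/2$ (plus leftover medium goods), are bag-filled, closing each bag as soon as it reaches $t$, so each closed bag is worth less than $3t/2$.

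The accounting works by charging. Each removed $D_j$ is worth $<t$, so it destroys less than one bundle's worth of value. Each bag built in phase three overshoots by less than $t/2$. A removed bundle can contain at most one good of value $\ge t$, or else it would be worth $\ge t$. So large goods removed by the adversary and large singletons created by the algorithm can be matched, and the adversary cannot "steal" more than one large good per deletion. The certificate is chosen exactly as the inequality that makes this accounting close. It requires that for every number $j$ of large goods removed, the total value of the goods outside the top $j$ (excluding the singletons) be at least $(n-j)\,t+$(the waste bound). In the style of Feige's $\frac23\mu$ argument, this is a family of linear inequalities in $t$ indexed by $j$ and by the number of medium goods.

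For (ii), I argue by contraposition, following the BEF21 approach. Suppose the certificate fails at $t$, i.e.\ some inequality indexed by $j$ (and a medium count) is violated. Build a price vector from it: price $1/n$ on each of the $j$ large goods, and prices proportional to value (suitably inflated for medium goods) on the rest, normalized to total $1$. Under these prices, any bundle of price at most $1/n$ either consists of one large good or has value less than $t/\beta_n$. The factor $\beta_n=\frac{2n}{3n-1}$ arises from balancing the waste $t/2$ per bag against $n$ bags. Hence $\APS<t/\beta_n$. I expect this step to be the main obstacle: the certificate must be weak enough to be implied by $\beta_n\APS$, yet strong enough for the residual accounting in (i) against adversarial deletions. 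The case where deletions break up medium pairs is where the two sides are most likely to mismatch. If the pure inequality family does not suffice, I would first try adding the medium-pair count as a second index and handling it by an exchange argument.

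For (iii), each certificate inequality is linear in $t$ once we fix which goods are $\ge t$ and which are $\ge t/2$. These memberships change only at the $O(m)$ breakpoints $v(g_i)$ and $2v(g_i)$. On each of the $O(m)$ resulting intervals there are $O(m)$ inequalities indexed by $j$. Each can be evaluated in $O(1)$ with prefix sums, and the maximal feasible $t$ on the interval is a min of linear bounds. Taking the best interval gives $O((m+1)^2)$ overall, and $\widehat t$ is rational because all data are rational.
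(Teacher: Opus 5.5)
Your architecture matches the paper's: a certificate implies a residual splitter, which gives the upper bound by $\RMMS_n$; $\beta_n\APS$ satisfies the certificate; and a sweep over the breakpoints $v(g)$ and $2v(g)$ finds the largest certified threshold. But the certificate is never written down, and your description of it targets the wrong goods.

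A deleted bundle worth less than $t$ contains \emph{no} good of value at least $t$, not ``at most one'', so high goods are never stolen. What the adversary can do is remove one good of value in $[t/2,t)$ per deletion. This breaks medium pairs and leaves the largest medium goods unpaired. Total-value accounting with a waste of $t/2$ per bag cannot control this. Take $n=3$, five goods of value $0.8t$ and one of value $0.1t$. They satisfy $v(M_t\cup S_t)=4.1t\ge\frac{3p-1}{2}t$ with $p=n-h=3$. Yet by pigeonhole every $3$-partition has a bundle with at most one medium good, hence worth at most $0.9t$.

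The paper therefore adds a second test: $v(S_t)+T_a\ge\frac{3a-1}{2}t$ with $a=2p-\ell$ whenever $p<\ell<2p$, where $T_a$ is the sum of the $a$ largest medium goods. The residual proof (\Cref{thm:apsx-split}) closes because $a=u+2k-d\ge u+d$. Here $u$ is the number of unpaired residual bundles and $d\le k$ the number of deleted medium goods, so every unpaired medium good lies among the $a$ largest.

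Also, the theorem is about the paper's threshold-free $\LGLB$: goods in nonincreasing order, each to a currently least-valued bundle. An analysis of your three-phase bag-filling routine would not prove the final sentence of the statement.

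Step (ii) is where the proposal actually breaks. As sketched, the inference is invalid. Pricing each large good at exactly $1/n$ makes every large singleton affordable, and its value can be far above $t/\beta_n=c_nt$. So ``either one large good or value below $t/\beta_n$'' does not give $\APS_{1/n}(v;M)<c_nt$; those prices would have to be strictly above $1/n$. For the medium-good failure, which is the real difficulty, you give no price construction at all.

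The paper never constructs prices. It works with the fractional reference-family characterization (\Cref{lem:aps-fractional}). Write $\tau=\APS_{1/n}(v;M)$ and $t=\beta_n\tau$.
\begin{itemize}
\item Reference bundles meeting $H_t$ carry total mass at most $h$. So mass $p$ survives inside $M_t\cup S_t$, and $p\tau\ge pc_pt$ gives the first test.
\item For the second test, project the references onto $M_t$. Pay each deficit $(\tau-v(B_j\cap M_t))_+$ from $v(B_j\cap S_t)$; the coefficient-weighted total is at most $v(S_t)$.
\item Then apply the fractional singleton-counting lemma (\Cref{lem:fractional-count}). After saturating every good's incidence and removing mass from the empty bundle, singleton references carry mass at least $2p-\ell$. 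This gives $v(S_t)+T_a\ge a\tau\ge ac_nt\ge ac_at=\frac{3a-1}{2}t$.
\end{itemize}
This counting lemma, together with the second test it certifies, is the missing idea in your plan.
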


    $\widehat t(v,n)$ is an upper bound for $\beta_n\APS_{1/n}(v;M)$. We call it residual APS floor since in our algorithm, we have a threshold $t_i$ for every agent that gradually decrease, 
    and we try to give each agent $i$ value $t_i$, 
    and we make sure that $t_i$ never gets lower than $\widehat t(v_i,n)$.

An $n$-partition of $M$, together with a distinguished bundle $B$ that is EFX-feasible for agent $i$ in that partition, is an \emph{MXS upper witness} for agent $i$. 
By definition of MXS,  if in an allocation $\A$ we have $v_i(A_i)\geq v_i(B)$, then
such a witness certifies that $\xi_i\le v_i(A_i)$. This is particularly useful because computing the MXS value $\xi_i$ itself is weakly NP-hard~\cite{CGRSV22}: an MXS upper witness allows the agent to certify MXS fairness without explicitly computing $\xi_i$.

\begin{restatable}{theorem}{mainthtwo}
\label{thm:main-poly}
[Polynomial time with near-exact MXS]
For nonnegative rational additive valuations and rational $0<\varepsilon<1$, 
there exists an algorithm that computes a complete EFL allocation satisfying
\[
 v_i(A_i)\ge\beta_n\tau_i=\frac{2n}{3n-1}\tau_i \quad \text{and}
 \qquad v_i(A_i)\ge(1-\varepsilon)\xi_i
 \qquad(i\in[n]).
\]
Its running time is polynomial in $n,m,\ell,1/\varepsilon$. It returns for every agent $i$ an \emph{MXS upper witness}
of value $b_i\ge\xi_i$, together with the guarantee $v_i(A_i)\ge(1-\epsilon)b_i$.
The APS guarantee has no multiplicative loss depending on $\varepsilon$. 
\end{restatable}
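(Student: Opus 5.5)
We will run a threshold-driven envy-cycle style algorithm, in the spirit of Feige's RMMS+EFL procedure, but with per-agent thresholds that are computable and decrease only in controlled geometric steps. Each agent $i$ carries a threshold $t_i$, initialized to an efficiently computable upper bound on her share; we would start from $\MMS$-type bounds, e.g.\ $v_i(M)/n$. We maintain the invariant $t_i\ge\widehat t(v_i,n)$, where $\widehat t$ is the residual APS floor of \cref{theorem:2/3aps bounded}. Since $\widehat t(v_i,n)\ge\beta_n\tau_i$, every agent whose final value is at least $t_i$ receives at least $\beta_n\tau_i$. The APS guarantee then comes with no $\varepsilon$-loss, because the floor is exact.

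The algorithm proceeds in rounds. In each round, among the unallocated goods, every unsatisfied agent proposes a minimal bundle worth at least $t_i$ to her. Bundles are built greedily: first try a single large good, otherwise add goods in decreasing order and trim to minimality, so that every good in the bundle is "needed". Among the proposed bundles we pick one, preferring those that nobody else values at or above their own threshold, and give it to its proposer. Then every other agent either values it below her own threshold, or we reduce the bundle in her favour. Minimality of allocated bundles gives EFL toward them: any agent $j$ who envies $A_i$ by more than one good would have values violating minimality or her own threshold. When at some point the residual goods cannot yield a bundle of value $t_i$ for a remaining agent $i$, we lower $t_i$ by a factor $(1-\varepsilon)$, but never below $\widehat t(v_i,n)$.

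\cref{theorem:2/3aps bounded} guarantees that this never gets stuck at the floor. All bundles removed so far have value strictly less than $t_i$ for $i$, because otherwise $i$ would have been served, or would have been given priority. If $t_i=\widehat t(v_i,n)$, then $\LGLB$ run on the residual instance returns $n-k$ bundles each worth at least $t_i$, so a feasible proposal exists. Leftover goods at the end are allocated by an envy-cycle/EFL completion procedure in the style of \cite{BBMN18}, which preserves EFL and only increases values. Running time follows because each threshold decreases geometrically from at most $v_i(M)$ to $\widehat t$. The ratio of these is bounded by a polynomial in $n$, e.g.\ $v_i(M)/n\le$ a constant times $\tau_i$ after the standard reduction that removes huge goods. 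So there are $O(\log n/\varepsilon)$ decreases per agent, each round costs polynomial time, and the total is polynomial in $n,m,\ell,1/\varepsilon$.

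The MXS part is where we expect the main difficulty. When agent $i$'s threshold is lowered from $t$ to $(1-\varepsilon)t$, we must exhibit an MXS upper witness of value at most $t$: an $n$-partition of $M$ containing an EFX-feasible bundle $B$ for $i$ with $v_i(B)\le t$. The plan is to build it from the state at the moment of failure. The already-allocated bundles are each worth less than $t$ to $i$, and each is minimal, so their EFX-values are at most their values. The residual goods cannot give $i$ a bundle of value $t$ in the current structure, so we split them EFX-style into the remaining $n-k$ parts, using a $t$-capped greedy for $i$. We then argue that some part of value at most $t$ dominates every part's EFX-value. We would then certify $v_i(A_i)\ge(1-\varepsilon)t\ge(1-\varepsilon)b_i$. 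Making this witness valid simultaneously with the minimality and priority rules, especially when the failure is caused by small goods, is the step we expect to require the most careful case analysis.
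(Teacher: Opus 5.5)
Your outline has the right global shape: a fixed floor $\widehat t(v_i,n)$ that is never scaled, geometrically contracting thresholds, and a witness built at each failure. The gap is that your allocation step does not maintain the invariant everything else rests on. That invariant is that every unassigned agent values every assigned bundle \emph{strictly} below her own threshold. \Cref{thm:apsx-split} applies only under this strict hypothesis, and your witness (value at most $t$) needs it too.

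Your justification (``otherwise $i$ would have been served, or given priority'') fails because proposals are individual, overlapping bundles. Suppose two unassigned agents have identical valuations and thresholds. Then every bundle minimal for one is desired by both, no bundle is ``undesired by everyone else,'' and shrinking cannot help. Whoever receives it, the other now values an assigned bundle at or above her threshold.

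The missing idea is the paper's divider step. An unassigned agent splits the free pool into exactly $|P|$ bundles with $\LGLB$. Each bundle is trimmed to be inclusion-minimal among bundles desired by \emph{some} unassigned agent, so every proper subset is below \emph{every} unassigned agent's threshold. \Cref{lem:matching}, applied to the square bipartite graph without isolated bundles, then gives a nonempty matching in which no unmatched agent is adjacent to a matched bundle. This preserves rejection and also gives EFX among agents admitted together.

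Relatedly, you lower $t_i$ in mid-run. Already-allocated bundles may then be worth at least $(1-\varepsilon)t_i$ to $i$. Minimality with respect to the old threshold only gives $v_i(B-g)<t_i$, so both rejection and EFL toward those bundles can fail. The paper instead restarts from the empty allocation after every update.

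Several further pieces are missing.

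\emph{Envy from agents already assigned.} Minimality says nothing about agents who were already assigned when a bundle is formed, and such an agent may EFL-envy it. The paper handles this with the repair of \Cref{alg:repair}: a proper subset of the envied candidate goes to an assigned agent, whose old bundle is released. It bounds repairs per epoch by $O(nm^4)$ through the fixed breakpoints $r\,v_i(g)+s\,v_i(h)$ (\Cref{lem:breakpoints}). Without such a bound the number of rounds between threshold changes is only pseudopolynomial, and your running-time count covers only threshold decreases.

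\emph{Counting threshold decreases.} The claimed $\poly(n)$ ratio between $v_i(M)/n$ and $\tau_i$ is false; one huge good beside $n$ unit goods is a counterexample. A bit-length argument would still bound the count when $\widehat t(v_i,n)>0$. But when $\widehat t(v_i,n)=0$, equivalently $\tau_i=0$, the floor gives no stopping point and the decreases need not terminate. The paper instead starts from $b_i^{(0)}=\min_j v_i(C^{(0)}_j)$ for $\LGLB(v_i,M,n)$. This satisfies $\beta_n\tau_i\le b_i^{(0)}\le\mu_i\le\tau_i$ and vanishes iff $\tau_i=0$, in which case the agent gets threshold $0$. This yields fewer than $1/(2\varepsilon)$ updates per agent.

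\emph{The MXS witness.} The step you flag as hardest becomes immediate once failure means ``the divider's residual $\LGLB$ partition has minimum $L<t_d$.'' By \Cref{lem:LGLB-score}, every $\LGLB$ bundle has EFX score at most $L$. So in the $n$-partition formed by the assigned bundles and the residual bundles, a bundle of value $b=\max\bigl(\{L\}\cup\{v_d(A_j):j\in H\}\bigr)$ is EFX-feasible. Rejection gives $b<t_d$, and \Cref{thm:apsx-split} gives $b\ge\widehat t(v_d,n)$. The new certificate is $b$ itself, not $(1-\varepsilon)t_d$.

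\emph{Completion.} Completing the final partial allocation also needs the singleton-domination stage of \Cref{lem:completion}; plain source additions can break EFL.
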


\begin{restatable}{theorem}{mainththree}
\label{thm:main-exact}
[Exact MXS in uniform pseudopolynomial time]
For nonnegative integer additive valuations, there exists an algorithm that returns a complete EFL allocation satisfying $v_i(A_i)\ge\max\{\xi_i,\beta_n\tau_i\} \qquad(i\in[n])$ in $(U+1)\poly(n,m,\ell)$ time. It returns an MXS upper witness for every agent $i$ with a distinguished bundle $B_i$ such that $v_i(A_i) \geq v_i(B_i)$. 
In particular, it computes $\beta_n$-APS, exact MXS, and exact EFL simultaneously for arbitrary $n$ in pseudopolynomial time.
\end{restatable}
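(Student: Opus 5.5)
We prove \cref{thm:main-exact} by running the same threshold-lowering framework as in \cref{thm:main-poly}, but with exact integer thresholds in place of the $(1-\varepsilon)$-geometric rounding. The algorithm keeps, for every agent $i$, a threshold $t_i$ together with an MXS upper witness $(\mathbf{B}^i, B_i)$ satisfying $v_i(B_i)\le t_i$. Initially $t_i=U_i$, and the witness is the partition that puts all of $M$ into one bundle and leaves the others empty; the empty bundle is EFX-feasible there, so this is a valid witness. The floor $\widehat t(v_i,n)$ from \cref{theorem:2/3aps bounded} is computed in $O((m+1)^2)$ time. The running invariant is
\[
\max\{\xi_i,\ \beta_n\tau_i\}\le \max\{v_i(B_i),\ \widehat t(v_i,n)\}\le t_i .
\]

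Each epoch proceeds as follows. We build a partial allocation by repeated bag-filling or envy-cycle steps, each run under the EFL repair subroutine $\Repair$. In this process, an agent $i$ accepts a bundle only if its value is at least $t_i$. If the epoch assigns everyone, we finish with $\Complete$. This hands out the remaining goods while keeping EFL, and since values never decrease, it keeps $v_i(A_i)\ge t_i$.

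If the epoch stalls, some agent $i$ is left unassigned, and every remaining good is cheap for her in the sense that the leftover goods cannot form a bundle worth $t_i$ to her. At that point we lower $t_i$, either to exactly $t_i-1$ or to the value of a newly found witness bundle, whichever is larger. The key lemma is that a stall produces an $n$-partition in which some bundle of value below $t_i$ is EFX-feasible for $i$. We form this partition from the bundles already assigned (each worth less than $t_i$ to $i$) together with the leftover goods, split greedily into the unassigned slots. EFL of the current partial allocation, plus the fact that every good in the remainder is worth less to $i$ than her candidate bundle, gives the EFX comparisons. This certifies $\xi_i\le v_i(B_i)<t_i$, so the decrease is legitimate.

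Lowering can never go below the floor. By \cref{theorem:2/3aps bounded}, $\LGLB$ is a residual splitter at $\widehat t(v_i,n)$: once $t_i=\widehat t(v_i,n)$, whatever $k<n$ bundles the others hold, each worth less than $t_i$ to $i$, the residue splits into $n-k$ bundles each worth at least $t_i$, so $i$ is never the stalled agent. Hence every agent ends with $v_i(A_i)\ge t_i\ge\max\{\xi_i,\beta_n\tau_i\}$, and the final witness satisfies $v_i(A_i)\ge v_i(B_i)$.

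For the running time, valuations are integers, so each stall lowers $\sum_i t_i$ by at least $1$. The number of epochs is therefore at most $\sum_i (U_i+1)$. If we restrict to the agent whose threshold changes, charging each agent's thresholds separately and noting that thresholds only take values in $\{0,\dots,U\}$, this becomes $O(n(U+1))$ epochs. Each epoch runs in $\poly(n,m,\ell)$ time, since the EFL repair and envy-cycle steps are polynomial. This gives the claimed $(U+1)\poly(n,m,\ell)$ bound, uniform in $n$.

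The main obstacle is the stall lemma: constructing an exact MXS witness with value strictly below $t_i$ at every stall, while the EFL repairs made during earlier epochs have reshuffled bundles. We would first prove that, after $\Repair$, every assigned bundle is either a singleton or of value less than $t_i$ for each still-unassigned $i$. This is the property needed for the leftover-plus-assigned partition to be EFX-feasible at the minimal candidate bundle.
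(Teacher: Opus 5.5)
There is a genuine gap in the epoch itself, which is where the work lies. Your outer loop is right and matches the paper's: integer thresholds lowered only on failure, a witness made of the assigned bundles plus a greedy split of the leftovers, the fixed floor $\widehat t(v_i,n)$ at which failure is impossible, and a $\sum_i t_i$ potential.

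Everything you do at a stall rests on the strict rejection invariant: every still-unassigned $i$ values every assigned bundle at strictly less than $t_i$. That invariant is what puts the witness value $b=\max\{L,v_i(D_1),\dots,v_i(D_k)\}$ strictly below $t_i$. It is also the hypothesis \Cref{thm:apsx-split} needs to exclude stalls at the floor. ``Bag-filling or envy-cycle steps'' do not maintain it: if a minimal bag is desired by several unassigned agents, giving it to one leaves the others valuing an assigned bundle at or above their thresholds.

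Your stall condition $v_i(F)<t_i$ is also the wrong one. When $v_i(F)\ge t_i$ for every unassigned $i$, nothing in your procedure guarantees that some agent can be admitted while rejection is preserved. The paper's engine supplies exactly this:
\begin{itemize}
\item An unassigned divider $d$ splits $F$ by $\LGLB$ into $|P|$ bundles, and \emph{failure} means the minimum $L$ of this split is below $t_d$.
\item Otherwise the $|P|$ bundles are trimmed to inclusion-minimal desired candidates. Every proper subset is then rejected by all of $P$, which is why a repair preserves rejection (\Cref{lem:repair}).
\item If no repair applies, the square agent--candidate graph has a nonempty matching with no unmatched agent adjacent to a matched candidate (\Cref{lem:matching}).
\end{itemize}
The witness at failure is then EFX-feasible by \Cref{lem:LGLB-score} together with the trivial bound $\vefx_i(D_j)\le v_i(D_j)$. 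EFL among assigned agents plays no role in it, contrary to what you suggest.

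The property you single out as the main obstacle is too weak. If an assigned bundle may be ``a singleton or worth less than $t_i$,'' then an assigned singleton worth at least $t_i$ to the stalled agent forces $b\ge t_i$ and kills the strict decrease. In the paper singletons are rejected as well, because the singleton branch of $\Repair$ returns a proper subset of a minimal candidate with at least two goods.

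Repairs from earlier epochs are irrelevant in the paper, since every failure triggers a restart from the empty allocation. Lowering $t_i$ in place would instead break rejection whenever the new threshold equals some $v_i(A_j)$.

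Two smaller points:
\begin{itemize}
\item In $(M,\varnothing,\dots,\varnothing)$ the empty bundle is not EFX-feasible unless $\vefx_i(M)=0$. Your initial witness should be $M$ itself, or, as in the paper, the minimum bundle of $\LGLB(v_i,M,n)$. By \Cref{cor:initial-range} the latter already gives $b_i\ge\max\{\xi_i,t_i^*\}$ and at most $U$ failures in total.
\item The per-epoch round count must come from \Cref{lem:breakpoints}. A value-based potential would cost an extra factor of $U$.
\end{itemize}
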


\begin{restatable}{theorem}{mainthfour}
\label{thm:main-rmms}
[Exact RMMS for a fixed number of agents]
For nonnegative integer additive valuations, there exists an algorithm that returns a complete EFL allocation satisfying $v_i(A_i)\ge\rho_i\qquad(i\in[n])$ in $\poly(n,m,U,\ell)(U+1)^{n-1}$ time. This is pseudo-polynomial for every fixed $n$.
\end{restatable}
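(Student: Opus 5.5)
We prove \cref{thm:main-rmms} by combining the residual-splitter viewpoint of RMMS with an envy-cycle style procedure that preserves EFL, and by making each agent's threshold exactly $\rho_i$. The plan has three layers.

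\emph{Computing $\rho_i$ exactly.} For integer valuations, $\rho_i$ is an integer in $[0,U_i]$. Whether a candidate $t$ is residually feasible depends only on how the removed bundles affect the residual instance. Up to what matters for feasibility, the adversary removes $k<n$ disjoint sets, each of value at most $t-1$. We therefore decide feasibility of $t$ by a pseudo-polynomial dynamic program over the goods. The state records, for up to $n-1$ removed bundles, their current values (each in $\{0,\ldots,t-1\}$); by symmetry these can be kept sorted. Together with this, we record the partial progress of a greedy or DP packing of the residual into $n-k$ bundles of value at least $t$. This gives $(U+1)^{n-1}$ states times a polynomial factor.

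The delicate point is the quantifier alternation: we need ``for every removal, there exists a partition.'' I would handle it by first computing, for each subset-value profile, whether the residual can be split. Residual $n-k$ splitting with threshold $t$ is itself a bin-covering feasibility question, again solvable by DP in $(U+1)^{n-k-1}\poly$. Then we take the adversarial check over removal profiles. A binary search over $t$ costs only an extra $\log U$ factor. I expect this to be the main obstacle. A naive encoding needs states for both the removed bundles and the residual bundles, which would give $(U+1)^{2n}$. To reach $(U+1)^{n-1}$ one must argue that only the multiset of removed-bundle values matters, and that residual feasibility can be tested by a certificate (for instance the $\LGLB$-type splitter from \cref{theorem:2/3aps bounded}, generalized to the exact threshold) without enumerating residual partitions.

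\emph{Allocation phase.} With thresholds $t_i=\rho_i$ fixed, we run the sequential scheme underlying \cref{thm:main-exact}, but with exact RMMS thresholds. We maintain a partial allocation that is EFL, in which every allocated agent has value at least $t_i$ and every allocated bundle is worth less than $t_j$ to each unallocated agent $j$, or else ends up minimal in the standard sense. While unallocated agents remain, we build a bundle by adding goods in order until some unallocated agent values it at least her threshold. We then trim it to a minimal bundle, via a bag-filling or cutting argument, so that no remaining agent strictly prefers it by more than EFL permits, and give it to such an agent.

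The definition of $\rho_j$ guarantees progress. At any time, the bundles already removed are each worth less than $\rho_j$ to an unallocated agent $j$, so the residual goods still admit a split into bundles worth at least $\rho_j$ for her. Hence some candidate bundle always reaches the threshold of some remaining agent. Minimality under removal of a good yields EFL toward already allocated agents. Finally, leftover goods are distributed by envy-cycle elimination on the EFL graph, as in \cite{BBMN18}. This keeps EFL and never decreases any agent's value, so every agent keeps at least $\rho_i$.

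Each phase is polynomial apart from the threshold computation. This gives total time $\poly(n,m,U,\ell)(U+1)^{n-1}$.
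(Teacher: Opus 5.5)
Your plan rests on computing each $\rho_i$ exactly in $\poly(n,m,U)\,(U+1)^{n-1}$ time. That step, which you yourself flag as the main obstacle, does not go through.

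Residual feasibility of $t$ is a $\forall\exists$ statement: for every admissible removal there must exist a good residual partition. Whether the residual can be split depends on \emph{which} goods remain, not on the multiset of removed-bundle values. Removing a single good $a$, or two goods $b,c$ with $v(b)+v(c)=v(a)$, gives the same value profile but different residual sets, hence different splitting behavior. So a table indexed by removal value profiles is not well defined. A single left-to-right DP that places each good into a removed bundle or a residual bundle computes an existential (or online-game) quantity, not the alternation.

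Falling back on $\LGLB$ does not help either. It is a residual splitter only at certified thresholds (\Cref{thm:apsx-split}), which can lie strictly below RMMS. RMMS evaluation is weakly NP-hard; that does not exclude a pseudo-polynomial algorithm for fixed $n$, but your sketch does not supply one.

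The paper avoids this entirely and never computes $\rho_i$. It starts from $t_i=\lfloor U_i/n\rfloor\ge\rho_i$ and uses an exact MMS partition of the residual (\Cref{lem:dp}) as the divider. By Observation~\ref{lem:inheritance}, if every assigned bundle is worth less than $\rho_d$, the residual MMS is at least $\rho_d$. Hence a failed division certifies $\rho_d\le\max\{L,v_d(A_j)\}<t_d$ (\Cref{lem:rmms-update}). Integrality bounds the restarts by $U$, and the $(U+1)^{n-1}$ factor comes from the MMS dynamic program, not from evaluating RMMS.

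Your allocation phase also has gaps that the paper's engine is built to close.
\begin{itemize}
\item Handing a single minimal bag to one agent can leave other unallocated agents valuing that bag at least their thresholds. This destroys the very invariant your progress argument uses. The paper instead trims all $|P|$ bundles of the divider's partition and allocates via the rejection-preserving matching of \Cref{lem:matching}.
\item Minimality with respect to the \emph{unallocated} agents' thresholds gives no control over agents allocated earlier, who may EFL-envy the new bundle. Trimming further to protect them can push the bundle below every unallocated threshold. This is exactly where the paper's repair step is needed: give an improving proper subset to the envious assigned agent and release her old bundle, with rounds bounded by the integer potential $\Psi\le W$.
\item Plain envy-cycle completion from an arbitrary partial EFL allocation need not preserve EFL. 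The singleton-domination stage of \Cref{lem:completion} must come first.
\end{itemize}
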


\begin{table}[t]
\centering\small
\begin{tabularx}{\linewidth}{@{}>{\raggedright\arraybackslash}X>{\raggedright\arraybackslash}X@{}}
\toprule
Guarantees in the same complete allocation & Running time and qualification\\
\midrule
$\beta_n$-APS, $(1-\varepsilon)$-MXS, EFL & $\poly(n,m,\ell,1/\varepsilon)$; rational input\\[4pt]
$\beta_n$-APS, exact MXS, EFL & $(U+1)\poly(n,m,\ell)$; integer input and arbitrary $n$\\[4pt]
Exact RMMS, EFL & $\poly(n,m,U,\ell)(U+1)^{n-1}$; pseudopolynomial for fixed $n$\\
\bottomrule
\end{tabularx}
\caption{All envy guarantees are exact. The first algorithm relaxes only MXS; the third has an explicit value-dependent exponent involving the number of agents.}\label{tab:results}
\end{table}

APS dominates MMS, so any approximate for APS in our statements imply the same MMS factors. 

    Computing minimum EFX share \cite{CGRSV22}, and RMMS value \cite{AFMMR26} are weakly NP-hard.
    Next, we provide a hardness result for computing any value between these values. This shows that our approach cannot 
    be generalized to compute an MXS+EFL algorithm in polynomial time.
    
\begin{restatable}{theorem}{mxsrmmsenhard}
[Hardness of the MXS--RMMS certificate problem]
\label{thm:mxsh:main}
    For every fixed $n\geq3$, it is weakly NP-hard, given a binary-encoded positive
    integer additive valuation $v$ on $M$, to output a rational number $s$
    satisfying
\begin{equation}
  \MXS_n(v;M)\leq s\leq\RMMS_n(v;M).
  \label{eq:mxsh:sandwich}
\end{equation}
The hardness holds even under the promise
$\MXS_n(v;M)=\RMMS_n(v;M)$.

Hardness also holds for nondegenerate positive integer valuations, where
distinct subsets have distinct values. This latter statement does not
require the promise that the two shares coincide.
\end{restatable}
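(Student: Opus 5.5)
We reduce from \textsc{Partition}, but we build a family of instances in which MXS and RMMS coincide, so that they have one common value. We then show that this value depends on whether a balanced split exists. Let $a_1,\dots,a_k$ be positive integers with $\sum_j a_j=2T$, and ask whether some subset sums to $T$. For fixed $n\ge3$, we build an additive valuation $v$ on a set $M$ with three parts. The first part is the scaled ``partition items'' of value $L a_j$, with $L$ a large multiplier. The second is $n-2$ ``big'' goods, each of value exactly $LT+\delta$ for a small gadget offset $\delta$. The third is a constant number of small ``filler'' goods of carefully chosen values, so that their sums are also distinguishable. The intended structure is that every optimal MMS, MXS or RMMS certificate places each big good alone in its own bundle. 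The remaining two bundles must then split the partition items, together with the filler, as evenly as possible.

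The key steps are as follows.
\begin{enumerate}[label=(\roman*)]
\item In the YES case, we compute the common value $s_{\mathrm{yes}}$ and show $\MXS_n=\RMMS_n=s_{\mathrm{yes}}$. For the upper bound on RMMS we use $\rho\le\mu$ and compute $\mu$ directly; the big goods force the structure described above. For the lower bound on MXS we show that every EFX-feasible bundle in every $n$-partition has value at least $s_{\mathrm{yes}}$. The argument is by case analysis on how the big goods are distributed. If two big goods share a bundle, EFX-feasibility of any other bundle forces it to be worth at least about $LT$. Otherwise the items are split among two or more bundles, and the EFX condition against the heaviest non-big bundle forces a near-balanced value.
\item In the NO case, every partition of the $a_j$ is unbalanced by at least $1$ in the original units, hence by at least $L$ after scaling. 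We show that both shares drop to the same smaller value $s_{\mathrm{no}}\le s_{\mathrm{yes}}-L/2$. For MXS we exhibit a witness partition: the most balanced split, with the filler arranged so that the lighter side is EFX-feasible. The RMMS lower bound then follows from $\xi\le\rho$. For the RMMS upper bound we exhibit a removal: delete $n-2$ bundles, each a single big good minus nothing, of value just below threshold. To make this removal legal the big goods are tuned to be worth slightly less than the threshold, and the remaining two bundles cannot both reach $s_{\mathrm{yes}}$.
\item With a gap of $L/2$ between the two cases, any $s$ in $[\MXS,\RMMS]$ decides \textsc{Partition} by comparing $s$ against the midpoint. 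All numbers are polynomial in the bit-length, so the reduction is weak NP-hardness, and $n$ is fixed.
\item For the nondegenerate variant, we perturb each good's value by a distinct power of two times a tiny unit, $v'(g)=Nv(g)+2^{\mathrm{idx}(g)}$, with $N>2^{|M|+1}$. This makes all subset sums distinct while moving every share by less than $N/2$. The gap $NL/2$ survives, so any $s$ sandwiched between the perturbed MXS and RMMS still separates the cases, even though the two shares may no longer coincide.
\end{enumerate}

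The main obstacle is the exact case analysis showing that MXS and RMMS coincide in both cases, especially the MXS lower bound in the YES case. The MXS definition quantifies over all partitions and over any EFX-feasible bundle, and degenerate partitions must be excluded, for example those that isolate one partition item and lump the rest with a big good. If this cannot be made clean for the NO case, the fallback is to drop the promise for the first claim and argue only that both shares lie on the correct side of the midpoint in each case. That suffices for the sandwich hardness, and it is also exactly what the nondegenerate statement needs.
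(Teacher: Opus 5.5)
\paragraph{Verdict: genuine gap in step (i).} Step~(i) carries the whole reduction, and it fails. The claim that ``the EFX condition against the heaviest non-big bundle forces a near-balanced value'' is false. A bundle $X$ is EFX-feasible against $Y$ as soon as $v(X)\ge v(Y)-\min_{g\in Y}v(g)$. With items of value $La_j$, this slack is a full multiple of $L$, which swamps your $L/2$ gap.

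Here is a concrete counterexample. Take the YES instance $a=(1,1,2,2)$, $T=3$, so the items are $L,L,2L,2L$. Put each big good in its own bundle (EFX score $0$). Let $X=\{L,L\}$ together with all the filler goods, and $Y=\{2L,2L\}$. Then $\vefx_v(Y)=2L\le v(X)$, so $X$ is EFX-feasible and $\MXS_n\le 2L+(\text{filler})$. This is about $L$ below $s_{\mathrm{yes}}\approx LT=3L$, so it is on the NO side of any midpoint you could use.

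In general, MXS in your construction is governed by unbalanced splits whose heavy side has a large smallest item. That has nothing to do with whether a perfect split exists. Your fallback does not rescue this: the YES-side lower bound on MXS is exactly the bound that fails, while the NO-side bound $\RMMS\le\MMS$ is the easy one. Two further points. Your NO-case removal argument is incoherent: deleting a big good is only admissible for thresholds above $LT+\delta>s_{\mathrm{no}}$, though $\RMMS\le\MMS$ makes it unnecessary. The nondegenerate step also needs an argument that perturbed EFX certificates remain certificates for the original valuation.

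\paragraph{The missing idea: reverse the gap and control EFX scores exactly.} The paper uses three identical agents and the following goods:
\begin{itemize}
\item core goods $e_j$ of value $4(C+a_j)$;
\item $r+1$ core goods $d_j$ of value exactly $4C$;
\item a marker good $z$ of value $L+1$;
\item a unit good $u$ of value $1$.
\end{itemize}
All core goods have nearly equal value, and a value-$4C$ good always survives outside any core subset $S$ with $v(S)\le L+1$. Hence every relevant core bundle has the exact score $\vefx_v(R\setminus S)=2L-v(S)$. The marker forces $z$ to be a singleton in any cheap certificate. In every instance $\MMS_3=L+2$.

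A perfect split gives $S$ with $v(S)=L$. This yields an explicit cheap EFX-feasible bundle $S\cup\{u\}$ of value $L+1$. It also yields the deletion $D=S\cup\{u\}$, whose complement has two-bundle MMS exactly $L+1$, so $\MXS_3=\RMMS_3=L+1$. In a NO instance, the exact score formula shows that no bundle of value at most $L+1$ is EFX-feasible, so both shares equal $L+2$.

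Thus the YES case needs only explicit witnesses, which are upper bounds. The universal lower bound is needed only in the NO case, where the scores are tightly controlled. Your design instead requires a universal lower bound on MXS over all partitions in YES instances, which it cannot deliver. The remaining pieces of the paper's proof are:
\begin{itemize}
\item larger $n$ is handled by padding with huge singleton goods;
\item nondegeneracy uses a binary perturbation of total size below $1/16$, plus the observation that perturbed EEFX certificates are certificates for the integer valuation.
\end{itemize}
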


\subsection{Technical overview}
\label{sec:technical-overview}

The existence of complete allocations satisfying both RMMS and EFL is
already known: Feige~\cite[Corollary~8]{F25} proves this for monotone
valuations, and hence also for additive valuations. An algorithmic
implementation of that proof encounters two major computational tasks:
\begin{enumerate}
    \item Computing each agent's RMMS value, which determines whether
    that agent regards a bundle as sufficient.
    \item After removing $k<n$ bundles that an unassigned agent values
    below her RMMS, partitioning the remaining goods into $n-k$ bundles,
    each worth at least her RMMS. An exact MMS-optimal partition of the
    remaining goods---one that maximizes the minimum bundle value---is
    sufficient, although the proof only requires a partition meeting
    the RMMS threshold.
\end{enumerate}
We first prove that, for every nonnegative additive valuation, $\beta_n\APS_{1/n}(v;M)\le\RMMS_n(v;M), \qquad \beta_n=\frac{2n}{3n-1}.$ Consequently, the \emph{existence} of a complete allocation satisfying
$\beta_n$-APS and EFL follows immediately from Feige's result.
However, both RMMS evaluation and finding an RMMS-feasible partition
are weakly NP-hard for additive valuations~\cite[Proposition~21]{F25}.
Thus the two tasks above cannot, in general, be performed in polynomial
time unless $\mathrm{P}=\mathrm{NP}$. Our algorithmic challenge is to
obtain the APS--EFL guarantee without solving either task.

\paragraph{A computable threshold with a polynomial-time splitter.}
Rather than compute APS or RMMS, we compute a threshold
$\widehat t(v,n)$ satisfying $\beta_n\APS_{1/n}(v;M) \le \widehat t(v,n) \le \RMMS_n(v;M).$ Crucially, we also provide a polynomial-time \emph{residual splitter}
for this threshold: after any $k<n$ disjoint bundles of value strictly
below $\widehat t(v,n)$ have been removed, it partitions the remaining
goods into $n-k$ bundles, each worth at least $\widehat t(v,n)$.
Thus we replace both the exact share computation and the residual
maximin optimization.

The splitter uses $\LGLB$, which processes goods in nonincreasing value
order and assigns each good to a currently least-valued bundle.
The key is to identify thresholds for which this simple procedure
remains valid after the required deletions. In \Cref{sec:certificate},
we give two numerical tests using only singleton values, their counts,
and their sums. Every threshold passing these tests has $\LGLB$ as a
residual splitter. Using fractional APS reference families, we prove
that $\beta_n\APS_{1/n}(v;M)$ passes the tests. We then compute the
largest passing threshold by an exact polynomial-time event sweep,
without evaluating APS.

Note that computing the APS value is weakly NP-hard, even for
additive valuations and two agents~\cite{BEF21}, so it is crucial to define this value.

\paragraph{From the splitter to a polynomial-time APS--EFL algorithm.}
For each agent $i$, set the fixed threshold
$t_i^*=\widehat t(v_i,n)$. Our allocation procedure maintains that
assigned agents meet their thresholds and are EFL toward one another,
while every unassigned agent values every assigned bundle below her
own threshold. Therefore, when an unassigned agent divides the free
goods, the residual splitter always succeeds. Its bundles are trimmed
into minimal desired candidates; an EFL-preserving repair improves an
assigned agent, or a rejection-preserving matching assigns at least
one additional agent.

Efficient splitting alone does not bound the number of repairs.
We therefore use an EFL-specific repair and prove that each improvement
crosses one of only $O(m^4)$ fixed utility levels per agent
(Lemma~\ref{lem:breakpoints}). This gives polynomially many rounds. After all agents meet their thresholds, the
remaining goods can be allocated while preserving EFL and every
agent's achieved value (Lemma~\ref{lem:completion}). Since
$t_i^*\ge\beta_n\tau_i$, the resulting complete allocation satisfies
$\beta_n$-APS and EFL in polynomial time.

\paragraph{Adding an approximate MXS guarantee.}
The same splitter also provides MXS certificates. For arbitrary disjoint
bundles $D_1,\ldots,D_k$, let $L$ be the minimum value in the residual
$\LGLB$ partition and define $b=\max\{L,v(D_1),\ldots,v(D_k)\}.$ Our joint certificate theorem (\Cref{thm:joint-certificate}) shows that
\begin{equation}\label{eq:intro-certificate}
    \max\{\MXS_n(v;M),\widehat t(v,n)\}\le b,
    \qquad
    \beta_n\APS_{1/n}(v;M)\le\widehat t(v,n).
\end{equation}
In particular, $\LGLB$ is also a residual splitter at the MXS threshold:
if every deleted bundle is worth less than $\MXS_n(v;M)$, then
$L\ge\MXS_n(v;M)$. Moreover, the deleted bundles together with the residual partition
contain an explicitly identifiable EFX-feasible bundle of value $b$.
This witnesses the upper bound on MXS without computing MXS itself.

One might instead try to compute a fixed threshold between MXS and
RMMS and use the preceding approach without approximation. However,
we prove that finding any number in this interval is weakly NP-hard,
already for three agents (\Cref{thm:mxsh:main}). We avoid this obstacle
by maintaining an upper certificate $b_i\ge\max\{\xi_i,t_i^*\}$ that
is \emph{not} required to lie below RMMS. For a requested accuracy
$\varepsilon\in(0,1)$, set $t_i=\max\{t_i^*,(1-\epsilon)b_i\}.$ We run the allocation procedure at these thresholds. If a residual
division fails for agent $i$, the joint certificate gives a replacement $\max\{\xi_i,t_i^*\}\le b_i'<(1-\epsilon)b_i,$ and we restart with this smaller certificate. The strict contraction
follows because failure is impossible at the fixed threshold $t_i^*$.

Initially, $b_i$ is the minimum value in the full $\LGLB$ partition,
so $\beta_n\tau_i\le b_i\le\tau_i$. Hence a positive-APS agent can
cause only $O(1/\eta)$ contractions; a zero-APS agent starts with
$b_i=0$ and causes none. Each run between restarts takes polynomial
time, so the total running time is polynomial in the input length
and $1/\varepsilon$. The final allocation is complete and EFL, and
satisfies
\[
    v_i(A_i)\ge\beta_n\tau_i,
    \qquad
    v_i(A_i)\ge(1-\epsilon)b_i\ge(1-\varepsilon)\xi_i
    \qquad(i\in N).
\]
The approximation affects only the MXS guarantee: the fixed APS
threshold $t_i^*$ is never scaled down. For integer valuations,
omitting the multiplicative slack instead gives exact MXS in
pseudopolynomial time by counting strict integer decreases of the
certificates.

    \paragraph{Organization.}
    \Cref{sec:certificate} computes the fixed residual APS floors and proves the strengthened joint MXS--APS certificate. \Cref{sec:engine} gives the polynomial-work EFL allocation engine. \Cref{sec:aps-algorithms} combines these ingredients for the two APS--MXS algorithms and the single-epoch APS--EFL result. \Cref{sec:rmms} develops the RMMS reduction and the exact load dynamic program. Finally, \cref{app:mxsh:hardness} shows why our approach cannot be generalized for computing 
    an MXS+EFL allocation in polynomial time by proving that it is weakly NP hard to compute a number between MXS and RMMS share given a valuation function.

    \subsection{Further related work.}
    Our work lies at the intersection of envy-based and share-based fairness. We first review the literature on exact and approximate EFX, maximin-share guarantees,  Any Price Share, and then discuss fair division of chores.
    
    \emph{Exact EFX for goods.}
    The existence of exact EFX allocations depends strongly on the valuation domain. Recently, Akrami et al.~\cite{AMMSW26} showed that EFX allocations need not exist even for three agents with monotone valuation functions. Thus, while EFX fails for unrestricted monotone valuations, its existence for an arbitrary number of agents with additive valuations remains a major open problem. A number of positive results are known for more structured domains. EFX allocations exist for identical monotone valuations~\cite{PR20}, binary valuations~\cite{BSY23}, instances with at most two distinct monotone valuation functions~\cite{M24}, and additive instances with at most three distinct valuation types~\cite{PGNV25}. For three agents, EFX allocations exist under additive valuations~\cite{CGM24}; this result extends to nice cancelable valuations~\cite{BCFF21} and, more generally, to settings in which one agent has an MMS-feasible valuation while the remaining agents have monotone valuations~\cite{ACGMM25}.
    
    Another line of work studies graphical valuation domains. Christodoulou et al.~\cite{CFKS23} established EFX existence for graphical valuations, where each good is relevant to at most two agents and any pair of agents shares at most one relevant good. 
    Subsequent work extended these results to multigraph valuations \cite{ADKMR25,SS25}. Afshinmehr et al.~\cite{AAPM26} proved EFX existence for triangle-free multigraphs under monotone valuations and gave a pseudopolynomial-time algorithm. More recently, Afshinmehr et al.~\cite{AAMMS26} established EFX existence for general multigraph instances with cancelable valuations and provided a polynomial-time algorithm. 
    
    \emph{Approximate EFX.}
    Beyond exact EFX, a substantial literature studies multiplicative approximations. Plaut and Roughgarden~\cite{PR20} established a $1/2$-EFX guarantee for subadditive valuations, while Amanatidis et al.~\cite{AMN20} proved the existence of $1/\phi$-EFX allocations for additive valuations, where $\phi=(1+\sqrt{5})/2$. Stronger guarantees are possible under additional restrictions. Amanatidis et al.~\cite{ARS24} proved that $2/3$-EFX allocations exist for up to seven agents, as well as for several other restricted additive domains. For instances with at most four distinct additive valuation types, the same $2/3$ guarantee holds for an arbitrary number of agents~\cite{HMN25}. Filos-Ratsikas et al.~\cite{FKW26} subsequently extended the $2/3$ guarantee to eight agents.
    
    A related relaxation is envy-freeness up to any $k$ goods, denoted $\efkx$. The special case $k=2$ is $\effx$. Akrami et al.~\cite{ARS22} proved that $\effx$ allocations exist for any number of agents with restricted additive valuations, while Kaviani et al.~\cite{KSS24} established existence for $(\infty,1)$-bounded valuations. Ashuri et al.~\cite{AGS25} proved that $\effx$ allocations exist for four agents with cancelable valuations; for three agents, they obtain a polynomial-time algorithm, in contrast to the known pseudopolynomial algorithms for computing exact EFX allocations in the three-agent setting. In addition, Filos-Ratsikas et al.~\cite{FKW26} study approximate $\efkx$ and show that for every $k\ge2$, a $\frac{k+1}{k+2}$-approximate $\efkx$ allocation exists for additive valuation functions.

    \emph{Maximin-share guarantees.}
    For additive valuations, Kurokawa et al.~\cite{KPW18} established a $2/3$-MMS guarantee, followed by a sequence of improved approximation algorithms~\cite{AMNS17,GHS18,BK20,GMT19,GT20,AGST23}. Akrami and Garg~\cite{AG24} were  
    the first to improve \(3/4\) by a constant independent of the number of agents, obtaining $3/4+3/3836$, Heidari et al.~\cite{HKSS26} subsequently improved the general guarantee to $10/13$, and finally, Huang and Zhou \cite{HZ26} achieved $7/9$. On the impossibility side, the classical $39/40$ upper bound~\cite{FST21} was recently strengthened by Ezra and Garbuz~\cite{EG26}: for four agents, no guarantee above $20/21$ is possible.
    Gupta and Seetharaman~\cite{GS26} further study the complexity of deciding whether an exact MMS allocation exists.

    \emph{Any-Price Share.}
    The AnyPrice share (APS) was introduced by Babaioff, Ezra, and Feige~\cite{BEF21}, who proved its fractional characterization and gave a polynomial-time $\beta_n$-APS guarantee for additive valuations with equal entitlements. Ben Uziahu and Feige~\cite{BUF23} obtained a polynomial-time $1/3$-APS guarantee for submodular valuations with arbitrary entitlements. Kulkarni, Kulkarni, and Mehta~\cite{KKM23} studied APS for binary-marginal submodular and XOS valuations, while Feige and Huang~\cite{FH23} considered APS guarantees for chores. Feige and Grinberg~\cite{FG25} established APS guarantees for subadditive and XOS valuations, including $1/6$ for XOS valuations with arbitrary entitlements and $4/17$ with equal entitlements. Chen, Li, Luo, and Zhang~\cite{CLLZ26} obtained a polynomial-time $1/2$-APS guarantee for binary XOS valuations with arbitrary entitlements. Babaioff and Feige~\cite{BF25} further studied APS within a broader framework of feasible share notions under arbitrary entitlements. Our focus is complementary: we study simultaneous APS guarantees together with EFL and MXS.

    \emph{Fair allocation of chores.}
    A parallel literature studies indivisible chores, where agents have cost rather than utility functions. Zhou and Wu~\cite{ZW24} gave the first nontrivial approximation guarantees for EFX under additive costs, including an $O(n^2)$ approximation for a general number of agents. Garg, Murhekar, and Qin~\cite{GMQ25} substantially improved this to a constant-factor guarantee, proving that $4$-EFX allocations always exist for additive chores. For three agents, Christoforidis and Santorinaios~\cite{CS24} and Afshinmehr et al.~\cite{AADM24} independently studied stronger approximations; the latter prove that a $2$-EFX allocation exists and can be computed in polynomial time even for subadditive cost functions. Garg and Murhekar~\cite{GM26} subsequently extended the $2$-EFX existence guarantee to an arbitrary number of agents.
    
    Exact EFX behaves differently for chores than for goods. Christoforidis and Santorinaios~\cite{CS24} showed that EFX allocations may fail to exist for monotone costs, and Christoforidis~\cite{C26} strengthened this impossibility to subadditive and submodular costs. More recently, He and Tao~\cite{he2026} proved that for every $n\ge4$, there are tri-valued additive instances admitting no EFX allocation. Further nonexistence results were obtained for three-agent submodular instances~\cite{HKPV26} and for binary XOS and binary supermodular costs~\cite{LLTZ26}. For three additive agents, the general existence question remains open, although EFX is known to exist with at most eight chores~\cite{Zhang26}. Positive results are also known for restricted additive costs: Lin et al.~\cite{LWS26} and Akrami et al.~\cite{AACMS26} obtain strong simultaneous EFX, MMS, and efficiency guarantees, including polynomial-time approximation schemes for the MMS component.

\paragraph{AI acknowledgment.}
    The authors used ChatGPT Pro (OpenAI) as a research and writing assistant during the preparation of this paper. In particular, it was used to help explore and formalize the authors' ideas for extending the RMMS--EFL existence proof to an algorithmic, polynomial-time framework. All mathematical arguments, statements, and proofs in the paper were subsequently checked and verified, and polished by the authors, who take full responsibility for the content.
        
\section{Preliminaries}\label{sec:prelim}

    An instance of \emph{discrete fair division} is given by a tuple
    $\langle N, M, \{v_i\}_{i \in N} \rangle$, where
    $N = [n] = \{1,2,\dots,n\}$ denotes the set of agents,
    $M = [m]$ is a finite set of $m$ indivisible goods, and
    $\{v_i\}_{i \in N}$ is a profile of valuation functions.
    Each agent $i \in N$ has an associated additive valuation function
    $v_i : 2^M \rightarrow \mathbb{R}_{\ge 0}$. 

    A partial allocation
    $\mathbf{A}=\langle A_1,\ldots,A_n\rangle$
    consists of a set $H\subseteq N$ of \emph{assigned agents}, where each agent $i\in H$ is assigned a bundle $A_i$, and  bundles are pairwise disjoint. Agents in $N\setminus H$ are unassigned. The set of unallocated goods is the free pool
    $$
    F=M\setminus\bigcup_{i\in H} A_i.
    $$
    The allocation is \emph{complete} if $H=N$ and $F=\varnothing$.
    Given an allocation and $i\in H$, agent $i$'s value is $v_i(A_i)$.

    An indexed $k$-partition
    $\mathbf{X}=(X_1,\ldots,X_k)$ of a subset of goods $S$ is 
    an ordered set of disjoint subsets of goods such that $\cup_{i\in [k]} X_i = S$; empty occurrences are allowed and retain their indices.
    We write $S+g=S\cup{g}$ and $S-g=S\setminus{g}$.

    \paragraph{Fairness scores.}
    For a nonempty bundle $B$, and valuation function $v$  
    $$
     \vefx_v(B)=v(B)-\min_{g\in B}v(g),\qquad
     \vefone_v(B)=v(B)-\max_{g\in B}v(g),
    $$
    and set both scores to zero for the empty bundle, and we write $\vefx_i(B), \vefone_i(B)$ for $v=v_i$.
    For $|B|\ge2$, put    
    $$
     \vefl_i(B)=\min_{g\in B}\max\{v_i(g),v_i(B-g)\};
    $$
    set $\vefl_i(B)=0$ when $|B|\le1$, and write
    $\vefl_i(\mathbf{X})=\max_j\vefl_i(X_j)$.
    The allocation
    $\mathbf{A}=\langle A_1,\ldots,A_n\rangle$
    is EFX, EF1, or EFL when each assigned agent's value dominates, respectively, every other bundle's $\vefx_i$, $\vefone_i$, or $\vefl_i$ score.

    Given two bundles $S,T\subseteq M$ and an agent $i$, we say that agent $i$ \emph{envies} bundle $T$ with respect to bundle $S$ if
    $v_i(S)<v_i(T)$.
    We say that agent $i$ \efxenvies\ bundle $T$ with respect to bundle $S$ if
    $v_i(S)<\vefx_i(T)$. 
    Equivalently, there exists a good $g\in T$ such that
    $v_i(S)<v_i(T-g)$. Similarly, agent $i$ \eflenvies\ bundle $T$ with respect to bundle $S$ if
    $v_i(S)<\vefl_i(T)$.
    
    Given a partition $\mathbf{X}=(X_1,\ldots,X_k)$, a bundle $X_a$ is \emph{EFX-feasible} for agent $i$ in $\mathbf{X}$ if agent $i$ does not \efxenvy\ any bundle $X_j$ with respect to $X_a$; equivalently,
    
    $$
    v_i(X_a)\ge \max_{j\in[k]}\vefx_i(X_j).
    $$
    
    Likewise, $X_a$ is \emph{EFL-feasible} for agent $i$ in $\mathbf{X}$ if agent $i$ does not \eflenvy\ any bundle $X_j$ with respect to $X_a$; equivalently,
    
    $$
    v_i(X_a)\ge \max_{j\in[k]}\vefl_i(X_j).
    $$

    Since EFX implies EFL and EFL implies EF1, every EFX-feasible bundle is also EFL-feasible. Each of the scores $\vefx_i$, $\vefone_i$, and $\vefl_i$ is nondecreasing under the addition of goods.

    \paragraph{Share based notions.}
    For an integer $k\ge 1$, let $\Pi_k(M)$ denote the set of all indexed $k$-partitions of $M$:
    
    $$
    \Pi_k(M)=
    \{
    \mathbf{W}=(W_1,\ldots,W_k):
    W_a\cap W_b=\varnothing \text{ for }a\neq b,\ 
    \textstyle\bigcup_{a=1}^k W_a=M
    \}.
    $$
    
    The $k$-maximin share of agent $i$ is
    $$
    \MMS_k(v_i;M)
    =
    \max_{\mathbf{W}\in\Pi_k(M)}
    \min_{a\in[k]} v_i(W_a).
    $$
    
    That is, agent $i$ partitions the goods into $k$ bundles and then receives a least-valued bundle according to her valuation. We write  
    $$
    \mu_i=\MMS_n(v_i;M)  \label{eq:mms-def}
    $$
    for the maximin share of agent $i$ in the original instance.

    The \emph{AnyPrice share} (APS) of agent $i$ is defined through normalized nonnegative prices on the goods. Consider any price vector $(p_g)_{g\in M}$ satisfying $p_g\ge0 \qquad\text{and}\qquad \sum_{g\in M}p_g=1.$ Under these prices, agent $i$ has budget $1/n$ and may choose any bundle whose total price is at most $1/n$. Her AnyPrice share is the minimum, over all such price vectors, of the maximum value she can obtain with this budget:
    \begin{align}
     \tau_i=\APS_{1/n}(v_i;M)=\min_{\substack{p_g\ge0\\\sum_g p_g=1}}
                \max_{\substack{S\subseteq M\\\sum_{g\in S}p_g\le1/n}}v_i(S).
     \label{eq:aps-def}
    \end{align}

\begin{definition}[Fractional APS reference family]
Fix an agent $i$ and a threshold $\lambda>0$. A \emph{fractional APS reference family at threshold $\lambda$} is a finite collection $\mathcal{R} = \{(B_j,\gamma_j)\}_{j=1}^r,$ where each $B_j\subseteq M$ is a bundle and each $\gamma_j>0$ is its \emph{reference coefficient}, such that $\sum_{j=1}^r\gamma_j=n,$ \[
\text{$g$'s incident} :=\sum_{j:\,g\in B_j}\gamma_j\le1
\qquad\text{for every }g\in M,
\]
and $v_i(B_j)\ge\lambda \qquad\text{for every }j\in[r].$ We call the bundles $B_j$ the \emph{reference bundles}. For any subfamily indexed by $J\subseteq[r]$, its \emph{reference mass} is $\sum_{j\in J}\gamma_j.$ \end{definition}

\citet{BEF21} showed that a fractional APS reference family at threshold $\lambda$ exists if and only if $\lambda\le\tau_i$. 
We provide its proof in \cref{subsection:fractional aps certificates} for completeness.

An MMS partition is a special, integral APS reference family. Indeed, let $\mathbf{W}=(W_1,\ldots,W_n)$ be an MMS partition for agent $i$, so that $v_i(W_j)\ge\mu_i \qquad\text{for every }j\in[n].$ Set $B_j=W_j \qquad\text{and}\qquad \gamma_j=1 \qquad(j\in[n]).$ Then the total reference mass is $n$, every good belongs to exactly one reference bundle and therefore has total incidence $1$, and every reference bundle has value at least $\mu_i$. Hence an MMS partition is a fractional APS reference family at threshold $\mu_i$.

In this sense, fractional APS reference families generalize MMS partitions: an MMS partition consists of $n$ disjoint reference bundles, each with coefficient one, whereas APS reference bundles may overlap, provided that the total coefficient with which any good is used is at most one. This proves $\mu_i\le\tau_i$.

    The minimum EFX share is
    \begin{equation}\label{eq:mxs-def}
 \xi_i
 =\min\left\{v_i(B_a):
 B\in\Pi_n(M),\quad v_i(B_a)\ge\max_j\vefx_i(B_j) \right\}.
\end{equation}
    Thus, $\xi_i$ is the minimum value, over all $n$-partitions of $M$, of a bundle that is EFX-feasible for agent $i$ in that partition. An allocation
    $\mathbf{A}=\langle A_1,\ldots,A_n\rangle$
    satisfies MXS if
    
    $$
    v_i(A_i)\ge \xi_i
    \qquad\text{for every }i\in N.
    $$
    An $n$-partition of $M$, together with a distinguished bundle $B$ that is EFX-feasible for agent $i$ in that partition, is an \emph{MXS upper witness} for agent $i$ in an allocation $\A$ if $v_i(A_i)\geq v_i(B)$.

\begin{definition}[Residual maximin share~\cite{AFMMR26}]
\label{def:rmms}
    A threshold $t\ge0$ is \emph{residual self-feasible} for $(v,M,n)$ if, for every $0\le k<n$ and every collection of disjoint bundles $D_1,\ldots,D_k$ with $v(D_j)<t$, the remaining set admits an $(n-k)$-partition all of whose bundles have value at least $t$. The residual maximin share $\RMMS_n(v;M)$ is the largest such threshold.
\end{definition}

    We use $c_j=\frac{3j-1}{2j}=\frac32-\frac1{2j},\qquad \beta_j=1/c_j \quad(j\ge1).$ The sequence $c_j$ is strictly increasing. For $j\ge2$, $1<c_j<3/2$.

\section{A joint MXS--APS certificate from residual LGLB}\label{sec:certificate}
\label{app:apsx}

For a valuation $v$, a set of goods $F$, and an integer $q\ge1$, the \emph{largest-good-to-least-bundle} (LGLB) routine
$
\LGLB(v,F,q)
$
starts with $q$ empty bundles. It considers the goods in $F$ in nonincreasing order of their value under $v$ and, one by one, assigns each good to a currently minimum-value bundle, breaking ties arbitrarily. The resulting bundles form a $q$-partition of $F$.

The purpose of this section is to establish the certificates that will be used by both of our APS algorithms. First, for every valuation $v$, we compute a fixed threshold $\widehat t(v,n)$ satisfying $\beta_n\APS_{1/n}(v;M) \le \widehat t(v,n) \le \RMMS_n(v;M) \le \MMS_n(v;M).$ Moreover, after removing any $k<n$ disjoint bundles whose values are strictly below $\widehat t(v,n)$, LGLB partitions the remaining goods into $n-k$ bundles, each of value at least $\widehat t(v,n)$. Thus the threshold comes with a polynomial-time residual splitter, without computing APS or RMMS.

Second, suppose that arbitrary disjoint bundles $D_1,\ldots,D_k$  have been removed, and let $L$ be the minimum value in the residual LGLB partition. We show that the quantity $b=\max\{L,v(D_1),\ldots,v(D_k)\}$ simultaneously gives an upper bound on MXS and on the fixed APS threshold $\widehat t(v,n)$. The same construction explicitly produces an MXS upper witness of value exactly $b$. 

The proof has three ingredients. First, \Cref{lem:aps-fractional} recalls the fractional characterization of APS. Since the reference bundles may overlap, \Cref{lem:fractional-count} gives the fractional singleton-counting argument that we need. Second, we use these references to show that the target $\beta_n\APS_{1/n}(v;M)$ satisfies two numerical tests. We prove that every threshold passing these tests has the required residual LGLB splitter, and compute the largest passing threshold by an exact event sweep. Finally, \Cref{lem:LGLB-score} supplies the explicit MXS witness, and the residual-splitting property gives the strengthened joint certificate in \Cref{thm:joint-certificate}.

Throughout this section, fractional families are used only in the analysis of APS. All LGLB partitions and all bundles produced by the algorithms consist of the original indivisible goods. For $n=1$, take $\widehat t(v,1)=v(M)$; all claims are immediate. Henceforth $n\ge2$. Recall that $c_j=\frac{3j-1}{2j},\qquad \beta_j=\frac1{c_j},$ and that $c_j$ is increasing in $j$.

\subsection{Fractional APS certificates}\label{subsection:fractional aps certificates}
    Next, we present the fractional characterization of APS, 
    and we defer its proof to \cref{sec:aps certificate appendix}.
\begin{restatable}{lemma}{characlemma}
[Fractional characterization~\cite{BEF21}]\label[lemma]{lem:aps-fractional} 
    For a positive threshold $\lambda$, 
    \begin{enumerate}
        \item  $\lambda\le\APS_{1/n}(v;M)$ 
    if and only if there exists a fractional APS reference family at threshold $\lambda$. Equivalently, there are bundles $B_1,\ldots,B_r$ and positive coefficients $\gamma_1,\ldots,\gamma_r$ such that
            \begin{equation}\label{eq:fractional}
            \sum_{j=1}^r\gamma_j=n,\qquad
            \sum_{j:g\in B_j}\gamma_j\le1\quad(g\in M),\qquad
            v(B_j)\ge\lambda\quad(j\in[r]).
            \end{equation}
        \item Furthermore, for any subfamily indexed by $J\subseteq[r]$ whose coefficients sum to $p$, $v\!\left(\bigcup_{j\in J}B_j\right) \ge\sum_{j\in J}\gamma_jv(B_j) \ge p\lambda.$ \item Also, for any fixed set of $h$ goods, the total reference mass of bundles that contain at least one of those goods is at most $h$.       
    \end{enumerate}
\end{restatable}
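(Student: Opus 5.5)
The plan is to prove part~1 by LP duality applied to the inner maximization, and then derive parts~2 and~3 directly from the constraints in \eqref{eq:fractional}.

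\emph{Part 1, sufficiency.} Suppose a family $\{(B_j,\gamma_j)\}$ satisfying \eqref{eq:fractional} exists, and fix any normalized price vector $p$. Averaging prices with weights $\gamma_j/n$ gives
\[
\sum_j \frac{\gamma_j}{n}\, p(B_j) \;=\; \frac1n\sum_g p_g \sum_{j:g\in B_j}\gamma_j \;\le\; \frac1n .
\]
The weights $\gamma_j/n$ sum to one, so some $B_j$ has $p(B_j)\le 1/n$. That bundle is affordable and worth at least $\lambda$. Since $p$ was arbitrary, taking the minimum over $p$ yields $\lambda\le\APS_{1/n}(v;M)$.

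\emph{Part 1, necessity.} Suppose $\lambda\le\tau$, and let $\mathcal{S}=\{S: v(S)\ge\lambda\}$. By definition of APS, every price vector leaves some $S\in\mathcal{S}$ affordable, so
\[
\min_{p\ge 0,\ \sum_g p_g=1}\ \min_{S\in\mathcal{S}} p(S) \;\le\; \frac1n .
\]
Consider the LP: maximize $z$ subject to $p(S)\ge z$ for all $S\in\mathcal{S}$, $\sum_g p_g=1$, and $p\ge0$. Its optimum is at most $1/n$. Its dual is: minimize $w$ subject to $\sum_{S\ni g} y_S\le w$ for every good $g$, $\sum_S y_S=1$, and $y\ge0$. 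By strong duality the dual optimum is $w^*\le 1/n$. If $w^*>0$, set $\gamma_S=y_S/w^*$. Then $\sum_S\gamma_S=1/w^*\ge n$ and every incidence is at most $1$. Scaling all coefficients down by the factor $n w^*\le 1$ makes the total exactly $n$ while keeping incidences at most $1$.

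The case $w^*=0$ needs separate care. Then the support bundles of $y$ contain no good with positive dual weight, which forces $\varnothing\in\mathcal{S}$, i.e.\ $\lambda\le 0$. This contradicts $\lambda>0$, so $w^*>0$. Finally, drop any zero coefficients so that every $\gamma_j$ is strictly positive.

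\emph{Part 2.} By additivity,
\[
v\Bigl(\bigcup_{j\in J}B_j\Bigr) \;=\; \sum_{g\in\bigcup_J B_j} v(g) \;\ge\; \sum_g v(g)\sum_{j\in J:\,g\in B_j}\gamma_j \;=\; \sum_{j\in J}\gamma_j v(B_j) \;\ge\; p\lambda .
\]
The first inequality holds because each good's incidence within $J$ is at most $1$ and $v(g)\ge0$.

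\emph{Part 3.} Let $H$ be the fixed set of $h$ goods. The total mass of bundles meeting $H$ is at most
\[
\sum_{j:\,B_j\cap H\neq\varnothing}\gamma_j \;\le\; \sum_{g\in H}\ \sum_{j:\,g\in B_j}\gamma_j \;\le\; h ,
\]
using the union bound and then the incidence constraint.

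The main obstacle is the necessity direction, specifically setting up the duality correctly and handling the normalization. The dual yields a total mass of at least $n$, which must be scaled down to exactly $n$, and one must exclude the degenerate case $w^*=0$. Everything else is direct counting.
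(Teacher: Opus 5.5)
Your proof is correct and is essentially the paper's argument: the paper applies the minimax theorem to the zero-sum game in which one player picks a good and the other picks a bundle of value at least $\lambda$, and that game is exactly your LP pair (your averaging argument for sufficiency is just the weak-duality half made explicit), while your parts 2 and 3 coincide with the paper's. Two small fixes: your first display should read $\max_{p}\min_{S\in\mathcal{S}}p(S)\le 1/n$, since that is the LP optimum you bound next; and setting $\gamma_S=n\,y_S$ directly already gives total mass $n$ and incidence at most $nw^*\le 1$, so the detour through $1/w^*$ and the separate $w^*=0$ case are unnecessary.
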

    
    An immediate consequence is that a positive APS requires at least $n$ positive-valued goods. Indeed, every positive-valued bundle $B_j$ contains at least one positive-valued good, while each good can support coefficient mass at most one; supporting total coefficient mass $n$ therefore requires at least $n$ positive-valued goods.
    
\subsection{Counting singletons in a fractional family}

For an integral partition, when there are fewer than two goods per bundle on average, many bundles must be singletons.
For a fractional APS reference, bundles may overlap, so this argument cannot be applied to the number of bundles directly. We instead count coefficient mass, which gives the fractional analogue in the next lemma.

\begin{lemma}[Fractional singleton counting]\label[lemma]{lem:fractional-count}
    Let $p\le \ell<2p$ be integers with $p\ge1$. Let $G=[\ell]$ consist of $\ell$ distinguished goods with values $1 \geq v_1 \geq v_2 \geq \ldots \geq v_\ell \geq 0$, and fix $\lambda>1$. Suppose that each subset $R\subseteq G$ is assigned a nonnegative coefficient $\gamma_R\ge0$, and that these coefficients satisfy
    \begin{equation}\label{eq:count-input}
     \sum_R\gamma_R=p,\qquad
     \sum_{R:g\in R}\gamma_R\le 1  \text{ (for any } g\in G\text{)}.
    \end{equation}
    Then, 
    \begin{equation}\label{eq:fractional-count}
     \sum_R\gamma_R\bigl(\lambda-v(R)\bigr)_+ + \sum_{j=1}^{2p-\ell}v_j \ge (2p-\ell)\lambda,
    \end{equation}
    where $\bigl(\lambda-v(R)\bigr)_+ = \max(0, \lambda-v(R))$.
\end{lemma}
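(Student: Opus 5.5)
The plan is to treat the left-hand side of \eqref{eq:fractional-count} as the objective of a linear program in the variables $(\gamma_R)_{R\subseteq G}$, minimized subject to \eqref{eq:count-input}. It then suffices to exhibit a feasible dual solution whose value is at least $(2p-\ell)\lambda-\sum_{j=1}^{2p-\ell}v_j$. Write $q=2p-\ell\in[1,p]$.

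The dual has a free variable $z$ for the mass equation and a variable $y_g\ge0$ for each incidence constraint. Dual feasibility requires
\[
z-\sum_{g\in R}y_g\le(\lambda-v(R))_+\qquad\text{for every }R\subseteq G .
\]
Weak duality then gives, for every feasible $\gamma$,
\[
\sum_R\gamma_R(\lambda-v(R))_+\ \ge\ pz-\sum_{g\in G}y_g .
\]
I would take $z=\lambda$, so the empty set imposes no constraint. The bound I must reach is then
\[
\sum_g y_g\ \le\ (\ell-p)\lambda+\sum_{j\le q}v_j .
\]

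The construction I would try first is the following. For the top $q$ goods, set $y_g=v_g$; this is forced by the singleton constraints $y_g\ge v_g$, and it is exactly what cancels the term $\sum_{j\le q}v_j$. The remaining $\ell-q=2(\ell-p)$ goods come in an even number. I would pair them and give each pair total weight $\lambda$, which makes the budget tight. Within a pair, the weights must also satisfy $y_g\ge v_g$ for each member.

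Dual feasibility for larger sets should follow from two observations. First, any set $R$ containing two goods whose weights sum to at least $\lambda$ satisfies the constraint trivially, since its left side is nonpositive. Second, sets built only from top goods satisfy $\lambda-v(R)\le(\lambda-v(R))_+$ directly.

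I expect the main obstacle to be the pairs of low goods with $v_a+v_b>\lambda$. This can happen because $\lambda$ is only known to exceed $1$. For such a pair the singleton constraints cannot be met with $y_a+y_b=\lambda$.

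My fallback for that case is a direct primal exchange argument instead of a purely dual one. I would take an optimal $\gamma$ and perform uncrossing or splitting steps, in each case checking that the objective does not increase:
\begin{itemize}
\item shrink every $R$ with $v(R)\ge\lambda$ toward a minimal such set, which keeps the deficit at zero and only frees incidence;
\item move mass from the empty set onto unused goods.
\end{itemize}
The goal is to reach a canonical family of singletons, pairs and empty sets. For that family, the incidence count $\sum_R\gamma_R|R|\le\ell$ together with $\sum_R\gamma_R=p$ gives
\[
(\text{singleton mass})+2\,(\text{empty mass})\ \ge\ q .
\]
In the canonical family I would bound the total deficit by two elementary steps. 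First, any pair with $v(R)<\lambda$ contributes deficit at least $\lambda$ minus the values of its two goods. Second, the fractional-knapsack fact applies: with coefficients at most $1$ and $v_g\le1<\lambda$, the sum $\sum_g s_g v_g$ exceeds $\sum_{j\le q}v_j$ by at most $(b-q)_+$, where $s_g$ is the singleton mass on $g$ and $b=\sum_g s_g$ is the total singleton mass. Using $\lambda>1$, these combine to give \eqref{eq:fractional-count}.
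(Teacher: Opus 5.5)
\paragraph{Primary route.} The dual argument cannot succeed, and the obstacle is the choice $z=\lambda$, not your pairing. With $z=\lambda$, the singleton constraints force $y_g\ge v_g$ for \emph{every} good. Once $y\ge v$, every other constraint holds automatically, since $\lambda-y(R)\le\lambda-v(R)$. So the best certificate with $z=\lambda$ is the trivial bound $p\lambda-v(G)$. Take $\ell=3$, $p=2$, $v_1=v_2=v_3=1$, $\lambda=6/5$: you would need $\sum_g y_g\le 11/5$, yet $\sum_g y_g\ge3$. The optimal dual there is $z=2/5$, $y_g=1/5$. So the whole proof rests on your fallback.

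\paragraph{Fallback.} The fallback is missing its decisive step: removing mass from $\varnothing$ when every good is already saturated. Let $b$ be the singleton mass, $e=\gamma_\varnothing$, and $\Delta=\sum_R\gamma_R(\lambda-v(R))_+$.
\begin{itemize}
\item Neither of your operations applies in that situation. Shrinking only affects non-minimal sets of value at least $\lambda$, and moving empty mass needs an unused good.
\item Your counting then gives only $b+2e\ge q$.
\item A set with $|R|\ge2$ may have zero deficit, so your two estimates yield only $\Delta+\sum_{j\le q}v_j\ge(b+e)\lambda-(b-q)_+$. This is below $q\lambda$ whenever $b+e<q$.
\end{itemize}
Concretely, in the instance above put mass $1/2$ on each of $\{1,2\},\{1,3\},\{2,3\}$ and on $\varnothing$. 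All goods are saturated and each pair is already a minimal set of value at least $\lambda$, so neither operation applies. Your estimate then gives only $3/5<6/5$.

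Invoking optimality of $\gamma$ does not rescue this unless you have a move that consumes empty mass, and that move is exactly what is missing. Replace $\delta$ units on some $R$ with $|R|\ge2$, together with $\delta$ units on $\varnothing$, by $\delta$ units on $R-g$ and $\delta$ units on $\{g\}$.
\begin{itemize}
\item This preserves the total mass and every incidence.
\item It does not increase $\Delta$: since $v_g\le1<\lambda$, we have $\min\{\lambda,v(R)-v_g\}+v_g\ge\min\{\lambda,v(R)\}$.
\item Such an $R$ exists, because under saturation the total incidence $\ell\ge p$ exceeds the nonempty mass.
\end{itemize}
Combine this with your move onto unused goods, and choose among optimal families one with minimum $\gamma_\varnothing$ (by compactness). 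This forces $\gamma_\varnothing=0$. Then $\ell\ge\sum_R\gamma_R|R|\ge b+2(p-b)$ gives $b\ge q$.

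This count only uses $|R|\ge2$ for non-singletons, so drop the reduction to pairs, which your operations do not achieve anyway. Minimal sets of value at least $\lambda$ can contain three or more goods, and deficient sets of size at least three are never touched.

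With $b\ge q$, charge each non-singleton at most $\lambda$ of capped value and apply your knapsack bound with $\lambda>1$:
\[
\sum_R\gamma_R\min\{\lambda,v(R)\}\le(p-b)\lambda+\sum_{j\le q}v_j+(b-q)\le(p-q)\lambda+\sum_{j\le q}v_j .
\]
This finishes the proof. It is the paper's argument: saturate, eliminate empty mass by the exchange above, then count singletons.
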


\begin{proof}
For a bundle $R\subseteq G$, its contribution toward the target $\lambda$ is $\min\{\lambda,v(R)\}$. 
Define the total capped contribution of the fractional family by $C(\gamma) = \sum_R\gamma_R\min\{\lambda,v(R)\}.$ For every $R$, $\min\{\lambda,v(R)\} = \lambda-\bigl(\lambda-v(R)\bigr)_+.$ Since the total coefficient mass is $p$, 
\begin{align}\label{ineq}
    C(\gamma) = p\lambda - \sum_R\gamma_R\bigl(\lambda-v(R)\bigr)_+.
\end{align}
We now derive an upper bound on $C(\gamma)$.

\paragraph{Step 1: saturating every good's incidence.}
Fix a good $g\in G$, and suppose its current incidence is $z=\sum_{R:g\in R}\gamma_R<1.$ The total coefficient mass of bundles not containing $g$ is $p-z$, and since
$p\ge1$,
we get $p-z\ge1-z.$
Hence we can choose exactly $1-z$ units of coefficient mass from bundles not
containing $g$.

If the required amount cuts through a coefficient $\gamma_R$, say only
$\alpha<\gamma_R$ is needed, we \emph{split} this coefficient into two copies, $(R,\gamma_R) \quad\longrightarrow\quad (R,\alpha),\ (R,\gamma_R-\alpha).$ We then add $g$ only to the first copy:
\[
(R,\alpha),\ (R,\gamma_R-\alpha)
\quad\longrightarrow\quad
(R\cup\{g\},\alpha),\ (R,\gamma_R-\alpha).
\]
This increases the incidence of $g$ by exactly $\alpha$. For every other good
$h\neq g$, the two copies still contribute a total coefficient $\alpha+(\gamma_R-\alpha)=\gamma_R,$ so the incidence of $h$ is unchanged. Doing this until a total mass
$1-z$ is selected raises the incidence of $g$ exactly to $1$, preserves the incidence of every other good, and preserves the total coefficient mass. Moreover, adding a nonnegative-weight good cannot decrease $\min\{\lambda,v(R)\}$. 
Hence $C(\gamma)$ does not decrease.

Repeating this operation for every good, we may assume that every good is saturated: $\sum_{R:g\in R}\gamma_R=1 \qquad\text{for every }g\in G.$ \paragraph{Step 2: eliminating empty bundles.}
    Among all saturated families maximizing $C(\gamma)$, choose one with minimum coefficient on the empty bundle. Such a family exists because the feasible region is compact (a closed bounded subset of $\mathbb{R}^{2^\ell}$) and the objective is continuous.
    
    Suppose, for contradiction, that the empty bundle has positive coefficient. Then some bundle containing at least two goods must also have positive coefficient. Otherwise, every nonempty subset with positive coefficient would be a singleton. Hence each unit of nonempty coefficient mass would contribute exactly one unit of incidence. Since the empty set has positive coefficient, the total nonempty coefficient mass is strictly smaller than the total mass $p$, and therefore the total incidence would also be strictly smaller than $p$. But saturation gives total incidence $\sum_{g\in G}1=\ell\ge p,$ a contradiction.

    Let $R$ be a subset with at least two goods and positive coefficient $\gamma_R>0$, and choose a good $g\in R$. Since both $\gamma_R$ and the coefficient of the empty set are positive, we may choose $0<\delta\le \min\{\gamma_R,\gamma_{\varnothing}\}.$ We then decrease the coefficients of $R$ and $\varnothing$ by $\delta$, and increase the coefficients of $R-g$ and $\{g\}$ by $\delta$. In other words,
    \[
    \gamma_R\leftarrow\gamma_R-\delta,\qquad
    \gamma_{\varnothing}\leftarrow\gamma_{\varnothing}-\delta,
    \]
    and
    \[
    \gamma_{R-g}\leftarrow\gamma_{R-g}+\delta,\qquad
    \gamma_{\{g\}}\leftarrow\gamma_{\{g\}}+\delta.
    \]
    
    The total coefficient mass is unchanged, since we remove total mass $2\delta$ and add total mass $2\delta$. The incidence of every good is also unchanged:
    To see that every good's incidence is preserved, consider the goods separately. For any good $h\in R-g$, the transformation removes coefficient $\delta$ from the set $R$ but adds the same coefficient $\delta$ to the set $R-g$, and both sets contain $h$. Hence the total incidence of $h$ is unchanged. For the distinguished good $g$, the transformation removes coefficient $\delta$ from the set $R$ but adds coefficient $\delta$ to the singleton $\{g\}$, so the incidence of $g$ is also unchanged. Goods outside $R$ are unaffected.

    Since $v_g\le1<\lambda$,
    \[
    \min\{\lambda,v(R)-v_g\}+\min \{\lambda,v_g\}
    =
    \min\{\lambda,v(R)-v_g\}+v_g
    \ge
    \min\{\lambda,v(R)\}.
    \]
    Indeed, if $v(R)\le\lambda$, both sides equal $v(R)$; if $v(R)>\lambda$, the left-hand side is at least $\lambda$. Thus the transformation does not decrease $C(\gamma)$, while it strictly decreases the coefficient of the empty bundle. This contradicts the choice of the family. Therefore, we may assume that no coefficient is placed on the empty bundle.

\paragraph{Step 3: lower-bounding the singleton mass.}
    Let $w_1$ be the total coefficient mass of singleton bundles. Since every good is saturated, $\ell=\sum_R\gamma_R|R|.$ There are no empty bundles; singleton bundles contribute exactly $w_1$ to this sum, while every nonsingleton bundle has cardinality at least two. Therefore, $\ell\ge w_1+2(p-w_1)=2p-w_1,$ and hence $w_1\ge 2p-\ell.$ Thus the singleton bundles carry at least $2p-\ell$ units of coefficient mass.
    We select exactly $2p-\ell$ units of this mass. If the last singleton bundle
    $\{g\}$ has coefficient $\gamma_{\{g\}}$ but only
    $\alpha<\gamma_{\{g\}}$ more mass is needed, we split it as
    \[
    (\{g\},\gamma_{\{g\}})
    \longrightarrow
    (\{g\},\alpha),\,
    (\{g\},\gamma_{\{g\}}-\alpha),
    \]
    and select only the first copy. This changes neither the total coefficient
    mass nor any good's incidence; it only lets us select the required singleton
    mass exactly.
        
    For a singleton $\{g\}$, its contribution to the capped objective $C(\gamma)$ is $\gamma_{\{g\}}v_g$, 
    because $v_g\le1<\lambda$. Moreover, the incidence constraint implies that the total singleton coefficient assigned to any good $g$ is at most one. Therefore, among all selections of singleton coefficient mass totaling $2p-\ell$, the largest possible contribution to $C(\gamma)$ is obtained by assigning one unit of mass to each of the $2p-\ell$ largest-weight goods. Hence the selected $2p-\ell$ units of singleton mass contribute at most 
    $\sum_{j=1}^{2p-\ell}v_j$ to $C(\gamma)$.

The remaining coefficient mass is $p-(2p-\ell) = \ell-p$, and every bundle contributes at most $\lambda$ to the capped objective. Hence $C(\gamma) \le \sum_{j=1}^{2p-\ell}v_j + (\ell-p)\lambda.$ Combining this with Equation \eqref{ineq}
gives
\[
p\lambda-(\sum_R\gamma_R\bigl(\lambda-v(R)\bigr)_+)
\le
\sum_{j=1}^{2p-\ell}v_j+(\ell-p)\lambda.
\]
Rearranging,
\[
\sum_R\gamma_R\bigl(\lambda-v(R)\bigr)_+ +\sum_{j=1}^{2p-\ell}v_j \ge (2p-\ell)\lambda,
\]
which is \eqref{eq:fractional-count}.
\end{proof}

\begin{remark}[The role of the small-good budget]\label[remark]{rem:small-budget}
Suppose the fractional reference bundles are integral bundles of value at least $\lambda$, and let $G$ denote a distinguished subset of their goods. For a reference bundle $B_j$, $v(B_j)\ge\lambda$ implies $\bigl(\lambda-v(B_j\cap G)\bigr)_+ \le v(B_j\setminus G).$ Thus the deficit left after counting only the distinguished goods can be paid for by the remaining goods in the reference bundle.

After multiplying by the reference coefficients and summing, the per-good incidence bound implies that the total deficit is at most the total value $S$ of the nondistinguished goods. 
\end{remark}

\subsection{LGLB partitions and explicit MXS witnesses}

\begin{lemma}[LGLB removal scores]\label[lemma]{lem:LGLB-score}
Let $\mathbf{C}=(C_1,\ldots,C_q)=\LGLB(v,F,q), \qquad L=\min_{j\in[q]}v(C_j).$ Then $\vefx_v(C_j)\le L \qquad\text{for every }j\in[q].$ \end{lemma}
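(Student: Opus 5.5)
We prove the statement by tracking, for each bundle $C_j$, the moment its last good was added, and by using the fact that bundle values never decrease during the run of $\LGLB(v,F,q)$. Fix $j$. If $C_j=\varnothing$, then $\vefx_v(C_j)=0\le L$ trivially, so assume $C_j\neq\varnothing$. Let $g$ be the last good that $\LGLB$ placed into $C_j$. Since goods are processed in nonincreasing order of value, $g$ has the smallest value in $C_j$, that is, $v(g)=\min_{h\in C_j}v(h)$. Consequently
\[
\vefx_v(C_j)=v(C_j)-v(g)=v(C_j\setminus\{g\}).
\]
Moreover, $C_j\setminus\{g\}$ is precisely the content of bundle $j$ at the moment just before $g$ was assigned.

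At that moment, $\LGLB$ chose bundle $j$ because it was a currently minimum-value bundle. Hence, writing $C_{j'}^{(t)}$ for the content of bundle $j'$ at that time, we have
\[
v(C_j\setminus\{g\})\le v\bigl(C_{j'}^{(t)}\bigr)\qquad\text{for every } j'\in[q].
\]
After time $t$, goods are only added, and all values are nonnegative. Therefore $v(C_{j'}^{(t)})\le v(C_{j'})$ for every $j'$. Taking the minimum over $j'$ gives
\[
\vefx_v(C_j)=v(C_j\setminus\{g\})\le\min_{j'}v(C_{j'})=L.
\]

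The only point needing care is the identification of the last-added good as a minimum-value good in $C_j$ when there are ties in value. Nonincreasing processing order guarantees that every good added later has value no larger than any good added earlier, so the last-added good attains the minimum even under arbitrary tie-breaking. Arbitrary tie-breaking among minimum bundles does not affect the argument, because we only use that the chosen bundle had minimum value at that moment. No earlier result of the paper is needed; the argument is a direct invariant of the greedy procedure.
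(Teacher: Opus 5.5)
Your proof is correct and follows essentially the same argument as the paper. The last good inserted into $C_j$ is a minimum-valued good of $C_j$, so $\vefx_v(C_j)$ equals the load of $C_j$ just before that insertion; that load was a current minimum, hence at most the final value of a bundle attaining $L$, since loads only grow. Your remarks on tie-breaking and on nonnegativity (which makes loads nondecreasing) spell out what the paper uses implicitly.
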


\begin{proof}
Fix a nonempty bundle $C_j$. Let $g$ be the last good inserted into $C_j$. Because LGLB processes goods in nonincreasing order, $g$ is a minimum-valued good of $C_j$. Hence $\vefx_v(C_j) = v(C_j-g),$ which is exactly the load of $C_j$ immediately before $g$ was inserted.

At that moment, LGLB chose $C_j$ because it had minimum current load. Consider a bundle $C_r$ whose final value is $L$. Its load at that moment was at most its final value $L$. Therefore,
$v(C_j-g)\le L.$
Thus $\vefx_v(C_j)\le L.$ If $C_j$ is empty, its EFX score is zero, so the inequality is immediate.
\end{proof}

\subsection{A certificate with two numerical tests}
\label{subsec:apsx-certificate}
We now define the numerical certificate used to compute the fixed APS floor. Fix one valuation $v$ and a threshold $t>0$. Partition the goods into
\begin{equation}\label{eq:apsx-classes}
H_t=\{g:v(g)\ge t\},\qquad
M_t=\{g:t/2\le v(g)<t\},\qquad
S_t=\{g:v(g)<t/2\}.
\end{equation}
We call these goods high, medium, and small, respectively. Write $h=|H_t|,\qquad \ell=|M_t|.$ For $0\le a\le \ell$, let $T_a$ be the total value of the $a$ most valuable medium goods; in particular, $T_0=0$.

\begin{definition}[Numerically certified threshold]
\label[definition]{def:apsx-certified}
A threshold $t>0$ is \emph{certified} if either $h\ge n$, or $h<n$ and, putting $p=n-h$, the following requirements hold:
\begin{align}
v(M_t\cup S_t)&\ge\frac{3p-1}{2}\,t,\label{eq:apsx-total}\\
v(S_t)+T_a&\ge\frac{3a-1}{2}\,t
\quad\text{if }p<\ell<2p,\qquad a=2p-\ell.
\label{eq:apsx-singleton}
\end{align}
The second requirement is imposed only when $p<\ell<2p$. We also declare zero to be certified.
\end{definition}

The first test controls the total value outside the high goods. The second test is needed when there are more than $p$ but fewer than $2p$ medium goods: it controls the small-good value together with the largest medium goods that may remain unpaired. Both tests use only counts and sums of singleton values.

\begin{lemma}[The target APS scale is certified]
\label[lemma]{lem:apsx-aps}
If $\tau=\APS_{1/n}(v;M)>0$, then $t=\beta_n\tau$ is certified.
\end{lemma}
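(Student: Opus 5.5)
The plan is to fix a fractional APS reference family at threshold $\lambda=\tau$, which exists by \Cref{lem:aps-fractional}, and to verify both numerical tests of \Cref{def:apsx-certified} for $t=\beta_n\tau$, i.e.\ $\tau=c_n t$. If $h\ge n$ the threshold is certified by definition, so assume $h<n$ and put $p=n-h\ge1$.

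\emph{Step 1: a high-free subfamily of mass $p$.} By part~3 of \Cref{lem:aps-fractional}, the reference bundles that meet $H_t$ carry total reference mass at most $h$. Hence the bundles $B_j\subseteq M_t\cup S_t$ carry mass at least $n-h=p$. Splitting one coefficient if necessary, as in the proof of \Cref{lem:fractional-count}, I select a subfamily $J$ of exact mass $p$ consisting only of high-free bundles. Each selected bundle still has $v(B_j)\ge\tau$, and every good still has incidence at most $1$.

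\emph{Step 2: the total test \eqref{eq:apsx-total}.} Part~2 of \Cref{lem:aps-fractional}, applied to $J$, gives
\[
v(M_t\cup S_t)\ \ge\ v\Bigl(\bigcup_{j\in J}B_j\Bigr)\ \ge\ p\tau\ =\ p\,c_n t .
\]
Since $p\le n$ and $c_j$ is increasing, $p\,c_n t\ge p\,c_p t=\frac{3p-1}{2}t$, which is the total test.

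\emph{Step 3: the singleton test \eqref{eq:apsx-singleton}.} Suppose $p<\ell<2p$ and set $a=2p-\ell$, so that $1\le a<p\le n$. I apply \Cref{lem:fractional-count} with $G=M_t$, after dividing all values by $t$. Each medium value then lies in $[1/2,1)$, so it is at most $1$, and the lemma's $\lambda$ becomes $c_n>1$.
\begin{itemize}
\item For each $R\subseteq G$, let $\gamma_R$ be the sum of $\gamma_j$ over all $j\in J$ with $B_j\cap M_t=R$.
\item These coefficients have total mass $p$ and incidences at most $1$, so the hypotheses of the lemma hold (also $p\le\ell<2p$).
\item Since the $a=2p-\ell$ largest scaled medium values sum to $T_a/t$, the lemma gives
\[
\sum_{j\in J}\gamma_j\bigl(c_n t-v(B_j\cap M_t)\bigr)_+ + T_a\ \ge\ a\,c_n t .
\]
\end{itemize}
Merging bundles with the same trace is harmless here, because the deficit function is evaluated on the trace $R$.

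Next I bound the deficits as in \Cref{rem:small-budget}. Each $B_j$ with $j\in J$ is high-free and has $v(B_j)\ge c_n t$. Therefore $\bigl(c_n t-v(B_j\cap M_t)\bigr)_+\le v(B_j\cap S_t)$. Summing with weights $\gamma_j$ and using that every small good has incidence at most $1$, the total deficit is at most $v(S_t)$. This yields
\[
v(S_t)+T_a\ \ge\ a\,c_n t\ \ge\ a\,c_a t\ =\ \frac{3a-1}{2}\,t,
\]
again by monotonicity of $c_j$ and $a\le n$.

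The main point needing care is Step~1's exact selection of mass $p$ of high-free bundles, since \Cref{lem:fractional-count} requires total mass exactly $p$. The coefficient-splitting device from that lemma's proof handles this: splitting a coefficient changes neither any incidence nor any bundle value. Everything after that is routine inequality chasing, using only $p\le n$, $a\le n$, and the monotonicity of $c_j$.
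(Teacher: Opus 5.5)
Your proposal is correct and follows the paper's proof essentially step for step. You discard the reference bundles meeting $H_t$ (mass at most $h$) and keep exactly $p$ units; you get the total test from part~2 of \Cref{lem:aps-fractional} together with $c_n\ge c_p$; and you get the singleton test by applying \Cref{lem:fractional-count} to the medium traces, with the deficits paid from the small-good budget $v(S_t)$, and then using $c_n\ge c_a$. Your explicit aggregation of coefficients by medium trace $B_j\cap M_t$, including possible mass on the empty trace, is a detail the paper leaves implicit, and you handle it correctly.
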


\begin{proof}
By \Cref{lem:aps-fractional}, there is a fractional reference family $(B_j,\gamma_j)$ satisfying
\[
\sum_j\gamma_j=n,\qquad
\sum_{j:g\in B_j}\gamma_j\le1\quad(g\in M),\qquad
v(B_j)\ge\tau.
\]
If $h\ge n$, certification is immediate. Suppose therefore that $h<n$, and put $p=n-h$.

Discard every reference bundle intersecting $H_t$. By the last assertion of \Cref{lem:aps-fractional}, the discarded reference mass is at most $h$. Thus the remaining mass is at least $p$. Retain exactly $p$ units of this mass, reducing coefficients if necessary. 

Every retained reference is contained in $M_t\cup S_t$. Its value is at least $\tau$, and the incidence of every good is still at most one. Consequently, $v(M_t\cup S_t)\ge\sum_{j} \gamma_jv(B_j)\ge p\tau =pc_nt\ge pc_pt=\frac{3p-1}{2}\,t,$ where the sums now range over the retained family. This proves the first test.

Now suppose that $p<\ell<2p$, and put $a=2p-\ell$.
Project each retained reference onto its medium goods, writing $R_j=B_j\cap M_t.$ Since $v(B_j)\ge\tau$, $\bigl(\tau-v(R_j)\bigr)_+\le v(B_j\cap S_t).$ By additivity and the incidence constraints,
the coefficient-weighted sum of these deficits is at most $v(S_t)$, i.e.,  $\sum_R\gamma_R\bigl(\tau-v(R)\bigr)_+ \leq v(S_t)$.

Apply \Cref{lem:fractional-count} after dividing all values by $t$. The distinguished goods are the $\ell$ medium goods, each of normalized value less than one, the target is $\lambda=\frac{\tau}{t}=c_n>1,$ and the small-good budget is $v(S_t)/t$. The lemma gives $\frac{v(S_t)+T_a}{t}\ge a\frac{\tau}{t}.$ Multiplying by $t$ and using $a<p\le n$, we obtain $v(S_t)+T_a\ge a\tau=ac_nt\ge ac_at=\frac{3a-1}{2}\,t.$ This proves the second test and completes the proof.
\end{proof}

\subsection{Every certified threshold has a residual LGLB splitter}
\label{subsec:apsx-residual}

We next show that the numerical certificate is robust to the deletions required by the allocation engine. The threshold in the following theorem may be any certified value; it need not equal $\beta_n\APS_{1/n}(v;M)$.

\begin{theorem}[Residual splitting from the two tests]
\label{thm:apsx-split}
Let $t>0$ be certified. Let $D_1,\ldots,D_k$ be pairwise disjoint bundles, where $0\le k<n$, such that $v(D_j)<t\qquad(j\in[k]).$ Then every bundle in $\mathbf C=\LGLB\left(v,M\setminus\bigcup_{j=1}^kD_j,n-k\right)$ has value at least $t$.
\end{theorem}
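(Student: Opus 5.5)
We work directly with the run of $\LGLB$ on the residual set $F=M\setminus\bigcup_j D_j$ with $q=n-k$ bundles, and argue by contradiction from a bundle $C^\ast$ of value $<t$.

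\textbf{What the deletions can remove.} Since $v(D_j)<t$, no deleted bundle contains a high good, so all $h$ high goods survive. Two medium goods together are worth at least $t$, so each $D_j$ contains at most one medium good. If $h\ge q$, the first $q$ goods processed are high and land in distinct empty bundles, and we are done. Otherwise put $p'=q-h=p-k$, where $p=n-h$ as in \Cref{def:apsx-certified}. Let $\ell'\ge\ell-k$ be the number of surviving medium goods. The surviving medium-plus-small value is
\[
v\bigl(F\cap(M_t\cup S_t)\bigr)>v(M_t\cup S_t)-kt\ge\tfrac{3p'-1}{2}t+\tfrac{k}{2}t
\]
by \eqref{eq:apsx-total}. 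This $\tfrac k2 t$ slack is what I intend to spend when comparing residual counts with the original ones.

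\textbf{Structure of the run.} The $h$ high bundles reach value $\ge t$ immediately. While some of the other $p'$ bundles has load $<t$, every later good goes to a bundle of load $<t$, so we may analyse $\LGLB$ on the medium and small goods with $p'$ bundles alone. The medium goods come first, in nonincreasing order.
\begin{itemize}
\item If $\ell'\ge 2p'$, every bundle receives two medium goods and so has value $\ge t$.
\item If $\ell'\le p'$, the medium goods land in distinct bundles and the small goods follow.
\item If $p'<\ell'<2p'$, the second round of medium goods goes to the currently smallest loads. So the $a'=2p'-\ell'$ \emph{largest} medium goods remain singletons, $\ell'-p'$ bundles become pairs (value $\ge t$), and only then do small goods arrive.
\end{itemize}

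\textbf{Deriving the contradiction.} Suppose the final bundle $C^\ast$ has $v(C^\ast)<t$. As in \Cref{lem:LGLB-score}, each other bundle $C_j$ satisfies $v(C_j-g_j)\le v(C^\ast)<t$, where $g_j$ is its last good.
\begin{itemize}
\item For $\ell'\le p'$ (or $\ell'\ge 2p'$ excluded above), every bundle ends either with a small good or as a lone medium good. Hence each is worth $<\tfrac32 t$, and the total is $<t+(p'-1)\tfrac32t=\tfrac{3p'-1}{2}t$. This contradicts \eqref{eq:apsx-total} plus the slack.
\item For $p'<\ell'<2p'$, paired bundles can be worth up to $2t$, so I would not bound the total. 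Instead I use only the $a'$ singleton-medium bundles and the small goods. Since $C^\ast$ fails, the surviving small value plus the top $a'$ surviving medium goods is $<a't+(a'-1)\tfrac t2=\tfrac{3a'-1}{2}t$. This comes from the same last-good argument, because paired bundles never receive small goods while $C^\ast$ is below $t$.
\end{itemize}

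\textbf{Main obstacle.} The hardest step is transferring \eqref{eq:apsx-singleton}, which is stated for the original $\ell,a,T_a,S_t$, to the residual quantities $\ell',a'$. We have $a'=2p'-\ell'\le 2(p-k)-(\ell-k)=a-k$ when the deleted bundles hit $k$ medium goods; if they hit fewer, $a'$ is smaller still. Deleting one medium good and some small goods of total value $<t$ lowers $v(S_t)+T_a$ by less than $t$ per deletion. Taking the top $a'$ survivors against the original top $a$ costs at most the lost slots, and $\tfrac{3a-1}{2}t-kt\ge\tfrac{3(a-k)-1}{2}t$.

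The cases $\ell\le p$ and $\ell\ge 2p$, where \eqref{eq:apsx-singleton} is not imposed, must also be covered. Here I would use \eqref{eq:apsx-total} with the bound that paired bundles absorb at most $2t$ each while the medium goods they contain are counted. Concretely, I would try to show that the residual configuration is \emph{dominated} by the original one with $k$ extra bundles of value $<t$ removed, and then do the casework on how many deleted bundles contain a medium good.
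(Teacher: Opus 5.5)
Your skeleton matches the paper's proof. You peel off the high goods, settle $\ell'\le p'$ and $\ell'\ge 2p'$ directly, and for $p'<\ell'<2p'$ use $Z_F+U<\frac{3a'-1}{2}t$, where $Z_F$ is the surviving small value and $U$ is the value of the $a'$ unpaired (that is, largest surviving) medium goods. The gap is in transferring the original tests to the residual counts, which is where the proof has its content.

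\textbf{The singleton-test transfer.} Your claim that each deletion lowers $v(S_t)+T_a$ by less than $t$, which you combine via $\frac{3a-1}{2}t-kt\ge\frac{3(a-k)-1}{2}t$, is false. Take $t=1$, medium values $0.9,0.9,0.5,0.5$ and $p=3$, so $a=2$. Deleting one $0.5$ good together with small goods worth $0.49$ gives $a'=1$. The quantity then drops by $0.49+(1.8-0.9)=1.39>t$.

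The correct bookkeeping, as in the paper, charges deleted medium goods only through the lost slots. Let $d$ be the number of deleted medium goods, so $a=a'+2k-d\ge a'+d$. Then the top $a$ medium goods can be taken to contain the $a'$ unpaired ones, and $T_a\le U+(2k-d)t$. A deleted bundle containing a medium good carries small value below $t/2$, so the deleted small value satisfies $Z_D\le(k-\frac d2)t$. Hence $v(S_t)+T_a-(Z_F+U)\le\frac32(2k-d)t=\frac32(a-a')t$. This turns \eqref{eq:apsx-singleton} into exactly $Z_F+U\ge\frac{3a'-1}{2}t$. There is no slack here: charging each deleted bundle its full value $<t$ and additionally charging the lost slots fails whenever $d\ge1$.

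\textbf{The unfinished subcase.} When the residual has $p'<\ell'<2p'$ but the original has $\ell\le p$, \eqref{eq:apsx-singleton} is not imposed, and you leave this case as a plan (``I would try to show\ldots dominated''). It closes via the total test; this is the paper's Subcase~2a. The unpaired bundles are worth less than $\frac{3a'-1}{2}t$ in total, and each of the $\ell'-p'$ paired bundles is worth less than $2t$. So the surviving medium-and-small value is below $\frac{2p'+\ell'-1}{2}t$. Since $\ell'\le\ell\le p=p'+k$, this is at most $\frac{3p'+k-1}{2}t$, contradicting your slack bound $\frac{3p'-1}{2}t+\frac k2t$.

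The case $\ell\ge 2p$ needs no separate treatment: $\ell'\ge\ell-k\ge 2p-k\ge 2p'$ places it in your first bullet. With these repairs your argument coincides with the paper's proof.
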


\begin{proof}
We argue by contradiction, after normalizing the threshold.

\paragraph{Normalization.}
Divide all singleton values by $t$. The three classes in \eqref{eq:apsx-classes} are unchanged, and both numerical tests scale by the same factor. Hence it suffices to prove the theorem when $t=1$.

Suppose, for contradiction, that the final LGLB partition has minimum value $L<1$.

Fix a bundle attaining this minimum. Its value at any earlier time is at most its final value $L$. Therefore the minimum current load is at most $L$ at every LGLB insertion.

\paragraph{Removing high goods.}
No high good is deleted, since its value is at least one while every deleted bundle has value below one. LGLB considers all high goods first. Under the supposed failure, each high good enters an empty bundle and that bundle is never selected again: after the insertion its load is at least one, strictly above $L$.

It follows that $h<n-k;$ otherwise every residual bundle would contain a high good and would have value at least one. In particular, $h<n$, so the two numerical tests apply. Remove these $h$ singleton bundles from consideration and put $p=n-h,\qquad q=n-k-h=p-k\ge1.$ The reduced original instance consists of the medium and small goods. These goods occur either in the $q$ remaining LGLB bundles or in the $k$ deleted bundles. Their total value is $v(M_t\cup S_t)$, and the first numerical test gives
\begin{equation}\label{eq:apsx-reduced-total}
v(M_t\cup S_t)\ge\frac{3p-1}{2}.
\end{equation}

Let $\ell_F$ be the number of residual medium goods and let $d$ be the number of deleted medium goods. Thus $\ell=\ell_F+d.$ Since two medium goods have total value at least one, a deleted bundle contains at most one of them. Therefore $d\le k.$ If a residual LGLB bundle receives two medium goods, its value becomes at least one, so it is never selected again. Such a bundle consists of exactly those two goods. Any other residual bundle is empty, or is a singleton of value below one, or has a small good as its last inserted good. In the last case, its final value is below $L+\frac12<\frac32.$ \paragraph{Case 1: $\ell_F\le q$.}
The first $\ell_F$ medium goods enter distinct empty bundles. Thus no residual bundle contains two medium goods. Every residual load is below $3/2$, and at least one has value $L<1$. The deleted total is at most $k$, so $v(M_t\cup S_t)<\frac32(q-1)+1+k =\frac{3p-1}{2}-\frac{k}{2} \le\frac{3p-1}{2}.$ This contradicts \eqref{eq:apsx-reduced-total}.

\paragraph{Case 2: $\ell_F>q$.}
We first show that $q<\ell_F<2q.$ The first $q$ medium goods enter distinct empty bundles. A bundle that has received two medium goods has value at least one, whereas a bundle containing only its initial medium good has value below one. Thus no bundle receives a third medium good before every bundle has received a second one. If $\ell_F\ge2q$, every residual load would be at least one, contradicting $L<1$.

Call a bundle \emph{paired} if it receives two medium goods, and \emph{unpaired} otherwise. There are $\ell_F-q$ paired bundles, so the number of unpaired bundles is $u=q-(\ell_F-q)=2q-\ell_F,\qquad 1\le u<q.$ The initial medium goods of the unpaired bundles are the $u$ largest residual medium goods. Indeed, every subsequent medium good is assigned to a least-valued bundle among those still containing only their initial medium good. Thus the smaller initial goods are paired first. Let $U$ be the total value of the $u$ unpaired medium initial goods.

Let $Z_F$ and $Z_D$ be the total small-good values in the residual and deleted regions, respectively. Since paired bundles already have value at least one, all residual small goods are assigned to unpaired bundles. The minimum-value bundle is unpaired, and every other unpaired bundle has value below $3/2$. Therefore
\begin{equation}\label{eq:unpaired-mass}
U+Z_F<1+\frac32(u-1)=\frac{3u-1}{2}.
\end{equation}

Each deleted bundle containing a medium good has small-good value below $1/2$, and every remaining deleted bundle has total value below one. Hence
\begin{equation}\label{eq:deleted-small}
Z_D\le\frac d2+(k-d)=k-\frac d2.
\end{equation}
Also, $\ell=\ell_F+d<2q+k=2p-k\le2p.$ We distinguish two subcases according to whether $\ell\le p$ or $p<\ell<2p$.

\emph{Subcase 2a: $\ell\le p$.}
The $\ell_F-u$ medium goods in paired bundles each have value below one. Combining their values with the unpaired bundles and the deleted region gives
\begin{align*}
v(M_t\cup S_t)&<\frac{3u-1}{2}+(\ell_F-u)+k\\
 &=\ell_F+\frac u2-\frac12+k\\
 &=2p-k-\frac{u+1}{2},
\end{align*}
where we used $\ell_F=2q-u=2(p-k)-u$.
The assumption $\ell=\ell_F+d\le p$ implies $2(p-k)-u+d\le p, \qquad\text{hence}\qquad u\ge p-2k+d.$ Substituting this bound yields $v(M_t\cup S_t)<\frac{3p-1}{2}-\frac d2\le\frac{3p-1}{2},$ again contradicting \eqref{eq:apsx-reduced-total}.

\emph{Subcase 2b: $p<\ell<2p$.}
Put $a=2p-\ell.$ Using $\ell=\ell_F+d=2(p-k)-u+d$, we obtain $a=u+2k-d.$ Since $d\le k$, this gives $a\ge u+d$. Among all medium goods, only the $d$ deleted medium goods can be larger than the $u$ unpaired initial goods. Therefore the $a$ largest medium goods can be chosen to include all $u$ unpaired initial goods, resolving equal values arbitrarily. Every additional medium good has value below one, so $T_a\le U+(a-u)=U+2k-d.$ Together with \eqref{eq:unpaired-mass} and \eqref{eq:deleted-small}, this implies
\begin{align*}
v(S_t)+T_a
&\le Z_F+Z_D+U+2k-d\\
&<\frac{3u-1}{2}+k-\frac d2+2k-d\\
&=\frac{3(u+2k-d)-1}{2}\\
&=\frac{3a-1}{2}.
\end{align*}
This contradicts the second numerical test, \eqref{eq:apsx-singleton}.

All cases yield a contradiction. Hence every residual bundle has normalized value at least one. Scaling back proves that every original residual bundle has value at least $t$.
\end{proof}

\begin{remark}[The deletion premise is strict]
\label[remark]{rem:apsx-strict}
The theorem uses $v(D_j)<t$, exactly as required by the rejection invariant of the fixed-threshold epoch. No assertion for arbitrary deleted bundles of value equal to $t$ is needed. In particular, the weak boundaries in \eqref{eq:apsx-classes} cause no loss in the APS coefficient.
\end{remark}

\subsection{Exact polynomial computation of the maximum certified threshold}
\label{subsec:apsx-optimization}

Define $\widehat t(v,n)=\max\bigl(\{0\}\cup\{t>0:t\text{ is certified}\}\bigr).$ The next lemma proves both that this maximum is attained and that it can
be computed exactly, rather than approximated by binary search.

\begin{lemma}[An exact event sweep]
\label[lemma]{lem:apsx-optimize}
The maximum certified threshold, $\widehat t(v,n)$, is computable in $O((m+1)^2)$.  For rational input it has
polynomial encoding length, and
\begin{equation}\label{eq:apsx-floor-range}
 \beta_n\APS_{1/n}(v;M)\le\widehat t(v,n)\le\RMMS_n(v;M) \leq MMS_n(v;M).
\end{equation}
\end{lemma}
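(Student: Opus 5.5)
The plan has three parts: the two inequalities in \eqref{eq:apsx-floor-range}, then attainment of the maximum and the sweep algorithm.

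\textbf{The inequalities.} The lower bound $\beta_n\APS_{1/n}(v;M)\le\widehat t(v,n)$ is immediate from \Cref{lem:apsx-aps}. If the APS is zero, the bound holds because zero is certified by definition. For the upper bound, take any certified $t>0$. By \Cref{thm:apsx-split}, for every $k<n$ and every family of disjoint bundles $D_1,\ldots,D_k$ of value below $t$, the $\LGLB$ output on the residual is an $(n-k)$-partition with all bundles worth at least $t$. So $t$ is residual self-feasible and $t\le\RMMS_n(v;M)$. The bound $\RMMS_n\le\MMS_n$ is the case $k=0$. Attainment of the maximum will come out of the sweep below.

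\textbf{The event sweep.} Sort the distinct singleton values. As $t$ ranges over $(0,\infty)$, the triple $(H_t,M_t,S_t)$ changes only at event points $t=v(g)$ and $t=2v(g)$, so there are at most $2m$ events. Between consecutive events, and including the correct closed or open endpoint according to the weak and strict inequalities in \eqref{eq:apsx-classes}, the quantities $h$, $\ell$, $p$, $a$ and the sets are fixed.

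On such an interval the left-hand sides $v(M_t\cup S_t)$ and $v(S_t)+T_a$ are constants. The right-hand sides $\frac{3p-1}{2}t$ and $\frac{3a-1}{2}t$ are linear in $t$ with nonnegative slope. Note that $a\ge1$ whenever the second test applies, and $p\ge1$. Hence the certified set restricted to an interval is an initial segment of that interval, cut off at
\[
\min\Bigl\{\tfrac{2v(M_t\cup S_t)}{3p-1},\ \tfrac{2(v(S_t)+T_a)}{3a-1}\Bigr\}.
\]
When $h\ge n$, the whole interval is certified. The largest certified point of the interval is therefore either this cut-off, if it lies in the interval, or the right endpoint, if that endpoint is included in the interval. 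It is undefined if the set is empty.

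Because the classes use the weak inequalities $v\ge t$ and $v\ge t/2$, each interval has the form $(e_i,e_{i+1}]$. The interval is closed on the right because at $t=e_{i+1}$ the goods sit in the lower class they had just before the event. I need to check this carefully, since a good with $v(g)=t$ is high, not medium. If instead the intervals are of the form $[e_i,e_{i+1})$, the supremum could be unattained at an open right end. I expect this boundary bookkeeping to be the main obstacle: the claim that the maximum exists rests on it.

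To handle it, I will check the interval type directly. At $t=v(g)$ the good $g$ is high. For $t$ slightly larger, $g$ is medium, and the class changes to one with fewer high goods. So on $(v(g),\cdot)$ the classes are constant, and the left endpoint belongs to the previous configuration. This gives right-open, left-closed pieces in $t$-order reversed, which is the delicate point.

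A cleaner route is to note that the map from $t$ to ``certified'' is upper semicontinuous, meaning the certified set is closed. The left-hand sides are left-continuous and nonincreasing in the appropriate sense as $t$ increases, since $v(M_t\cup S_t)$ only loses goods at $t=v(g)^+$ while $p$ rises. I would verify closedness piece by piece. If an endpoint fails, I would evaluate the tests explicitly at every event point and every candidate cut-off, at most $O(m)$ candidates, and take the largest certified one.

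\textbf{Running time and encoding length.} Each candidate is checked in $O(m)$ time using prefix sums of sorted values, after an $O(m\log m)$ sort. This gives $O((m+1)^2)$ in total. The answer is either an input value, twice an input value, or a ratio $2\Sigma/(3p-1)$ with $\Sigma$ a sum of at most $m$ input rationals. So it has polynomial encoding length.
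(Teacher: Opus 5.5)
Your proposal is correct and follows essentially the paper's proof. The two bounds come from \Cref{lem:apsx-aps} and \Cref{thm:apsx-split}. Attainment, running time and encoding length come from sweeping the pieces $(e_{j-1},e_j]$ on which $H_t,M_t,S_t$ are constant and taking $\min\{e_j,R_j\}$ on each piece. Your first orientation claim is exactly the paper's observation, so you can delete the subsequent hedging and the unproved closedness detour; that detour also has the direction wrong, since $M_t\cup S_t$ gains rather than loses goods as $t$ increases. The only real difference is that the paper truncates the events at $B=v(M)/n$ after checking that every positive certified threshold is at most $B$. Your unbounded last piece, where all goods are small, is handled equally well by its finite cut-off $2v(M)/(3n-1)$.
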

\begin{proof}
Put $B=v(M)/n$.  If $B=0$, return zero.  Every positive certified threshold
is at most $B$.  Indeed, when $h\ge n$, the high goods alone have value at
least $nt$.  Otherwise $v(M_t\cup S_t)\ge pc_p t\ge pt$ and the high goods have value
at least $ht$, so again $v(M)\ge(h+p)t=nt$.

Sort the numbers in
\begin{equation}\label{eq:apsx-events}
 \mathcal E=\{0,B\}\cup
 \bigl(\{v(g),2v(g):g\in M\}\cap(0,B)\bigr)
 =\{e_0<e_1<\cdots<e_r\},
\end{equation}
and call every number an event.
There are at most $2m+2$ events.  For all $t \in  (e_{j-1},e_j]$, all
three classes in \eqref{eq:apsx-classes} are constant and equal to when $t = e_j$.  This includes the right endpoint because equality at
$t=v(g)$ keeps the good in $H_t$, and equality at $t=2v(g)$ keeps it in
$M_t$.

Compute $h,\ell,v(M_t\cup S_t),v(S_t)$ and the sorted medium-good values for that interval.
If $h\ge n$, set $R_j=+\infty$.  Otherwise put $p=n-h$ and set
\begin{equation}\label{eq:apsx-root}
 R_j=
 \begin{cases}
 \displaystyle\min\left\{\frac{2v(M_t \cup S_t)}{3p-1},
                  \frac{2(v(S_t)+T_{2p-\ell})}{3(2p-\ell)-1}\right\},
                     &p<\ell<2p,\\[1.0em]
 \displaystyle\frac{2v(M_t \cup S_t)}{3p-1},&\text{otherwise}.
 \end{cases}
\end{equation}
Exactly the thresholds $t\in(e_{j-1},e_j]$ with $t\le R_j$ are certified.
Thus this interval contributes the candidate
\begin{equation}\label{eq:apsx-candidate}
 t_j=\min\{e_j,R_j\}\quad\text{if }\min\{e_j,R_j\}>e_{j-1},
\end{equation}
and contributes no candidate otherwise.  Return the maximum of zero and
all contributed candidates.  Every nonempty feasible interval has its
largest point included, so this returns the claimed maximum.  No global
monotonicity of certification is assumed.

Sort the goods once by value.  For each of $O(m+1)$ intervals, a scan of
this order computes the classes, sums, and the required largest-good
sum in $O(m+1)$ work.  Event sorting has smaller cost.  This proves the
operation bound.  All positive candidates are event values or twice an
original subset sum divided by one of the integers $3p-1$ or $3a-1$;
the event $B$ additionally uses the denominator $n$.  Therefore their
encoding lengths are polynomial in the rational input length.

For the lower bound in \eqref{eq:apsx-floor-range}, apply
\Cref{lem:apsx-aps} when APS is positive; when it is zero the claim is
immediate.  For the upper bounds, \Cref{thm:apsx-split} shows that every positive certified $t$ is residual self-feasible in the sense of Definition~\ref{def:rmms}. Hence $t\le\RMMS_n(v;M)$. Taking $k=0$ also supplies an actual $n$-partition of minimum value at least $t$, and the definition of RMMS gives $\RMMS_n(v;M)\le\MMS_n(v;M)$. Zero satisfies the same bounds. This also proves \Cref{theorem:2/3aps bounded}.
\end{proof}

\textbf{Proof of \cref{theorem:2/3aps bounded}.} Lemmas \cref{thm:apsx-split} and \cref{lem:apsx-optimize} complete the proof.

\subsection{The strengthened joint residual certificate}

    The next theorem does not assume that the deleted bundles are below any prescribed threshold. It produces an MXS upper witness and shows that its value cannot fall below $\widehat t(v,n)$ computed above.

\begin{theorem}[Joint residual MXS--APS certificate]\label{thm:joint-certificate}
Fix one nonnegative additive valuation $v$ and pairwise disjoint bundles
$D_1,\ldots,D_k$, where $0\le k<n$. Let
\[
\mathbf{C}
=
(C_1,\ldots,C_{n-k})
=
\LGLB\left(v,M\setminus\bigcup_{j=1}^kD_j,n-k\right),
\qquad
L=\min_jv(C_j),
\]
and define
\begin{equation}\label{eq:joint-b}
 b=\max\{L,v(D_1),\ldots,v(D_k)\}.
\end{equation}
Then
\begin{equation}\label{eq:joint-conclusion}
 \max\{\MXS_n(v;M),\widehat t(v,n)\}\le b,
 \qquad
 \APS_{1/n}(v;M)\le c_n b.
\end{equation}
Moreover, the $n$-partition $\mathbf{B} = (D_1,\ldots,D_k,C_1,\ldots,C_{n-k})$ contains an explicitly identifiable EFX-feasible bundle of value exactly $b$. Hence $\mathbf{B}$ together with this distinguished bundle is an MXS upper witness of value $b$.
\end{theorem}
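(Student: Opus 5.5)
We work with the $n$-partition $\mathbf{B}=(D_1,\ldots,D_k,C_1,\ldots,C_{n-k})$ and first identify the distinguished bundle, then derive the three inequalities.

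\emph{Step 1: the witness.} Let $B^\star$ be a bundle of $\mathbf{B}$ of value exactly $b$. If $b=L$, take a minimum-value residual LGLB bundle; otherwise take a deleted bundle $D_j$ of maximum value. We claim $B^\star$ is EFX-feasible for $v$ in $\mathbf{B}$. For a deleted bundle $D_j$ we have $\vefx_v(D_j)\le v(D_j)\le b$. For a residual bundle $C_j$, \Cref{lem:LGLB-score} applied to $\LGLB(v,M\setminus\bigcup_j D_j,n-k)$ gives $\vefx_v(C_j)\le L\le b$. Hence $v(B^\star)=b\ge\max_{X\in\mathbf{B}}\vefx_v(X)$, so $(\mathbf{B},B^\star)$ is an MXS upper witness. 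By the definition \eqref{eq:mxs-def} of MXS as a minimum over such pairs, $\MXS_n(v;M)\le b$.

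\emph{Step 2: $\widehat t(v,n)\le b$.} If $\widehat t(v,n)=0$ this is trivial. Otherwise $t=\widehat t(v,n)>0$ is certified, since the maximum is attained by \Cref{lem:apsx-optimize}. Suppose for contradiction that $b<t$. Then every $D_j$ has $v(D_j)\le b<t$, so the hypothesis of \Cref{thm:apsx-split} holds with $k<n$. That theorem says every residual LGLB bundle has value at least $t$, so $L\ge t>b$, contradicting $L\le b$. One caveat is that LGLB breaks ties arbitrarily. I would take $\mathbf{C}$ to be the same run in both places; \Cref{thm:apsx-split} holds for every tie-breaking, so this causes no issue.

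\emph{Step 3: the APS bound.} From \Cref{lem:apsx-optimize}, $\beta_n\APS_{1/n}(v;M)\le\widehat t(v,n)\le b$. Multiplying by $c_n=1/\beta_n$ gives $\APS_{1/n}(v;M)\le c_n b$. The case $n=1$ is immediate, since then $k=0$ and $b=v(M)$.

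The only delicate point is Step 2. It must use that \Cref{thm:apsx-split} is stated with strict deletion premises, and that $\widehat t$ is itself certified (attained) rather than merely a supremum. Both are supplied by the earlier lemmas, so I expect no real obstacle beyond checking these hypotheses.
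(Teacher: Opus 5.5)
Your proposal is correct and follows the paper's proof essentially step for step. You bound every EFX score in $\mathbf{B}$ by $b$ via \Cref{lem:LGLB-score} to obtain the explicit witness and $\MXS_n(v;M)\le b$. You derive $\widehat t(v,n)\le b$ by contradiction with \Cref{thm:apsx-split}, using that $\widehat t(v,n)$ is an attained certified threshold. The APS bound then follows from \Cref{lem:apsx-optimize}.
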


\begin{proof}
We first prove the MXS statement.

For every deleted bundle $D_j$, $\vefx_v(D_j)\le v(D_j)\le b.$ By \Cref{lem:LGLB-score}, every residual LGLB bundle satisfies $\vefx_v(C_j)\le L\le b.$ Thus every bundle in $\mathbf{B}$ has EFX score at most $b$.

If some deleted bundle satisfies $v(D_j)=b,$ choose that bundle as the distinguished bundle. Otherwise, by the definition of $b$, we must have $b=L$; choose any residual bundle $C_r$ with $v(C_r)=L=b.$ In either case, the distinguished bundle has value exactly $b$ and dominates the EFX score of every bundle in $\mathbf{B}$. It is therefore EFX-feasible in $\mathbf{B}$. By the definition of MXS, $\MXS_n(v;M)\le b.$ This also proves the explicit-witness assertion.

It remains to prove that $b\ge\widehat t(v,n)$. If $\widehat t(v,n)=0$, this is immediate. Otherwise, suppose that $b<\widehat t(v,n).$ Then every deleted bundle has value below $\widehat t(v,n)$, and so does the minimum residual LGLB bundle. This contradicts \Cref{thm:apsx-split}, because $\widehat t(v,n)$ is certified by \Cref{lem:apsx-optimize}.

Finally, \Cref{lem:apsx-optimize} gives $\beta_n\APS_{1/n}(v;M)\le\widehat t(v,n)\le b.$ Since $\beta_n=1/c_n$, this implies $\APS_{1/n}(v;M)\le c_nb$ and proves the joint statement.
\end{proof}

\begin{corollary}[Initialization interval]\label[corollary]{cor:initial-range}
For agent $i$, let
\[
t_i^*=\widehat t(v_i,n),\qquad
\mathbf C^{(0)}=(C_1^{(0)},\ldots,C_n^{(0)})=\LGLB(v_i,M,n),
\]
and let $b_i^{(0)}=\min_{j\in[n]}v_i(C_j^{(0)}).$ Then
\begin{equation}\label{eq:initial-range}
\max\{\xi_i,t_i^*\}
\le b_i^{(0)}\le\mu_i\le\tau_i,
\qquad t_i^*\ge\beta_n\tau_i.
\end{equation}
In particular, $b_i^{(0)}=0\quad\Longleftrightarrow\quad\tau_i=0.$ \end{corollary}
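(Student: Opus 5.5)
The plan is to read off every part of the corollary from the joint certificate with no deletions, together with the standard share hierarchy. Apply \Cref{thm:joint-certificate} with $k=0$: there are no deleted bundles, the residual partition is $\mathbf C^{(0)}=\LGLB(v_i,M,n)$, and $b=L=b_i^{(0)}$. The first conclusion of that theorem gives $\max\{\xi_i,t_i^*\}\le b_i^{(0)}$ directly. The bound $t_i^*\ge\beta_n\tau_i$ is exactly the lower bound in \eqref{eq:apsx-floor-range} of \Cref{lem:apsx-optimize}, applied with $v=v_i$.

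For the upper chain, $\mathbf C^{(0)}$ is itself an $n$-partition of $M$ whose minimum bundle value is $b_i^{(0)}$. By the definition of $\MMS_n$ as a maximum over $n$-partitions, this gives $b_i^{(0)}\le\mu_i$. The inequality $\mu_i\le\tau_i$ was already shown via the integral reference family given by an MMS partition, or equivalently the price argument in the introduction.

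For the equivalence, first suppose $\tau_i=0$. Then $0\le b_i^{(0)}\le\tau_i=0$ forces $b_i^{(0)}=0$. Conversely, suppose $b_i^{(0)}=0$. The second conclusion of \Cref{thm:joint-certificate}, $\APS_{1/n}(v_i;M)\le c_n b$, gives $\tau_i\le c_n\cdot 0=0$, and since $\tau_i\ge0$ we get $\tau_i=0$. The case $n=1$ is handled by the convention $\widehat t(v,1)=v(M)$: LGLB gives the single bundle $M$, and all quantities equal $v_i(M)$.

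None of these steps looks like a real obstacle. The only point needing care is that \Cref{thm:joint-certificate} is invoked with $k=0$, which is allowed since $0\le k<n$, and that its $b$ then coincides with $b_i^{(0)}$.
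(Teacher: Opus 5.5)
Your proposal is correct and follows essentially the same route as the paper. It uses the $k=0$ case of \Cref{thm:joint-certificate}, the lower bound in \eqref{eq:apsx-floor-range}, and the fact that the $\LGLB$ partition is one particular $n$-partition. The only cosmetic difference is in the direction $b_i^{(0)}=0\Rightarrow\tau_i=0$: you cite the theorem's conclusion $\APS_{1/n}(v_i;M)\le c_n b$, while the paper writes out the chain $0\le\beta_n\tau_i\le t_i^*\le b_i^{(0)}=0$ from which that conclusion is derived.
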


\begin{proof}
Apply \Cref{thm:joint-certificate} with $k=0$. In this case $b=b_i^{(0)}$, so the theorem gives $\max\{\xi_i,t_i^*\}\le b_i^{(0)}.$ The lower bound $t_i^*\ge\beta_n\tau_i$ is \eqref{eq:apsx-floor-range}. The LGLB partition is one particular $n$-partition of $M$, so its minimum bundle value cannot exceed MMS: $b_i^{(0)}\le\MMS_n(v_i;M)=\mu_i.$ Finally, \Cref{sec:prelim} gives $\mu_i\le\tau_i$, proving \eqref{eq:initial-range}.

If $b_i^{(0)}=0$, then $0\le\beta_n\tau_i\le t_i^*\le b_i^{(0)}=0$. Since $\beta_n>0$, this implies $\tau_i=0$. Conversely, if $\tau_i=0$, then $0\le b_i^{(0)}\le\mu_i\le\tau_i=0$, so $b_i^{(0)}=0$.
\end{proof}

\begin{remark}[The two certificates serve different purposes]
\label[remark]{rem:apsx-scope}
The fixed floor $t_i^*$ is residual self-feasible and satisfies $t_i^*\ge\beta_n\tau_i$, but it need not dominate MXS. The value $b_i$ has an explicit MXS upper witness and remains above $t_i^*$, but it is not asserted to be residual self-feasible. Keeping these roles separate allows \Cref{sec:aps-algorithms} to retain the fixed APS floor while lowering MXS witnesses only when an epoch supplies a new one.
\end{remark}

\section{Allocation at fixed thresholds}\label{sec:engine}

The purpose of this section is to develop the common allocation engine used by our later algorithms. Fix a threshold $t_i\ge0$ for every agent $i$, together with a partition routine $\mathsf{Divide}$. Starting from the empty allocation, the engine repeatedly performs divide, trim, repair, and matching steps until it either assigns every agent a bundle meeting her threshold or finds a partition whose minimum value is below the divider's threshold.

We call one complete execution of this process, from its initialization until it returns either success or failure, an \emph{epoch}. An epoch starts from the partial allocation
\[
\mathbf{A}=\langle A_1,\ldots,A_n\rangle
\quad
\text{with}
\quad
A_i=\varnothing
\qquad
\text{for every }i\in N,
\]
free pool $F=M$, assigned-agent set $H=\{i\in N:t_i=0\},$ and unassigned-agent set $P=N\setminus H.$ Thus the agents with zero threshold are regarded as assigned from the beginning, while every agent in $P$ has a positive threshold. Call a bundle $B\subseteq F$ \emph{desired} if $v_i(B)\ge t_i$ for at least one unassigned agent $i\in P$.

Throughout an epoch we maintain the following three invariants:
\begin{equation}\label{eq:engine-own}
v_i(A_i)\ge t_i
\qquad
\text{for every }i\in H.
\end{equation}
\begin{equation}\label{eq:engine-reject}
v_i(A_j)<t_i
\qquad
\text{for every }i\in P
\text{ and }j\in H.
\end{equation}
\begin{equation}\label{eq:engine-efl}
\mathbf{A}\text{ is EFL among the agents in }H.
\end{equation}

We refer to \eqref{eq:engine-reject} as the \emph{rejection invariant}. It says that every currently unassigned agent values every currently assigned bundle strictly below her own threshold. This strict inequality will later ensure that a failed-division update strictly decreases the divider's threshold.

The structure of an epoch is as follows. In each round, an unassigned agent divides the currently free goods into as many bundles as there are unassigned agents. If the divider values some bundle below her threshold, the epoch stops and returns this residual partition as a failure certificate. Otherwise, each bundle is trimmed to an inclusion-minimal bundle desired by some unassigned agent.

At that point, if some assigned agent EFL-envies one of the candidate bundles with respect to her current bundle, \Cref{alg:repair} replaces one assigned bundle by a strictly better proper subset of that candidate. \Cref{lem:repair} proves that this operation preserves all three invariants, and \Cref{lem:breakpoints} shows that only polynomially many repairs can occur. If no assigned agent EFL-envies any candidate, \Cref{lem:matching} finds a nonempty matching of candidates to unassigned agents that preserves the rejection invariant, and \Cref{lem:admission} proves that allocating these matched candidates preserves EFL.

The complete procedure is given in \Cref{alg:epoch}, and its guarantee is summarized in \Cref{prop:epoch}. Finally, when all agents are assigned, \Cref{lem:completion} shows how to allocate all remaining free goods without decreasing any agent's value or destroying EFL.

\subsection{Minimal desired candidates}
At the start of an epoch, 
choose any divider $d\in P$. Note that an epoch gets an algorithm $\mathsf{Divide}$ as input. 
The partition routine returns a $|P|$-partition $\mathbf{C}=(C_1,\ldots,C_{|P|}) = \mathsf{Divide}(v_d,F,|P|)$ of the current free pool $F$. Define $L=\min_{j\in[|P|]}v_d(C_j).$ If $L<t_d$, the epoch terminates and reports the divider $d$, the current allocation $\mathbf{A}$, the residual partition $\mathbf{C}$, and its minimum value $L$. Later sections will use this failed division to update the divider's threshold.

Suppose instead that $L\ge t_d$. Then every bundle $C_j$ is desired by the divider. 

Starting from each $C_j$, repeatedly delete goods as long as the remaining bundle is still desired. Denote the resulting bundle by $B_j$. Thus $B_j$ is an \emph{inclusion-minimal desired candidate}: it is desired by at least one unassigned agent, but none of its proper subsets is desired by any unassigned agent.

Only one scan through the goods is needed to obtain such a candidate. Indeed, suppose that at some stage the current bundle is $B$ and deleting a good $g\in B$ makes $B-g$ undesired. Then $v_i(B-g)<t_i \qquad \text{for every }i\in P.$ If further goods are subsequently deleted while $g$ remains, the current bundle becomes some $B'\subseteq B$ with $g\in B'$. Then $B'-g\subseteq B-g$. 
By monotonicity, $v_i(B'-g)\le v_i(B-g)<t_i \qquad \text{for every }i\in P.$ Thus deleting $g$ can never become possible later, so there is no need to reconsider it.

Consequently, every candidate $B_j$ satisfies
\begin{equation}\label{eq:minimal-candidate}
v_i(S)<t_i
\qquad
\text{for every }i\in P
\text{ and every }S\subsetneq B_j.
\end{equation}

Every candidate is nonempty because all agents in $P$ have positive thresholds. The candidates remain pairwise disjoint because they are obtained by deleting goods from the bundles of the partition $\mathbf{C}$. Every good removed during trimming remains in the free pool. Notice also that, although the original bundle $C_j$ is desired by the divider, the final candidate $B_j$ need not be: another unassigned agent may be the one whose threshold keeps $B_j$ desired.

\subsection{An EFL-preserving repair with polynomially many utility levels}

Suppose that some assigned agent $i\in H$ EFL-envies a candidate bundle $B$ with respect to her current bundle $A_i$. By definition, this means $v_i(A_i)<\vefl_i(B).$ The repair procedure will replace one currently assigned bundle by a proper subset of $B$.

The requirement that the replacement be a \emph{proper} subset is important. By \eqref{eq:minimal-candidate}, every unassigned agent values every proper subset of $B$ strictly below her threshold. Hence giving such a subset to an already assigned agent preserves the rejection invariant.

\begin{algorithm}[t]
\caption{$\Repair(B,H,\mathbf{A})$: repair an EFL-envied candidate}\label{alg:repair}
\begin{algorithmic}[1]
\Require A candidate bundle $B$, the current assigned-agent set $H$, and the current partial allocation $\mathbf{A}=\langle A_1,\ldots,A_n\rangle$ such that some $i\in H$ EFL-envies $B$ with respect to $A_i$
\State Set $u_i\gets v_i(A_i)$ for every $i\in H$
\If{some $i\in H$ and $g\in B$ satisfy $v_i(g)>u_i$}
    \State \Return $(i,\{g\})$
\EndIf
\State $S\gets B$; $a\gets\bot$
\While{some $i\in H$ satisfies $\vefone_i(S)>u_i$}
    \State Choose such an $i$ and an $i$-minimum good $g\in S$
    \State $S\gets S-g$; $a\gets i$
\EndWhile
\State \Return $(a,S)$
\end{algorithmic}
\end{algorithm}

The next lemma shows that the returned bundle strictly improves its recipient and is not EFL-envied by any assigned agent relative to her utility before the repair.

\begin{lemma}[Correctness of EFL repair]\label{lem:repair}
Algorithm~\ref{alg:repair} returns an assigned recipient $a\in H$ and a proper subset $S\subsetneq B$ such that $v_a(S)>v_a(A_a),$ and, for every $i\in H$, $\vefl_i(S)\le v_i(A_i),$ where the right-hand side is the utility of agent $i$ before the repair. Equivalently, no assigned agent EFL-envies $S$ with respect to her old bundle.

Replacing $A_a$ by $S$ and releasing the old bundle $A_a$ preserves \eqref{eq:engine-own}--\eqref{eq:engine-efl}.
\end{lemma}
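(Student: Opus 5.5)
The plan is to split according to which branch of \Cref{alg:repair} returns. In each branch I check three things: the returned set is a proper subset of $B$, the recipient strictly improves, and no assigned agent EFL-envies the returned set relative to her old utility $u_i=v_i(A_i)$. After that, the three invariants follow from where the new bundle comes from.

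\emph{Singleton branch.} Suppose some $i\in H$ and $g\in B$ satisfy $v_i(g)>u_i$, so the algorithm returns $(i,\{g\})$. The recipient improves by construction. For properness I argue that $|B|\ge2$. Indeed, if $|B|\le1$ then $\vefl_i(B)=0$, and since utilities are nonnegative no agent could EFL-envy $B$. This contradicts the precondition, so $\{g\}\subsetneq B$. Finally, a singleton has EFL score $0\le u_j$ for every $j\in H$, so no assigned agent EFL-envies it.

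\emph{Loop branch.} Here every $j\in H$ and every $g\in B$ satisfy $v_j(g)\le u_j$.
\begin{itemize}
\item[(i)] The loop runs at least once. Take the agent $i$ from the precondition, so $u_i<\vefl_i(B)=\min_g\max\{v_i(g),v_i(B-g)\}$. Because $v_i(g)\le u_i$ for all $g\in B$, this forces $v_i(B-g)>u_i$ for every $g$. Choosing $g$ to be a most valuable good gives $\vefone_i(B)>u_i$. Hence the loop fires, $a\ne\bot$, and $S\subsetneq B$.
\item[(ii)] The recipient improves. Consider the last iteration, which removes an $a$-minimum good $g$ from a set $S'$ with $\vefone_a(S')>u_a$. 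Let $g_{\max}$ be a most valuable good of $S'$ for $a$. Then $v_a(S)=v_a(S'-g)\ge v_a(S'-g_{\max})=\vefone_a(S')>u_a$.
\item[(iii)] No EFL-envy toward $S$. At termination, $\vefone_j(S)\le u_j$ for every $j\in H$. If $|S|\le1$ the EFL score is $0$. Otherwise, plugging the most valuable good $g^*$ of $S$ into the definition gives $\vefl_j(S)\le\max\{v_j(g^*),\vefone_j(S)\}\le u_j$, using the branch condition for the first term.
\end{itemize}

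\emph{Invariants.} We replace $A_a$ by $S$ and return $A_a$ to the free pool. The new allocation stays disjoint because $S\subseteq B\subseteq F$.
\begin{itemize}
\item[\eqref{eq:engine-own}] Only agent $a$'s value changes, and it increases.
\item[\eqref{eq:engine-reject}] $S$ is a proper subset of a minimal desired candidate, so \eqref{eq:minimal-candidate} gives $v_i(S)<t_i$ for every $i\in P$. The other assigned bundles are unchanged, and $P$ and $H$ are unchanged.
\item[\eqref{eq:engine-efl}] There are three kinds of comparisons.
 \begin{itemize}
 \item[--] An agent $j\neq a$ looking at $S$ is fine by the EFL bound just proved, since $j$'s bundle and utility are unchanged.
 \item[--] Agent $a$ looking at the other bundles: her utility went up, while those bundles' scores did not change.
 \item[--] Comparisons between two agents other than $a$ are untouched.
 \end{itemize}
 The released bundle $A_a$ no longer belongs to anyone, so no one needs to be compared against it.
\end{itemize}

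The only delicate step is (i), showing that the loop actually fires in the second branch. This is where the EFL precondition is converted into an EF1 violation, and it uses exactly the failure of the singleton test.
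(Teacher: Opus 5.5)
Your proof is correct and follows the paper's route: the same two-branch case split, the same final-iteration argument $u_a<v_a(S'-g_{\max})\le v_a(S'-g)=v_a(S)$ for the recipient's strict gain, and the same invariant checks using \eqref{eq:minimal-candidate}. The only difference is presentational: the paper states your steps (i) and (iii) as a single equivalence $\vefl_i(S)\le u_i\iff\vefone_i(S)\le u_i$ under the branch condition, whereas you use each direction separately where it is needed.
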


\begin{proof}
For each assigned agent $i\in H$, let $u_i=v_i(A_i)$ denote her utility at the beginning of the repair call. These values are fixed throughout the procedure.

We consider the two branches separately.

\medskip
\noindent
\textbf{Singleton branch.}
Suppose there are an agent $i\in H$ and a good $g\in B$ such that $v_i(g)>u_i$. 
The algorithm returns $(i,\{g\})$.

Since some assigned agents EFL-envies the input candidate we get $|B|\ge2$, the singleton $\{g\}$ is a proper subset of $B$. Moreover, $v_i(g)>u_i=v_i(A_i),$ so the recipient strictly improves.

For every assigned agent $j$, $\vefl_j(\{g\})=0.$ Since $u_j\ge0$, $\vefl_j(\{g\})\le u_j.$ Thus no assigned agent EFL-envies the singleton $\{g\}$ with respect to her old bundle.

\medskip
\noindent
\textbf{Deletion branch.}
Suppose the singleton branch does not apply. Then $v_i(g)\le u_i \qquad \text{for every }i\in H \text{ and every }g\in B.$ In particular, this inequality holds for every good in every subset $S\subseteq B$ encountered by the while-loop.

Fix an assigned agent $i$ and such a subset $S$. We claim that $\vefl_i(S)\le u_i \quad\Longleftrightarrow\quad \vefone_i(S)\le u_i.$ To see this, recall that $\vefl_i(S) = \min_{g\in S} \max\{v_i(g),v_i(S-g)\}.$ Because $v_i(g)\le u_i \qquad \text{for every }g\in S,$ the inequality $\vefl_i(S)\le u_i$ holds exactly when there exists a good $g\in S$ such that $v_i(S-g)\le u_i.$ Among all goods $g\in S$, the quantity $v_i(S-g)$ is minimized by deleting an $i$-maximum good. Therefore $\min_{g\in S}v_i(S-g) = v_i(S)-\max_{g\in S}v_i(g) = \vefone_i(S).$ Hence $\vefl_i(S)\le u_i \quad\Longleftrightarrow\quad \vefone_i(S)\le u_i,$ as claimed.

The input candidate $B$ is EFL-envied by some assigned agent. Therefore, for some $i\in H$, $\vefl_i(B)>u_i.$ By the equivalence above, $\vefone_i(B)>u_i.$ Thus the while-loop executes at least once, so its output $S$ is a proper subset of $B$.

The loop must eventually terminate. Indeed, once $S$ is a singleton, $\vefone_i(S)=0 \qquad \text{for every }i\in H,$ so the loop condition cannot hold. Therefore at most $|B|-1$ deletions occur.

When the loop terminates, $\vefone_i(S)\le u_i \qquad \text{for every }i\in H.$ Using the equivalence above once more, $\vefl_i(S)\le u_i \qquad \text{for every }i\in H.$ Hence no assigned agent EFL-envies the returned bundle $S$ with respect to her old bundle.

It remains to show that the recipient $a$'s value strictly improves.

Consider the final iteration of the while-loop. Let $T$ be the bundle immediately before the good is deleted in that iteration. Agent $a$ is the agent whose EF1 inequality is violated at that moment. Under $v_a$, define $V=v_a(T), \qquad x=\min_{g\in T}v_a(g), \qquad z=\max_{g\in T}v_a(g).$ Because agent $a$ triggers the loop, $\vefone_a(T)>u_a.$ Since $\vefone_a(T)=V-z,$ we obtain $u_a<V-z.$ The algorithm deletes an $a$-minimum good, whose value is $x$. Because this is the final iteration, the resulting bundle is precisely the returned bundle $S$, and therefore $v_a(S)=V-x.$ Since $x\le z$,
\begin{equation}\label{eq:repair-gain}
u_a<V-z\le V-x=v_a(S).
\end{equation}
Thus the recipient $a$ strictly improves.

We finally verify the three epoch invariants.

First, agent $a$ met her threshold before the repair and strictly increases her utility, so $v_a(S)>v_a(A_a)\ge t_a.$ Every other assigned agent keeps her own bundle. Hence \eqref{eq:engine-own} is preserved.

Second, for every previously assigned agent $i$, $\vefl_i(S)\le u_i=v_i(A_i)$ before the replacement, and the recipient's own utility only increases. Thus no assigned agent EFL-envies the new assigned bundle $S$. The old bundle $A_a$ is released and therefore ceases to be an assigned rival bundle. All other assigned bundles are unchanged. Hence \eqref{eq:engine-efl} is preserved.

Third, because $S\subsetneq B$, the minimality property \eqref{eq:minimal-candidate} gives $v_i(S)<t_i \qquad \text{for every }i\in P.$ All other assigned bundles remain unchanged, so \eqref{eq:engine-reject} is preserved.
\end{proof}

The previous lemma guarantees that every repair strictly increases the recipient's utility. This alone does not imply a polynomial bound on the number of repairs, since the utility increase could be arbitrarily small. The next lemma shows that each repair crosses one of only polynomially many (with respect to $m$ and independently of the numerical magnitudes or gaps between values) fixed utility levels.

\begin{lemma}[Fixed-breakpoint bound]\label{lem:breakpoints}
There are $O(nm^4)$ repairs in any epoch.
\end{lemma}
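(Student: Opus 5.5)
The plan is a potential argument: each repair must raise its recipient's utility past one of $O(m^4)$ fixed utility levels that depend only on $v_a$ and $M$, not on the run. Two facts from the engine drive it. First, within an epoch an assigned agent's bundle changes only when she is the recipient of a repair: matching admits only unassigned agents, and a repair releases only the recipient's old bundle. By \Cref{lem:repair} each such change strictly increases her utility, so $v_a(A_a)$ is nondecreasing in time for every $a\in H$. Second, agents are never unassigned during an epoch. Hence if every repair with recipient $a$ yields a level $\lambda\in\Lambda_a$, with $|\Lambda_a|=O(m^4)$ and $u_a<\lambda\le v_a(S)$, then each level is crossed at most once. Agent $a$ then receives at most $|\Lambda_a|$ repairs, and summing over the $n$ agents gives $O(nm^4)$.

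The singleton branch is immediate: the new utility is $v_a(g)$, a level of the form "value of one good". For the deletion branch, \eqref{eq:repair-gain} gives
\[
u_a<V-z\le V-x=v_a(S),
\]
where $T=S+x$ is the last bundle before deletion and $z$ is its $v_a$-maximum good. I will not use $v_a(S)$ itself as the level, since it is an arbitrary subset sum. Instead I want a canonical quantity sandwiched in $(u_a,v_a(S)]$ and described by at most four goods.

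I would try the following canonical form. Every good of $B$ has $v_a$-value at most $u_a$, since the singleton branch did not fire. The while-loop deletes minimum goods, so $S$ is an upper set of $T$ in some agent's order. I would then replace $S$ by the smallest prefix of the $v_a$-sorted order of $M$ that dominates it. Concretely, I would define $\Lambda_a$ as the set of values of sets of the form "all goods of $v_a$-rank in an interval $[r_1,r_2]$" together with up to two extra goods. That is $O(m^2)\cdot O(m^2)=O(m^4)$ candidates. I would then aim to exhibit one such value between $V-z$ and $V-x$ using an exchange argument: goods of $T$ can be swapped for goods of equal-or-larger rank without leaving $(u_a,v_a(S)]$, because the window has width $z-x$ and every good is worth at most $u_a$.

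The main obstacle is exactly this middle step. $S$ is a proper subset of a trimmed, arbitrary candidate, and interval-plus-two sets need not approximate a general subset sum within the window $(V-z,\,V-x]$. If the direct exchange fails, my fallback is to exploit more structure. $B$ arises from an $\LGLB$ bundle of the divider and a single monotone trimming scan, and the deletions inside \Cref{alg:repair} remove minimum goods for the triggering agents. I would then try to show that along the loop the relevant comparison quantities ($\vefone$ values of the current set) change only at events indexed by $O(m^4)$ pairs of goods. This would bound the number of distinct values $V-z$ can take for each $a$ by a fixed set, and those values serve directly as the levels.
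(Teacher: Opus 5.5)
\textbf{Verdict: genuine gap.} Your outer framework is correct and matches the paper. Assigned agents' utilities are monotone within an epoch, and each repair must strictly cross a run-independent level $\lambda$ with $u_a<\lambda\le v_a(S)$. The singleton branch crosses $v_a(g)$. The deletion branch reduces to finding such a level in the window $[V-z,V-x]$.

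That last step is the entire content of the lemma, and you do not supply it. You propose a candidate family and concede that you cannot show it meets the window. Your fallback does not work as described. $V-z$ is the value of an essentially arbitrary $k$-subset of $M$, so it does not range over any fixed polynomial-size set, and the levels cannot be the values $V-z$ themselves. Two further remarks point the wrong way:
\begin{itemize}
\item The fact that every good is worth at most $u_a$ plays no role here.
\item Prefixes of the full sorted order of $M$ are the wrong objects, since they include goods worth more than $z$.
\end{itemize}

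\textbf{The missing idea.} Use only $k=|T|-1$ and the two extreme values $x\le z$ of $T$. Every good of $T$ lies in $[x,z]$, so $V-z\ge kx$ and $V-x\le kz$. Thus the window $[V-z,V-x]$ has length exactly $z-x$ and sits inside $[kx,kz]$. The progression $(k-j)x+jz$, for $j=0,\dots,k$, has spacing $z-x$ and spans $[kx,kz]$, so some point of it lies in the window. Every such point has the form $r\,v_a(g)+s\,v_a(h)$ with $r+s\le m-1$. These values form a fixed family of size $O(m^4)$, which is exactly the paper's breakpoint set $\mathcal Q_a$.

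\textbf{Alternative fix using your own family.} Your rank-interval idea can also be rescued, and it even gives $O(m^2)$ levels per agent. Fix one sorted order of $M$ by $v_a$. The goods with $v_a$-value in $[x,z]$ form a contiguous block of this order, and the block contains $T$. The sum of the block's top $k$ goods is at least $V-x$, and the sum of its bottom $k$ goods is at most $V-z$. Slide a window of $k$ consecutive goods down the block. Each step changes the window sum by at most $z-x$, so some window sum lies in $[V-z,V-x]$. Either argument closes the gap; as written, your proposal does not.
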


\begin{proof}
Fix an agent $i$. Define
\[
\mathcal Q_i
=
\{r\,v_i(g)+s\,v_i(h):
g,h\in M,\ 
r,s\in\mathbb Z_{\ge0},
r+s\le m-1\}
\cup
\{v_i(g):g\in M\}.
\]
The set $\mathcal Q_i$ is used only in the analysis; the algorithm never needs to construct or enumerate it.

There are $m^2$ choices for the ordered pair $(g,h)$. There are $O(m^2)$ nonnegative integer pairs $(r,s)$ satisfying $r+s\le m-1$. 
Therefore $|\mathcal Q_i|=O(m^4).$ We show that every time agent $i$ receives a repaired bundle, her utility crosses a breakpoint in $\mathcal Q_i$ that she has never crossed before.

First suppose the repair returns a singleton $\{g\}$. By the condition of the singleton branch, $v_i(A_i)<v_i(g)$. Since $v_i(g)\in\mathcal Q_i$, 
the repair raises agent $i$'s utility from strictly below this breakpoint to the breakpoint itself.

Now consider a repair produced by the deletion branch. Let $T$ be the bundle immediately before the final deletion, and suppose $T$ contains $k+1$ goods. Define $V=v_i(T), \qquad x=\min_{g\in T}v_i(g), \qquad z=\max_{g\in T}v_i(g).$ Because $T$ violates the EF1 inequality for agent $i$, her utility before the repair satisfies $v_i(A_i)<V-z$. 
The algorithm deletes an $i$-minimum good, so her utility after the repair is $V-x$. 
Hence it suffices to show that the interval $I=[V-z,V-x]$ contains a point of $\mathcal Q_i$.

The lower endpoint $V-z$ is the value of the $k$ goods left after deleting an $i$-maximum good. Every one of these goods has value at least $x$, so $V-z\ge kx.$ Similarly, $V-x$ is the value of the $k$ goods left after deleting an $i$-minimum good. Every one of them has value at most $z$, so $V-x\le kz.$ Therefore $I\subseteq[kx,kz].$ Moreover, $(V-x)-(V-z)=z-x,$ so the length of $I$ is exactly $z-x$.

If $x=z$, then $V-z=V-x=kx,$ and hence $I=\{kx\}.$ The point $kx$ belongs to $\mathcal Q_i$.

Suppose now that $x<z$. Consider the grid $kx,\ kx+(z-x), kx+2(z-x), \ldots, kx+k(z-x)=kz.$ The distance between two consecutive grid points is exactly $z-x$, the length of $I$. Since $I\subseteq[kx,kz]$, the interval $I$ must contain at least one grid point.

More explicitly, let $\zeta$ be the smallest grid point satisfying $\zeta\ge V-z.$ Then $\zeta\le(V-z)+(z-x)=V-x,$ so $\zeta\in I.$ Every grid point has the form $(k-j)x+jz$ for some integer $0\le j\le k$. Since $k\le m-1$, this point belongs to $\mathcal Q_i$.

Thus every non-singleton repair of agent $i$ crosses some $\zeta\in\mathcal Q_i$ satisfying $v_i(A_i)<\zeta\le v_i(S).$ During an epoch, an assigned agent never becomes unassigned, and every repair only increases the recipient's utility. Consequently, once agent $i$ has reached or passed a breakpoint $\zeta$, she can never cross $\zeta$ again: another crossing would require her utility immediately before a later repair to be strictly below $\zeta$.

Hence agent $i$ can receive at most $|\mathcal Q_i|=O(m^4)$ repairs. Summing over all $n$ agents gives $O(nm^4)$ repairs in one epoch.
\end{proof}

\subsection{A matching that preserves rejection}

Suppose now that no assigned agent EFL-envies any candidate bundle with respect to her current bundle. We then try to allocate some of the candidates to unassigned agents.

Construct a bipartite graph whose left vertices are the unassigned agents in $P$ and whose right vertices are the $|P|$ candidate bundles. Put an edge between agent $i\in P$ and candidate $B_j$ exactly when $v_i(B_j)\ge t_i.$ Thus $i$ is adjacent to $B_j$ exactly when $B_j$ meets agent $i$'s threshold. Every right vertex has at least one neighbor because every candidate is desired by some unassigned agent.

We need more than an arbitrary nonempty matching. If a candidate is allocated, every agent who remains unassigned must value that candidate strictly below her threshold; otherwise the rejection invariant would fail. The next lemma provides a matching with exactly this property.
\begin{lemma}[Nonempty rejection-preserving matching]\label{lem:matching}
A bipartite graph with the same number of vertices in each side with at least one vertex per side and no isolated right vertex has a nonempty matching such that no unmatched left vertex has an edge to a matched right vertex. It can be found in polynomial time.
\end{lemma}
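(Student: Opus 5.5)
This is the standard envy-cycle / rejection-preserving matching lemma, and I would prove it by an alternating-path argument around a maximum matching. Let the left side be $P$ (agents) and the right side $R$ (candidates), with $|P|=|R|\ge1$ and every right vertex having a neighbor. Fix a maximum matching $\mathcal M$, which is nonempty because some right vertex has an edge. Let $U\subseteq P$ be the left vertices unmatched by $\mathcal M$. If $U=\varnothing$, then $\mathcal M$ is perfect and trivially satisfies the requirement, so assume $U\neq\varnothing$.

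\textbf{Step 1: find the bad part.} Let $Z$ be the set of vertices reachable from $U$ by $\mathcal M$-alternating paths. These paths start at $U$, leave the left side along non-matching edges, and return to the left side along matching edges. Set $Z_P=Z\cap P$ and $Z_R=Z\cap R$. Since $\mathcal M$ is maximum, there is no augmenting path, so every vertex of $Z_R$ is matched, and its partner lies in $Z_P$. Hence $\mathcal M$ matches $Z_R$ bijectively onto $Z_P\setminus U$. By construction, every neighbor of a vertex in $Z_P$ lies in $Z_R$, so $N(Z_P)=Z_R$.

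\textbf{Step 2: restrict the matching.} Let $\mathcal M'=\mathcal M$ minus all edges incident to $Z_R$. Its unmatched left vertices are exactly $Z_P$. The neighbors of $Z_P$ lie in $Z_R$, and those are unmatched in $\mathcal M'$. So no unmatched left vertex has an edge to a matched right vertex, which is the required property.

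\textbf{Step 3: nonemptiness.} This is the step needing care. We have $|Z_R|=|Z_P|-|U|<|Z_P|\le|P|=|R|$, so some right vertex $r$ lies outside $Z_R$. Such an $r$ has a neighbor $p$, since right vertices are non-isolated. I claim $r$ is matched in $\mathcal M$. If it were not, then $p\notin Z_P$, because otherwise $r\in N(Z_P)=Z_R$. So $p$ is matched in $\mathcal M$, and $\mathcal M-(p,p')+(p,r)$ is another maximum matching. This alone gives no contradiction, so I would choose $\mathcal M$ more cleverly. The more direct route is that $\mathcal M'$ is nonempty iff $\mathcal M$ matches some right vertex outside $Z_R$. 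The number of $\mathcal M$-matched right vertices is $|P|-|U|$, while $|Z_R|=|Z_P|-|U|$. So $\mathcal M'$ is nonempty iff $Z_P\neq P$. Suppose $Z_P=P$. Then $N(P)=Z_R$ has size $|P|-|U|<|R|$, so some right vertex has no neighbor at all, contradicting non-isolation. Thus $\mathcal M'$ is nonempty, which settles Step 3 cleanly.

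\textbf{Running time.} A maximum matching is computed by Hopcroft--Karp, and $Z$ by a BFS along alternating paths, both in polynomial time. The only delicate point is the counting in Step 3, which uses $|P|=|R|$ together with the absence of isolated right vertices.
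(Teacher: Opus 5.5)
Your proof is correct and follows the paper's argument. Both take a maximum matching, do an alternating search from the unmatched left vertices, and keep only the matching edges whose left endpoint is not reached. Your nonemptiness count (if $Z_P=P$ then $N(P)=Z_R$ has only $|P|-|U|<|R|$ vertices) is a repackaging of the paper's observation that an unmatched right vertex adjacent to a reached left vertex would give an augmenting path. Delete the abandoned first attempt in Step 3: an arbitrary $r\notin Z_R$ need not be matched, since unmatched right vertices always lie outside $Z_R$. The counting argument alone suffices.
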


\begin{proof}
    Let $\mathcal M$ be a maximum-cardinality matching.
    
    If $\mathcal M$ is perfect, return $\mathcal M$. In this case there are no unmatched left vertices, so the required property holds vacuously.
    
    Assume therefore that $\mathcal M$ is not perfect. 
    Start an alternating search simultaneously from all unmatched left vertices. From a left vertex, traverse only edges that do not belong to $\mathcal M$. From a right vertex, traverse its unique matching edge in $\mathcal M$, if it has one. Let $X$ be the set of reached left vertices and $Y$ the set of reached right vertices.
    
    No unmatched right vertex can belong to $Y$. Indeed, if an unmatched right vertex were reached, the alternating path from an unmatched left starting vertex to that right vertex would be an augmenting path for $\mathcal M$, contradicting the maximality of $\mathcal M$.
    
    Now retain exactly those edges of $\mathcal M$ whose left endpoint lies outside $X$. Denote the resulting matching by $\mathcal M'$.
    
    We first identify its unmatched left vertices.
    
    Every left vertex that is unmatched in $\mathcal M$ belongs to $X$ by construction, since these are precisely the starting vertices of the search. Therefore every left vertex outside $X$ is matched in $\mathcal M$.
    
    If a left vertex $\ell\notin X$ is matched in $\mathcal M$ to a right vertex $r$, then $r\notin Y$. Otherwise, the alternating search would traverse the matching edge $r\ell$ and would reach $\ell$, contradicting $\ell\notin X$. Hence the matching edge incident to every left vertex outside $X$ is retained in $\mathcal M'$.
    
    On the other hand, every matching edge whose left endpoint lies in $X$ is deleted. Thus the unmatched left vertices of $\mathcal M'$ are exactly the vertices in $X$.
    
    We next prove the rejection property. Suppose, toward a contradiction, that some unmatched left vertex $\ell\in X$
    has an edge to a right vertex $r$ that is matched by $\mathcal M'$. 
    
    Because $r$ is matched in $\mathcal M'$, its matched left endpoint lies outside $X$, and therefore $r\notin Y$.

    Since $r$ is matched in $\mathcal M'$, we get that $r$ and $\ell$ are not matched in $\mathcal M$, so
    the edge $\ell r$ does not belong to $\mathcal M$; therefore, the alternating search would traverse it from $\ell$ to $r$, giving $r\in Y$, a contradiction.
    Therefore no unmatched left vertex has an edge to a right vertex matched by $\mathcal M'$.
    
    It remains to show that $\mathcal M'$ is nonempty.
    
    Suppose instead that $\mathcal M'$ contains no edge. Then every left vertex matched by $\mathcal M$ belongs to $X$. Every left vertex unmatched by $\mathcal M$ also belongs to $X$ because it was a starting vertex. Hence $X$
    contains every left vertex.
    
    Because the graph is square and $\mathcal M$ is not perfect, there is at least one unmatched right vertex $r$. By assumption, no right vertex is isolated, so $r$ has some neighbor $\ell$. Since every left vertex belongs to $X$, we have
    $\ell\in X$.
    The edge $\ell r$ is not a matching edge because $r$ is unmatched. The alternating search would therefore traverse $\ell r$ and reach $r$. This would give an augmenting path from an unmatched left vertex to the unmatched right vertex $r$, contradicting maximality of $\mathcal M$.
    
    Hence $\mathcal M'$ is nonempty and satisfies the required property.
    
    Maximum matching and alternating reachability can both be computed in polynomial time.
\end{proof}

We call the following operation an \emph{admission round}. Compute the matching guaranteed by \Cref{lem:matching}; allocate every matched candidate bundle to its matched unassigned agent; add all matched agents to the assigned set $H$; and remove the allocated goods from the free pool $F$. All unmatched agents remain unassigned, and all unmatched candidate bundles are discarded as temporary objects.

The next lemma shows that an admission round preserves all epoch invariants.

\begin{lemma}[Correctness of an admission round]\label{lem:admission}
An admission round preserves \eqref{eq:engine-own}--\eqref{eq:engine-efl} and increases $|H|$ by at least one.
\end{lemma}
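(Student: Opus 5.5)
We check the three invariants \eqref{eq:engine-own}, \eqref{eq:engine-reject} and \eqref{eq:engine-efl} separately, using the state just before the admission round. Let $\mathcal M'$ be the matching from \Cref{lem:matching}; it applies because the graph is square with $|P|\ge1$ and no right vertex is isolated. Write $H'=H\cup\{i:\ i \text{ matched}\}$ and, for a matched agent $i$, let $A_i=B_{j(i)}$ be her matched candidate. Since $\mathcal M'$ is nonempty, $|H|$ increases by at least one. The candidates are pairwise disjoint subsets of $F$, so the new bundles are disjoint from each other and from the existing assigned bundles.

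For \eqref{eq:engine-own}, old agents keep their bundles. A newly matched agent $i$ is adjacent to $B_{j(i)}$, so $v_i(B_{j(i)})\ge t_i$.

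For \eqref{eq:engine-reject}, fix an agent $i$ who stays in $P$ and any $j\in H'$. If $j\in H$, the old invariant gives $v_i(A_j)<t_i$. If $j$ is newly matched, then $A_j$ is a right vertex matched by $\mathcal M'$. By \Cref{lem:matching}, the unmatched left vertex $i$ has no edge to it, which means $v_i(A_j)<t_i$.

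For \eqref{eq:engine-efl}, we check four kinds of ordered pairs $(i,j)$ in $H'$.

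\emph{Old $i$, old $j$:} EFL holds because neither bundle changed.

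\emph{Old $i$, new $j$:} this is exactly the precondition of an admission round, namely that no assigned agent EFL-envies any candidate, so $\vefl_i(A_j)\le v_i(A_i)$.

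\emph{New $i$, old $j$:} before admission $i\in P$, so the rejection invariant gives $v_i(A_j)<t_i\le v_i(A_i)$. Since $\vefl_i(A_j)\le v_i(A_j)$, there is no EFL-envy.

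\emph{New $i$, new $j$ with $i\ne j$:} agent $i$ was unassigned, so \eqref{eq:minimal-candidate} gives $v_i(A_j-g)<t_i$ for every $g\in A_j$, because $A_j-g$ is a proper subset of the candidate $A_j$. If $|A_j|\le1$, then $\vefl_i(A_j)=0$. Otherwise, take any $g\in A_j$. The singleton $\{g\}$ is also a proper subset, so $v_i(g)<t_i$. Hence $\vefl_i(A_j)\le\max\{v_i(g),v_i(A_j-g)\}<t_i\le v_i(A_i)$.

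No step is expected to be a real obstacle. The main care needed is the last case: it must use the minimality property on both pieces, the singleton $\{g\}$ and the remainder $A_j-g$, and it relies on $|A_j|\ge2$ so that the singleton is a proper subset.
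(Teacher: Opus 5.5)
Your proof is correct and follows the paper's argument essentially step for step: the nonempty matching increases $|H|$, edges give the thresholds, the matching's rejection property preserves rejection, and EFL is checked through the same case split. The only variation is the new--new case, where the paper bounds $\vefx_i(A_j)=\max_{g\in A_j}v_i(A_j-g)<t_i$ and then uses EFX $\Rightarrow$ EFL. Your direct EFL bound is equally valid, but the singleton piece is not actually needed: taking $g$ to be an $i$-minimum good already gives $v_i(g)\le v_i(A_j-g)$ when $|A_j|\ge2$.
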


\begin{proof}
Let $\mathcal M'$ be the matching returned by \Cref{lem:matching}. Since $\mathcal M'$ is nonempty, at least one previously unassigned agent becomes assigned. Therefore $|H|$ strictly increases.

We verify the three invariants separately.

\medskip
\noindent
\textbf{Threshold satisfaction.}
If agent $i$ is matched to candidate $B$, then the edge $iB$ exists, and hence $v_i(B)\ge t_i$. 
Thus every newly assigned agent receives a bundle meeting her threshold. Previously assigned agents keep their own bundles. Therefore \eqref{eq:engine-own} is preserved.

\medskip
\noindent
\textbf{EFL among assigned agents.}
We must consider three kinds of comparisons.

First, consider an old assigned agent $i$ and a newly allocated candidate $B$. We enter the matching branch only when no assigned agent EFL-envies any candidate. Thus, before the allocation, $v_i(A_i)\ge\vefl_i(B)$. 
The old assigned agent keeps her own bundle, so this comparison remains valid after $B$ is allocated.

Second, consider a newly assigned agent $i$ and an old assigned bundle $A_j$. Before the admission round, $i$ belonged to $P$. By the rejection invariant, $v_i(A_j)<t_i$. 
The candidate allocated to $i$ becomes her own bundle $A_i$ and satisfies $v_i(A_i)\ge t_i$. 
Hence $v_i(A_i)>v_i(A_j)\ge\vefl_i(A_j).$ Thus $i$ does not EFL-envy any old assigned bundle.

Third, consider two newly assigned agents. Suppose agent $i$ receives candidate $B_r$, while another agent receives candidate $B_s$. By \eqref{eq:minimal-candidate}, every proper subset of $B_s$ has value below $t_i$ for agent $i$. In particular, for every $g\in B_s$, $v_i(B_s-g)<t_i.$ Therefore $\vefx_i(B_s) = \max_{g\in B_s}v_i(B_s-g) <t_i.$ Since agent $i$ receives a desired candidate, $v_i(B_r)\ge t_i.$ It follows that $v_i(B_r)>\vefx_i(B_s).$ Thus agent $i$ does not EFX-envy (and therefore EFL-envy) $B_s$.

Therefore \eqref{eq:engine-efl} is preserved.

\medskip
\noindent
\textbf{Rejection by agents that remain unassigned.}
Let $i$ be an agent who remains unmatched after the admission round. By the property of the matching in \Cref{lem:matching}, $i$ has no edge to any matched candidate $B$. By the definition of the graph, this means $v_i(B)<t_i$ for every newly allocated candidate $B$.

Every old assigned bundle already satisfied \eqref{eq:engine-reject}. Hence every still-unassigned agent values every assigned bundle strictly below her threshold, and the rejection invariant is preserved.

All three invariants therefore hold after the admission round.
\end{proof}

\begin{algorithm}[t]
\caption{$\Epoch(t,\mathsf{Divide})$: one fixed-threshold EFL epoch}\label{alg:epoch}
\begin{algorithmic}[1]

\Require A threshold vector $t=(t_1,\ldots,t_n)$ and a partition routine $\mathsf{Divide}$
\State $A_i\gets\varnothing$ for every $i\in N$; $F\gets M$
\State $H\gets\{i:t_i=0\}$; $P\gets N\setminus H$
\While{$P\ne\varnothing$}
    \State Choose $d\in P$
    \State $\mathbf{C}=(C_1,\ldots,C_{|P|})\gets\mathsf{Divide}(v_d,F,|P|)$
    \State $L\gets\min_jv_d(C_j)$
    \If{$L<t_d$}
        \State \Return $\mathsf{Failure}(d,\mathbf{A},\mathbf{C},L)$
    \EndIf
    \State Trim each $C_j$ to an inclusion-minimal candidate $B_j$ desired by some $i\in P$
    \If{some $i\in H$ EFL-envies some candidate $B_j$ with respect to $A_i$}
        \State $(a,S)\gets\Repair(B_j,H,\mathbf{A})$
        \State $F\gets(F\setminus S)\cup A_a$; $A_a\gets S$
        \State Discard all temporary candidate bundles
    \Else
        \State Find the matching from \Cref{lem:matching} and allocate its candidate bundles
        \State Remove the allocated goods from $F$ and add their recipients to $H$
        \State $P\gets N\setminus H$
    \EndIf
\EndWhile
\State \Return $\mathsf{Success}(\mathbf{A},F)$
\end{algorithmic}
\end{algorithm}

\begin{proposition}[What an epoch guarantees]\label{prop:epoch}
An epoch returns either a partial EFL allocation meeting every threshold $t_i$, or a failure consisting of a divider $d$, the current allocation, and a residual partition whose minimum value $L$ satisfies $L<t_d,$ while every currently assigned bundle has value strictly below $t_d$ for the divider.

An epoch contains at most $n$ admission rounds and $O(nm^4)$ repairs. Apart from calls to $\mathsf{Divide}$, all work is polynomial in $n$ and $m$, independently of the numerical magnitudes of the valuations.
\end{proposition}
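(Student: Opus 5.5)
The plan is to argue directly from the structure of \Cref{alg:epoch}, using the lemmas already established for its two branches. First, at initialization the three invariants \eqref{eq:engine-own}--\eqref{eq:engine-efl} hold trivially: every agent in $H$ has $t_i=0$ and an empty bundle, so \eqref{eq:engine-own} holds. All assigned bundles are empty, so \eqref{eq:engine-efl} holds. Every $i\in P$ has $t_i>0=v_i(\varnothing)$, which gives \eqref{eq:engine-reject}. Each iteration of the while-loop either returns failure, performs a repair, or performs an admission round. \Cref{lem:repair} shows that a repair preserves the invariants, and \Cref{lem:admission} shows the same for an admission round. Trimming only creates temporary candidates and does not alter $\mathbf A$. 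Hence the invariants hold at the start of every iteration.

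Next come the two return cases. On success, $P=\varnothing$, so $H=N$. Then \eqref{eq:engine-own} gives $v_i(A_i)\ge t_i$ for all $i$, and \eqref{eq:engine-efl} gives EFL, so the output is a partial EFL allocation meeting every threshold. On failure, the divider $d$ lies in $P$ and $L<t_d$ by the test in the algorithm. Because $d\in P$, the rejection invariant \eqref{eq:engine-reject} gives $v_d(A_j)<t_d$ for every assigned $j$. I also need the trimming step to be well defined when $L\ge t_d$: every $C_j$ is then desired by $d$, so trimming yields nonempty disjoint candidates. This also makes the bipartite graph square with no isolated right vertex, as \Cref{lem:matching} requires.

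For the counting bounds, $H$ never shrinks within an epoch. Repairs keep the recipient in $H$, and admission rounds strictly increase $|H|$ by \Cref{lem:admission}, so there are at most $n$ admission rounds. Repairs number $O(nm^4)$ by \Cref{lem:breakpoints}. Every iteration is either a repair, an admission, or the final failure, so the loop runs $O(nm^4+n)$ times.

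Each iteration does polynomial work outside $\mathsf{Divide}$. Trimming takes one scan per bundle with $O(n)$ valuation checks per good. The EFL-envy test computes $\vefl_i$ in $O(m\log m)$ per pair. \Cref{alg:repair} performs at most $|B|$ deletions, each checking $n$ agents. The matching of \Cref{lem:matching} is polynomial. All of these steps use only comparisons and additions of valuations, never magnitude-dependent loops, so the work is independent of numerical size. The main obstacle is the repair count, which is already settled by \Cref{lem:breakpoints}. What remains is to check that its monotonicity argument applies across admission rounds as well: an agent's utility changes only through her own repairs, which increase it, and she is never unassigned within an epoch.
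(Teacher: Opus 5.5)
Your proposal is correct and follows the paper's proof essentially step for step. Both arguments check the invariants at initialization, preserve them via \Cref{lem:repair} and \Cref{lem:admission}, bound the rounds using the monotonicity of $H$ together with \Cref{lem:breakpoints}, and read off the failure certificate from the rejection invariant \eqref{eq:engine-reject}. Your explicit check that trimming yields nonempty disjoint candidates and that the matching graph satisfies the hypotheses of \Cref{lem:matching} is a minor addition; the paper places this in the surrounding text rather than in the proof itself. Note that nonemptiness of candidates comes from every agent in $P$ having a positive threshold, not from $C_j$ being desired by $d$.
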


\begin{proof}
We first verify that the three invariants hold at the beginning of an epoch.

Every initially assigned agent $i\in H$ has $t_i=0$ and receives $A_i=\varnothing$. 
Hence $v_i(A_i)=0=t_i$, 
so \eqref{eq:engine-own} holds.

Every initially unassigned agent $i\in P$ has positive threshold $t_i>0$. 
Every initially assigned bundle is empty, so for every $j\in H$, $v_i(A_j)=0<t_i$. 
Thus the rejection invariant \eqref{eq:engine-reject} also holds.

Finally, all initially assigned bundles are empty. Therefore the allocation is trivially EFL among the agents in $H$, and \eqref{eq:engine-efl} holds.

Now consider a round in which the algorithm does not terminate with failure. 
If the repair branch is taken, \Cref{lem:repair} shows that all three invariants are preserved. 
If the admission branch is taken, \Cref{lem:admission} shows that all three invariants are preserved and that $|H|$ increases by at least one.

By \Cref{lem:breakpoints}, at most $O(nm^4)$ 
repair rounds can occur in one epoch. Every admission round adds at least one agent to $H$, and no agent ever leaves $H$. Therefore there are at most $n$ admission rounds. Hence the epoch must terminate after finitely many rounds.

Suppose first that no failure occurs. Then eventually $P=\varnothing$, 
so every agent belongs to $H$. By \eqref{eq:engine-own}, every agent receives value at least her threshold, and by \eqref{eq:engine-efl}, the allocation is EFL among all agents. Some goods may still remain in the free pool, so the allocation may be partial with respect to the goods; \Cref{lem:completion} will allocate those remaining goods later.

Now suppose that the epoch terminates with failure. Then the chosen divider $d\in P$ receives a residual partition $\mathbf{C}=(C_1,\ldots,C_{|P|})$ of the free goods with $L=\min_jv_d(C_j)<t_d.$ Because $d$ is still unassigned immediately before the failure, the rejection invariant applies to $d$. Hence $v_d(A_j)<t_d \qquad \text{for every }j\in H.$ Thus the failure output has exactly the claimed properties: every already assigned bundle is worth strictly less than $t_d$ to the divider, and the minimum value in her residual partition is also strictly less than $t_d$.

Finally, consider the running time apart from calls to $\mathsf{Divide}$. With additive valuations, bundle values can be maintained using polynomially many arithmetic operations. Trimming candidates, testing whether an assigned agent EFL-envies a candidate, executing a repair, computing a maximum matching, and performing alternating reachability are all polynomial in $n$ and $m$. Together with the bounds of $O(nm^4)$ repair rounds and at most $n$ admission rounds, this gives a polynomial amount of non-partition work independent of the numerical magnitudes of the valuations.
\end{proof}

\subsection{Complete the allocation without losing value}

At the end of a successful epoch, every agent is assigned and meets her threshold, but some goods may remain in the free pool. The next lemma shows that all such goods can be allocated without decreasing any agent's value or destroying EFL.
We defer its proof to \cref{sec:fixed thresholds appendix}.

\begin{restatable}{lemma}{lemcomplete}
\label{lem:completion}
[Value-preserving EFL completion~\cite{F25}]
Let $\mathbf{A}=\langle A_1,\ldots,A_n\rangle$ be a partial EFL allocation with $H=N$, so every agent is assigned but some goods may remain free. Then $\mathbf{A}$ can be completed in polynomially many arithmetic operations without decreasing any agent's own value. 
\end{restatable}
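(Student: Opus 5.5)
We extend the partial allocation with an envy-cycle-style procedure. It adds one free good at a time and uses the EFL score $\vefl_i$ to certify each step. The invariant is that $\mathbf{A}$ is EFL and that every agent's value never decreases. Additivity makes each agent's value weakly increasing under adding goods. Since an exchange of bundles along an envy cycle gives each agent a bundle she values strictly more than her old one, values only go up.

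\textbf{One step.} Build the envy graph on $N$, with an edge $i\to j$ when $v_i(A_j)>v_i(A_i)$. If this graph has a cycle, rotate the bundles along the cycle. Every agent on the cycle strictly gains, and the multiset of bundles is unchanged. Because each agent's value rose and the set of rival bundles is the same, no EFL violation is created. Repeat until the graph is acyclic; the standard potential argument (the number of envy edges strictly decreases, as in Lipton et al.) bounds this by polynomially many rotations. Then pick a source $s$, an agent whom nobody envies, and a free good $g$. Following the EFL-specific choice of Barman et al., we take $g$ to be a good that is most valuable to $s$ among the free goods, or more simply any free good, and give it to $s$.

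\textbf{Checking EFL after the addition, the main obstacle.} Let $i\neq s$. Before the addition $v_i(A_i)\ge v_i(A_s)$, since $s$ is a source. Hence
\[
\max\{v_i(g),\,v_i(A_s)\}\ge \vefl_i(A_s+g)
\]
by taking $g$ itself as the removed good in the definition of $\vefl_i$. The term $v_i(A_s)$ is at most $v_i(A_i)$, but we still need $v_i(g)\le v_i(A_i)$, and this can fail when $A_s=\varnothing$ or the good is large for $i$. I would first handle this case with a case split. If $v_i(g)>v_i(A_i)$ for some agent $i$, hand $\{g\}$ to such an $i$ and release $A_i$ back to the pool. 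This mirrors the singleton branch of \Cref{alg:repair}: $i$ strictly gains, and a singleton has $\vefl=0$, so no one EFL-envies it. Otherwise every agent values $g$ at most her own value, and the source addition is safe by the display above. Termination then needs a potential. The release branch strictly raises some agent's value to a singleton value $v_i(g)$, and each agent can pass each of the $m$ singleton levels at most once. This bounds the number of release branches by $nm$, in the spirit of \Cref{lem:breakpoints}. Between release branches, every source addition permanently removes a good from $F$, giving polynomially many operations in total.
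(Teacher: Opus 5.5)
Your proof is correct and uses the same ingredients as the paper's proof: singleton swaps that enforce $v_i(g)\le v_i(A_i)$ for free goods, envy-cycle rotation, and source additions certified by $\vefl_i(A_s+g)\le\max\{v_i(g),v_i(A_s)\}$. The paper differs only in ordering: it first establishes this domination for every free good (Stage~1) and observes that it persists during the source additions, whereas you perform the swap lazily for the chosen good. Your level-crossing bound of $nm$ on swaps remains valid under this interleaving, since values never decrease.
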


\section{The two APS--MXS algorithms}\label{sec:aps-algorithms}

   In this section, we provide \cref{alg:joint}, which gets $r\in (0,1]$ as an input
    and returns a complete EFL allocation satisfying $v_i(A_i)\ge \beta_n \tau_i, \qquad v_i(A_i)\ge r \xi_i \qquad(i\in[n]).$ Using this, we prove \cref{thm:main-poly} and \cref{thm:main-exact}.

Before starting the epochs, compute the fixed floor
\begin{equation}\label{eq:apsmx-fixed-floor}
t_i^*=\widehat t(v_i,n)
\qquad(i\in N)
\end{equation}
using the event sweep in \Cref{lem:apsx-optimize}.  By
\Cref{lem:apsx-optimize,thm:apsx-split}, these floors satisfy $\beta_n\tau_i\le t_i^*\le\MMS_n(v_i;M)$ and have $\LGLB$ as a residual splitter.  They are computed once and never
changed.  This preprocessing uses $O(n(m+1)^2)$ arithmetic operations and
comparisons and does not compute APS.

For each agent $i$, we maintain a certificate value $b_i$ together with an explicit MXS upper witness whose distinguished bundle has value exactly $b_i$. 
Initially, both are obtained by applying $\LGLB$ to the entire set of goods from that agent's perspective.
The key property of the certificate is $b_i\ge \max\{\xi_i,t_i^*\}\ge\max\{\xi_i,\beta_n\tau_i\}.$ The algorithm then proceeds through a sequence of epochs. At the beginning of each epoch, the current certificates $b_i$ are converted into the thresholds $t_i=\max\{t_i^*,rb_i\}$ given to the EFL engine. The EFL engine starts again from the empty allocation and uses $\LGLB$ whenever a divider is asked to partition the free goods.

If the epoch succeeds, the remaining free goods are allocated using the completion procedure (Lemma~\ref{lem:completion}), and the algorithm terminates.

If an epoch fails on divider $i$, the bundles currently assigned to other agents, together with the residual $\LGLB$ partition produced by the failed division, form an $n$-partition of the original set of goods. Applying \Cref{thm:joint-certificate} to this partition yields a new certificate $b_i'$ satisfying $\max\{\xi_i,t_i^*\}\le b_i'<t_i=rb_i,$ where the equality $t_i=rb_i$ follows because failure is impossible at the fixed floor $t_i^*$, as proved below. We replace $b_i$ by $b_i'$ and restart the EFL engine. Hence every failed epoch decreases one certificate while preserving both guarantees encoded by it.

We obtain two algorithms by choosing the operative thresholds differently. In the polynomial-time algorithm, we set $t_i=\max\{t_i^*,(1-\eta)b_i\}$ for a suitable $\eta>0$. The APS floor is not multiplied by $1-\eta$. A failure then gives $b_i'<(1-\eta)b_i,$ so each update contracts the corresponding certificate by a multiplicative factor of at most $1-\eta$. Together with the lower bound $b_i\ge\beta_n\tau_i$, this yields only polynomially many updates.

For integer valuations, we instead set $t_i=b_i.$ A failed epoch then gives $b_i'<b_i$. Since every certificate is an integer, it decreases by at least one. This yields a pseudopolynomial bound on the number of restarts. Because a successful epoch allocates value at least $t_i=b_i\ge\xi_i$ to every agent, this second algorithm guarantees exact MXS.

\begin{algorithm}[t]
\caption{$\mathsf{JointAllocate}(r)$: fixed APS floors and $\LGLB$ certificates with retention $r$}\label{alg:joint}
\begin{algorithmic}[1]
\Require A retention parameter $r\in(0,1]$
\If{$n=1$}
    \State Allocate $M$ to the unique agent and return the one-bundle witness
\EndIf
\For{each agent $i\in N$}
    \State $t_i^*\gets\widehat t(v_i,n)$ by \Cref{lem:apsx-optimize}
    \State $\mathbf{W}^i=(W^i_1,\ldots,W^i_n)\gets\LGLB(v_i,M,n)$
    \State $b_i\gets\min_{j\in[n]}v_i(W^i_j)$
    \State Save $\mathbf{W}^i$ with a distinguished minimum-value bundle
\EndFor
\Loop
    \State Set $t_i\gets\max\{t_i^*,r b_i\}$ for every $i\in N$ 
    \State Run $\Epoch(t,\LGLB)$ from the empty allocation
    \If{the epoch returns $\mathsf{Success}(\mathbf{A},F)$}
        \State Apply Lemma~\ref{lem:completion} to $(\mathbf{A},F)$
        \State \Return the resulting complete allocation and the saved witnesses
    \Else
        \State Let $\mathsf{Failure}(d,\mathbf{A},\mathbf{C},L)$ be the returned failure
        \State Let $H$ be the assigned-agent set at the time of failure
        \State $b_d\gets\max\left(\{L\}\cup\{v_d(A_j):j\in H\}\right)$
        \State Save partition$\bigl(\big((A_j)_{j\in H},C_1,\ldots,C_{n-|H|}\big),\mathbf{C}\bigr)$ with a bundle of value $b_d$ distinguished for agent $d$ 
        \State Release all allocated goods and start a new epoch
    \EndIf
\EndLoop
\end{algorithmic}
\end{algorithm}

\subsection{Certificates, operative thresholds, and restarts}

\begin{lemma}[A fixed APS floor is never crossed]\label[lemma]{lem:apsmx-floor}
Fix an agent $i$, let $D_1,\ldots,D_k$ be arbitrary disjoint bundles,
where $0\le k<n$, and let $L$ be the minimum value of $\LGLB\left(v_i,M\setminus\bigcup_{j=1}^kD_j,n-k\right).$ Then
\begin{equation}\label{eq:apsmx-floor-certificate}
\max\bigl(\{L\}\cup\{v_i(D_j):j\in[k]\}\bigr)\ge t_i^*.
\end{equation}
Moreover, an epoch cannot fail on agent $i$ when $t_i=t_i^*$.
Once an agent's operative threshold equals $t_i^*$, it remains at that
floor and that agent causes no further restarts.
\end{lemma}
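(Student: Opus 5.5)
The plan is to derive all three claims from \Cref{thm:apsx-split}, which applies because $t_i^*=\widehat t(v_i,n)$ is certified by \Cref{lem:apsx-optimize}.

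\emph{Inequality \eqref{eq:apsmx-floor-certificate}.} This is the last paragraph of the proof of \Cref{thm:joint-certificate}, which I would repeat directly. If $t_i^*=0$ there is nothing to show. Otherwise suppose the left-hand side is below $t_i^*$. Then every $D_j$ satisfies $v_i(D_j)<t_i^*$, and $k<n$. So \Cref{thm:apsx-split}, applied with the certified threshold $t_i^*$, says every bundle of the residual $\LGLB$ partition has value at least $t_i^*$, i.e.\ $L\ge t_i^*$. This is a contradiction. Equivalently, the inequality is the statement $b\ge\widehat t(v_i,n)$ of \Cref{thm:joint-certificate}, and I may simply cite it.

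\emph{No failure at $t_i=t_i^*$.} Suppose an epoch fails with divider $d=i$ while $t_i=t_i^*$. By \Cref{prop:epoch}, failure means:
\begin{itemize}
\item every currently assigned bundle $A_j$ ($j\in H$) satisfies $v_i(A_j)<t_i^*$, by the rejection invariant;
\item the residual partition $\mathbf C=\LGLB(v_i,F,|P|)$ has minimum $L<t_i^*$.
\end{itemize}
Since $d$ is unassigned, $P\ne\varnothing$, so $k=|H|<n$. The assigned bundles are pairwise disjoint and $F=M\setminus\bigcup_{j\in H}A_j$. So taking the $D$'s to be $(A_j)_{j\in H}$ and $|P|=n-k$, the maximum in \eqref{eq:apsmx-floor-certificate} is below $t_i^*$, contradicting the first part. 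One detail needs care. Failure requires $i\in P$, i.e.\ $t_i>0$. If $t_i^*=0$, then $i$ is either assigned from the start or has $t_i=rb_i>0$, so this case does not arise when $t_i=t_i^*$ unless $t_i^*>0$. Note also that the rejection invariant's strict inequality is exactly the strict hypothesis of \Cref{thm:apsx-split} (\Cref{rem:apsx-strict}).

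\emph{The threshold stays at the floor.} Thresholds change only after a failure, and only for the failing divider. That agent's certificate $b_d$ is replaced by the new $b$ of \Cref{thm:joint-certificate}, which can only lower $\max\{t_d^*,rb_d\}$. Because $b_d'\ge t_d^*$, the new threshold is still at least $t_d^*$. Since agent $i$ with $t_i=t_i^*$ never fails, $b_i$ is never updated again. Hence $t_i=\max\{t_i^*,rb_i\}$ stays equal to $t_i^*$ throughout all later epochs, and $i$ causes no further restarts.

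The main obstacle is only bookkeeping: checking that the failure configuration (the assigned bundles plus the residual divider partition) matches the hypotheses of \Cref{thm:apsx-split} exactly, namely disjointness, $k<n$, strict inequalities, and that the number of residual parts is $n-k$.
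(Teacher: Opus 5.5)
Your proposal is correct and takes essentially the same route as the paper. The inequality is the fixed-floor part of \Cref{thm:joint-certificate}. Non-failure at $t_i=t_i^*>0$ follows from the rejection invariant together with \Cref{thm:apsx-split} applied at the certified floor, and the case $t_i=t_i^*=0$ is excluded because such an agent is assigned at the start of the epoch. Persistence at the floor holds because an agent's certificate changes only upon its own failure.
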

\begin{proof}
The bound \eqref{eq:apsmx-floor-certificate} is the fixed-floor part of \Cref{thm:joint-certificate}.

If $t_i=t_i^*>0$ and $i$ is an unassigned divider, the rejection invariant
makes every assigned bundle worth strictly less than $t_i^*$.  The same
residual-splitting theorem gives $L\ge t_i^*=t_i$, excluding failure.
If $t_i=t_i^*=0$, agent $i$ is assigned at the beginning of the epoch and
is never a divider.  An agent's certificate changes only upon its own
failure; hence, once its threshold equals the fixed floor, it stays there
in every later epoch.
\end{proof}

The following lemma establishes that the certificates maintained by the algorithm remain valid, characterizes the progress made after every failed epoch, and gives the fairness guarantees obtained whenever an epoch succeeds.

\begin{lemma}[Conditional output guarantees]\label[lemma]{lem:joint-output}
Throughout \Cref{alg:joint}, every certificate $b_i$ has a valid saved MXS upper witness of value exactly $b_i$ and satisfies
\begin{equation}\label{eq:certificate-invariant}
b_i\ge\max\{\xi_i,t_i^*\}\ge\max\{\xi_i,\beta_n\tau_i\}.
\end{equation}
Moreover, whenever $b_d$ is updated to $b_d'$, we have:
\begin{equation}\label{eq:update-bounds}
b_d'
<
r b_d.
\end{equation}
Whenever the algorithm terminates, its output is complete and EFL and satisfies
\begin{equation}\label{eq:output-retention}
v_i(A_i)
\ge
\max\{t_i^*,r b_i\}
\ge
\max\{\beta_n\tau_i,r\xi_i\}
\qquad(i\in[n]).
\end{equation}
\end{lemma}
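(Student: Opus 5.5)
The plan is an induction over the execution of \Cref{alg:joint}. The invariant \eqref{eq:certificate-invariant} holds initially, and each update preserves it. The update bound \eqref{eq:update-bounds} comes from combining the failure output of \Cref{prop:epoch} with \Cref{lem:apsmx-floor}. The output guarantee \eqref{eq:output-retention} then follows from a successful epoch together with \Cref{lem:completion}.

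\emph{Initialization.} Before the first epoch, $b_i=b_i^{(0)}$ is the minimum value of $\LGLB(v_i,M,n)$. \Cref{cor:initial-range} gives $b_i\ge\max\{\xi_i,t_i^*\}$, and \Cref{lem:apsx-optimize} gives $t_i^*\ge\beta_n\tau_i$. The saved witness is the pair consisting of the partition $\mathbf{W}^i$ and a distinguished minimum-value bundle. By \Cref{thm:joint-certificate} with $k=0$ (equivalently \Cref{lem:LGLB-score}), this bundle has value exactly $b_i$ and is EFX-feasible. The case $n=1$ is trivial.

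\emph{Update step.} Suppose an epoch fails on divider $d$ with assigned set $H$. The bundles $(A_j)_{j\in H}$ are pairwise disjoint and $|H|<n$, so take them as $D_1,\ldots,D_k$ with $k=|H|$. Then $\mathbf{C}$ is exactly $\LGLB(v_d,M\setminus\bigcup D_j,n-k)$, since $|P|=n-|H|$ and $F$ is the complement of the assigned goods. Applying \Cref{thm:joint-certificate} to this data, the new value $b_d'=\max(\{L\}\cup\{v_d(A_j)\})$ is exactly the quantity $b$ of that theorem. So $b_d'\ge\max\{\xi_d,t_d^*\}$, and the saved partition with its distinguished bundle is a valid MXS witness of value $b_d'$. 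This re-establishes \eqref{eq:certificate-invariant}. Certificates of other agents are unchanged.

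\emph{Strict decrease.} By \Cref{prop:epoch}, both $L<t_d$ and $v_d(A_j)<t_d$ hold for all $j\in H$, so $b_d'<t_d=\max\{t_d^*,rb_d\}$. The main point, and the step I expect to be the real obstacle, is to rule out $t_d=t_d^*$ when $t_d^*\ge rb_d$. This is exactly \Cref{lem:apsmx-floor}: if $t_d=t_d^*$, the epoch cannot fail on $d$, because the rejection invariant makes every assigned bundle worth less than $t_d^*$, and then \Cref{thm:apsx-split} gives $L\ge t_d^*$. Hence $t_d=rb_d>t_d^*$ at any failure, and $b_d'<rb_d$. One subtlety needs checking here: failure requires $t_d>0$, since $d\in P$, and this is consistent with the above.

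\emph{Termination output.} The algorithm returns only after a successful epoch. By \Cref{prop:epoch}, that epoch yields a partial EFL allocation with $H=N$ and $v_i(A_i)\ge t_i=\max\{t_i^*,rb_i\}$. \Cref{lem:completion} completes it while keeping EFL and without decreasing any agent's value. The invariant \eqref{eq:certificate-invariant} then gives $t_i^*\ge\beta_n\tau_i$ and $rb_i\ge r\xi_i$, which yields \eqref{eq:output-retention}. Because the saved witnesses have value $b_i$, they certify $v_i(A_i)\ge rb_i$ as claimed.
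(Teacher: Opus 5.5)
Your proposal is correct and follows essentially the same route as the paper: initialization via \Cref{cor:initial-range}, the update via \Cref{thm:joint-certificate} applied to the assigned bundles together with the residual LGLB partition, and the strict decrease $b_d'<rb_d$ from the failure inequalities combined with \Cref{lem:apsmx-floor} ruling out $t_d=t_d^*$. The output guarantee likewise comes from \Cref{prop:epoch} and \Cref{lem:completion}, as in the paper. One small wording slip in your last sentence: the saved witness certifies $\xi_i\le b_i$, whereas $v_i(A_i)\ge r b_i$ comes from the successful epoch, not from the witness.
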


\begin{proof}
At initialization, \Cref{cor:initial-range} gives $b_i\ge\max\{\xi_i,t_i^*\}\ge\max\{\xi_i,\beta_n\tau_i\},$ and the initial $\LGLB$ partition is an MXS upper witness of value exactly $b_i$.

At the beginning of each epoch, the operative thresholds are
\begin{equation}\label{eq:operative-threshold}
t_i=\max\{t_i^*,r b_i\}.
\end{equation}

The only case where some $b_d$ gets updated is when
an epoch fails on divider $d$, with assigned-agent set $H$, residual $\LGLB$ partition $\mathbf{C}$, and residual minimum $L$. The algorithm sets
\begin{equation}\label{eq:certificate-update}
b_d'
=
\max\left(
\{L\}
\cup
\{v_d(A_j):j\in H\}
\right).
\end{equation}
The assigned bundles $(A_j)_{j\in H}$ are pairwise disjoint, and their complement is exactly the free pool partitioned by $\mathbf{C}$ into $n-|H|$ bundles. Hence \Cref{thm:joint-certificate} applies to the assigned bundles and $\mathbf{C}$. It gives a new MXS upper witness of value exactly $b_d'$ and $b_d'\ge\max\{\xi_d,t_d^*\}\ge\max\{\xi_d,\beta_n\tau_d\}.$ All other certificates and witnesses remain unchanged, so \eqref{eq:certificate-invariant} is preserved.

Moreover, failure gives $L<t_d$, while the rejection invariant gives $v_d(A_j)<t_d \qquad(j\in H),$ because $d$ is still unassigned. Therefore every term in the maximum defining $b_d'$ is below $t_d$, so $b_d'<t_d$.
By \Cref{lem:apsmx-floor}, failure is impossible when $t_d=t_d^*$.
Thus a failure necessarily satisfies $t_d>t_d^*$ and hence
$t_d=r b_d$.  Consequently $b_d'<r b_d$, which gives
\cref{eq:update-bounds}.

Finally, if an epoch succeeds, Proposition~\ref{prop:epoch} gives an EFL partial allocation satisfying $v_i(A_i)\ge t_i=\max\{t_i^*,r b_i\} \qquad(i\in N).$ By Lemma~\ref{lem:completion}, the remaining goods can be allocated without decreasing any agent's value or destroying EFL. The resulting allocation is therefore complete and EFL, and combining the inequality above with \eqref{eq:certificate-invariant} gives \eqref{eq:output-retention}.
\end{proof}

The lemma proves correctness conditional on termination. It remains to bound the number of failed epochs. The next two subsections do this for two different choices of the retention parameter $r$.

\subsection{Multiplicative slack gives polynomial time}

For \Cref{thm:main-poly}, define $r=1-\epsilon.$ Thus the operative threshold is $t_i=\max\{t_i^*,(1-\epsilon)b_i\}.$ Importantly, $t_i$ itself need not be an upper bound on MXS. The quantity known to upper-bound MXS is the certificate $b_i$. Whenever the fixed APS floor does not determine $t_i$, the operative threshold is $(1-\eta)b_i$, forcing geometric progress after every failure. At the floor, failure is impossible. 

\begin{lemma}[Geometric bound on updates]\label[lemma]{lem:geometric}
Each agent with positive APS has fewer than $1/(2\epsilon)$ certificate updates. An agent with zero APS has no certificate updates. Consequently, the total number of epochs is at most $1+\left\lceil\frac{n}{2\epsilon}\right\rceil.$ \end{lemma}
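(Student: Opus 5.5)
The plan is to track, for each agent $i$, the ratio $b_i/t_i^*$ and show that it shrinks by a factor below $1-\epsilon$ at every update. The argument splits according to whether $\tau_i=0$. \textbf{Zero-APS agents.} If $\tau_i=0$, \Cref{cor:initial-range} gives $b_i^{(0)}=0$, and since $t_i^*\ge0$ with $t_i^*\le b_i^{(0)}$, also $t_i^*=0$. Hence $t_i=\max\{t_i^*,rb_i\}=0$ initially, and agent $i$ starts every epoch in $H$. She is therefore never a divider, and her certificate never changes, because only a failure of agent $d$ changes $b_d$.

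\textbf{Positive-APS agents.} Here the sandwich is
\[
\beta_n\tau_i\le t_i^*\le b_i\le\tau_i .
\]
The left inequality is \eqref{eq:initial-range}, the middle one is \eqref{eq:certificate-invariant}, and the right one holds initially by \Cref{cor:initial-range}, after which $b_i$ only decreases. The key observation is that in an epoch where $i$ fails, \Cref{lem:apsmx-floor} forces $t_i>t_i^*$. This gives $(1-\epsilon)b_i>t_i^*$, and \eqref{eq:update-bounds} gives $b_i'<(1-\epsilon)b_i$.

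\textbf{Counting the updates.} Suppose agent $i$ undergoes $u$ updates, with values $b^{(0)},\dots,b^{(u)}$. Then
\[
b^{(u)}<(1-\epsilon)^u\,b^{(0)}\le(1-\epsilon)^u\tau_i ,
\]
while the invariant keeps $b^{(u)}\ge\beta_n\tau_i>\tfrac23\tau_i$. A geometric decrease therefore only yields $u\le\log(3/2)/\log(1/(1-\epsilon))$. That is roughly $0.405/\epsilon$, not $1/(2\epsilon)$.

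To reach the stated bound I would instead use an additive step. For every update except the last, the next failure requires $(1-\epsilon)b^{(j)}>t_i^*$, i.e.\ $b^{(j)}>t_i^*/(1-\epsilon)$. I would then argue that each update lowers $b$ by more than $\epsilon b^{(j)}\ge\epsilon t_i^*$, so the total decrease is at least about $u\epsilon t_i^*$. The available range is $b^{(0)}-t_i^*\le\tau_i-\beta_n\tau_i=(c_n-1)t_i^*$ measured against $t_i^*$, and $c_n-1<\tfrac12$. This gives $u\epsilon t_i^*<\tfrac12 t_i^*$ up to the last step, and hence $u<1/(2\epsilon)$. I expect the main obstacle to be the off-by-one bookkeeping here, namely whether the final update, which may land near $t_i^*$, is included in the count. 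I would check carefully that every failing update still decreases $b$ by at least $\epsilon t_i^*$, since $b^{(j)}>t_i^*$ at each failure. The lower bound $b^{(u)}\ge t_i^*$ together with $b^{(0)}\le c_nt_i^*$ then gives $u\epsilon t_i^*<(c_n-1)t_i^*<t_i^*/2$ strictly.

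\textbf{Total number of epochs.} Every failed epoch updates exactly one certificate. The number of failed epochs is therefore at most $\sum_i u_i\le n\lceil 1/(2\epsilon)\rceil$, or more tightly at most $\lceil n/(2\epsilon)\rceil$ using strict inequalities. Adding the one final successful epoch gives the bound $1+\lceil n/(2\epsilon)\rceil$. Termination itself follows because the number of failures is finite, and \Cref{prop:epoch} guarantees that each epoch ends.
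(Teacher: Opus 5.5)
Your final argument is correct, but it takes a different, additive route. The paper simply uses the geometric bound that you set aside.

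Your reason for dropping that bound is mistaken. From $\beta_n<(1-\epsilon)^u$, together with $-\log(1-\epsilon)\ge\epsilon$ and $\log(1/\beta_n)<\log(3/2)<\tfrac12$, you get
\[
u<\frac{\log(3/2)}{\epsilon}<\frac{1}{2\epsilon}.
\]
A bound of roughly $0.405/\epsilon$ is stronger than $1/(2\epsilon)$, not weaker, and this is exactly the paper's proof.

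Your additive version is still sound:
\begin{itemize}
\item Each update satisfies $b^{(j)}-b^{(j+1)}>\epsilon b^{(j)}\ge\epsilon t_i^*$, by \eqref{eq:update-bounds} and \eqref{eq:certificate-invariant}.
\item Summing gives $b^{(0)}-b^{(u)}\le\tau_i-t_i^*\le(c_n-1)t_i^*$. You may divide by $t_i^*$ because $t_i^*\ge\beta_n\tau_i>0$.
\item Hence $u<(c_n-1)/\epsilon=(n-1)/(2n\epsilon)$.
\end{itemize}
The off-by-one worry is moot, since both facts hold for every update, including the last. The extra observation $t_i>t_i^*$ is already built into \eqref{eq:update-bounds}.

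On the comparison, the paper's bound $\log(c_n)/\log\frac{1}{1-\epsilon}$ is tighter, because $\log c_n<c_n-1$, and it is much better for $\epsilon$ near $1$. Yours avoids logarithms and makes the $n$-dependence explicit.

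One small slip: the displayed $\tau_i-\beta_n\tau_i=(c_n-1)t_i^*$ is not an identity. It should be the inequality $\tau_i-t_i^*\le(c_n-1)t_i^*$, which is what you actually use.
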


\begin{proof}
Fix an agent $i$.

First suppose $\tau_i=0$.
By \Cref{cor:initial-range}, we get $b_i^{(0)}=0$.
The same initialization interval gives $0\le t_i^*\le b_i^{(0)}=0$. Hence the operative threshold of agent $i$ is zero in every epoch. Such an agent belongs to the initial assigned set $H=\{j:t_j=0\}$ and therefore is never an unassigned divider. Since certificates are updated only for failed dividers, $b_i$ is never updated.

Now suppose $\tau_i>0$.
Let $b_i^{(k)}$ denote the certificate after agent $i$ has been updated exactly $k$ times. Every update satisfies, by \eqref{eq:update-bounds},
$b_i^{(k)}<r b_i^{(k-1)}$.
Iterating this inequality gives
$b_i^{(k)}< r^k b_i^{(0)}$.

At the same time, the certificate invariant gives
$b_i^{(k)}\ge\beta_n\tau_i$,
while \Cref{cor:initial-range} gives
$b_i^{(0)}\le\tau_i$.
Therefore, for every $k\ge1$, $\beta_n\tau_i \le b_i^{(k)} < r^k b_i^{(0)} \le r^k\tau_i.$ Since $\tau_i>0$, we may cancel it and obtain $\beta_n<r^k=(1-\epsilon)^k.$ Taking logarithms gives $k < \frac{\log(1/\beta_n)} {-\log(1-\epsilon)}.$ We now bound this ratio. Since $\beta_n=\frac{2n}{3n-1},$ we have $\frac1{\beta_n} = \frac{3n-1}{2n} < \frac32.$ Hence $\log(1/\beta_n)<\log(3/2)<\frac12.$ Also, $-\log(1-\epsilon)\ge\epsilon.$ Consequently, $k < \frac{1/2}{\epsilon} = \frac1{2\epsilon}.$ Thus every positive-APS agent is updated fewer than $1/(2\epsilon)$ times, while every zero-APS agent is updated zero times. Summing over all agents, the total number of failed epochs is less than $\frac{n}{2\epsilon}.$ There is at most one additional final successful epoch, which yields the stated bound.
\end{proof}

\begin{proof}[Proof of \Cref{thm:main-poly}]
The fixed APS floors are computed once using $O(n(m+1)^2)$ arithmetic operations and comparisons by \Cref{lem:apsx-optimize}.
By Proposition~\ref{prop:epoch}, every epoch contains only polynomially many repair and admission rounds. Each call to $\LGLB(v,F,q)$ is also polynomial-time: the goods can be sorted in nonincreasing order of $v$-value using
$
O(m\log(m+1))
$
comparisons, and then each good can be assigned to a currently minimum-value bundle using a direct scan of at most $n$ bundle values. This requires at most
$O(nm)$
additional arithmetic operations and comparisons.

By \Cref{lem:geometric}, the number of epochs is polynomial in $n$ and $1/\epsilon$, independently of the numerical magnitudes of the valuations. Hence a successful epoch is eventually reached and the algorithm terminates.
Every certificate $b_i$ is an original subset value. The only additional numerical quantities are the fixed floors $t_i^*$ and products $(1-\epsilon)b_i$, all of polynomial encoding length. Therefore, its running time is polynomial in $n,m,\ell,1/\varepsilon$.

We next prove the fairness guarantees. By \Cref{lem:joint-output}, $v_i(A_i) \ge (1-\epsilon)b_i \ge (1-\epsilon)\xi_i.$ Thus the final allocation is $(1-\varepsilon)$-MXS.

For APS, \Cref{lem:joint-output,lem:apsx-optimize} gives $v_i(A_i)\ge t_i^*\ge\beta_n\tau_i =\frac{2n}{3n-1}\tau_i.$ Thus the allocation retains the entire finite-agent APS factor for every
$\varepsilon$.  The multiplicative slack affects only the MXS certificate;
it is not applied to the fixed APS floor.

The allocation is EFL by \Cref{lem:joint-output}. In addition, for each agent $i$, the algorithm retains an explicit MXS upper witness whose distinguished bundle has value
$b_i\ge\xi_i$.
This completes the proof.
\end{proof}

\subsection{No slack gives exact MXS in pseudopolynomial time}

For \Cref{thm:main-exact}, assume that all valuations are integer-valued and run \Cref{alg:joint} with
$r=1$. Then
$t_i=\max\{t_i^*,b_i\}=b_i$
in every epoch, by \eqref{eq:certificate-invariant}. 

\begin{proof}[Proof of \Cref{thm:main-exact}]
Recall that $U_i=v_i(M), \qquad U=\max_i U_i.$ For the initial $\LGLB$ partition of agent $i$, the minimum bundle value is at most the average bundle value. Hence $b_i^{(0)} \le \frac{U_i}{n}.$ Summing over the agents gives $\sum_{i=1}^n b_i^{(0)} \le \frac1n\sum_{i=1}^n U_i \le U,$ where the final inequality follows from $\sum_iU_i\le nU.$ All certificates remain nonnegative integers:
Every certificate $b_i$ is always the value of an actual subset of the original goods. This is true initially because $b_i$ is the value of a bundle of the initial $\LGLB$ partition. At an update, \eqref{eq:certificate-update} takes the maximum among $L$ and the values of assigned bundles; $L$ itself is the value of a bundle of the failed residual partition. Thus the update merely selects another subset value. 

On every failed epoch exactly one certificate is replaced by a strictly smaller integer, while all other certificates remain unchanged. Therefore $\sum_i b_i$ decreases by at least one after every failure. 
Since its initial value is at most $U$, there can be at most $U$ failed epochs and hence at most $U+1$ epochs in total.

The polynomial work inside an epoch is exactly the same as in the previous subsection; its analysis did not use integrality or a lower bound on utility increments. Consequently, the total running time is $(U+1)\poly(n,m,\ell).$ Finally, \Cref{lem:joint-output} with $r=1$ gives $v_i(A_i) \ge b_i \ge \max\{\xi_i,\beta_n\tau_i\} \qquad \text{for every }i.$ Therefore the allocation satisfies exact MXS and $\beta_n\text{-APS} = \frac{2n}{3n-1}\text{-APS}.$ It is also complete and EFL. 
The algorithm also returns, for every agent, the currently saved explicit MXS upper witness.
This completes the proof.
\end{proof}

\begin{remark}
    The fixed floors from \Cref{sec:certificate} also give a simpler algorithm when no MXS guarantee is required. In this case there is no need to maintain or update the witnesses $b_i$: one epoch at the fixed thresholds suffices.    
\end{remark}

\section{Exact RMMS and EFL for a fixed number of agents}\label{sec:rmms}
The RMMS construction uses the same rejection-and-restart architecture but a different divider certificate. It must not be obtained by renaming the $\LGLB$ threshold: an $\LGLB$ partition is not assumed to reach RMMS.

\subsection{Residual self-feasibility and attainment}

\begin{observation}[Integer bounds and existence of the maximum \cite{AFMMR26}]
\label{lem:rmms-integer}
For nonnegative integer additive valuations, $0\le\RMMS_n(v;M)\le\MMS_n(v;M)\le\left\lfloor v(M)/n\right\rfloor,$ and RMMS is an attained integer.
\end{observation}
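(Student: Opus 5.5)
We need to establish three facts: the chain $0\le\RMMS_n(v;M)\le\MMS_n(v;M)\le\lfloor v(M)/n\rfloor$, that the supremum defining RMMS is attained, and that it is an integer. The plan is to prove the three inequalities first and then handle attainment and integrality together.

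\emph{Step 1: the chain of inequalities.} The threshold $t=0$ is residual self-feasible. For any $k<n$, the deleted bundles would need value $<0$, so only the case $k=0$ arises, and in that case any $n$-partition of $M$ has all bundles of value $\ge 0$. Hence $\RMMS_n(v;M)\ge0$ once we know the maximum exists. Taking $k=0$ in Definition~\ref{def:rmms}, every residual self-feasible $t$ admits an $n$-partition of $M$ with all bundles worth at least $t$, so $t\le\MMS_n(v;M)$. Finally, an MMS partition has minimum bundle value at most the average $v(M)/n$. Since all bundle values are integers, $\MMS_n(v;M)\le\lfloor v(M)/n\rfloor$.

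\emph{Step 2: downward closure over the reals.} Integrality and attainment both follow from the observation that feasibility of $t$ depends only on $\lceil t\rceil$. For real $t$, since all subset values are integers:
\begin{itemize}
\item $v(D_j)<t$ holds if and only if $v(D_j)<\lceil t\rceil$;
\item $v(W)\ge t$ holds if and only if $v(W)\ge\lceil t\rceil$.
\end{itemize}
So $t$ is residual self-feasible if and only if $\lceil t\rceil$ is.

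\emph{Step 3: the supremum is attained at an integer.} The set of feasible thresholds is nonempty (it contains $0$) and bounded above by $\lfloor v(M)/n\rfloor$ by Step 1. Let $T$ be the largest feasible integer in $\{0,\dots,\lfloor v(M)/n\rfloor\}$, which exists because this set is finite and contains $0$. Any feasible real $t$ has $\lceil t\rceil$ feasible by Step 2, so $\lceil t\rceil\le T$ and therefore $t\le T$. Thus the maximum exists, equals the integer $T$, and satisfies the bounds of Step 1.

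The only delicate point is the strict-versus-weak inequality handling in Step 2. Because the definition uses strict $<t$ for deleted bundles and weak $\ge t$ for residual bundles, both conditions are invariant under replacing $t$ by $\lceil t\rceil$, and no monotonicity of feasibility in $t$ is needed.
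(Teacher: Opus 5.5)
Your proof is correct and follows essentially the same route as the paper. Both arguments rest on the same four steps: $0$ is feasible because all nonempty-deletion cases are vacuous; the case $k=0$ gives the MMS bound; integrality gives $\lfloor v(M)/n\rfloor$; and, using the strict/weak inequality pattern, any feasible threshold $t$ can be rounded up to $\lceil t\rceil$, so the maximum is attained over the finitely many feasible integers in $\{0,\dots,\lfloor v(M)/n\rfloor\}$. Your two-sided equivalence ($t$ feasible iff $\lceil t\rceil$ feasible) is slightly more than you need, since only the upward-rounding direction is used, just as in the paper.
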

\begin{proof}
Threshold zero is feasible: the no-deletion case has any nonnegative partition, and no bundle has value strictly below zero, so nonempty deletion cases are vacuous. Taking $k=0$ in residual self-feasibility gives the upper bound by MMS, and integrality and total value give the remaining bound.

If $t$ is feasible and $t\in(z,z+1)$ for an integer $z\ge0$, every bundle has integral value. Thus $v(D)<t$ is equivalent to $v(D)<z+1$, and $v(C)\ge t$ is equivalent to $v(C)\ge z+1$. The same residual conditions certify $z+1$. Every feasible noninteger threshold can therefore be rounded upward to a feasible integer. Only finitely many integers up to $\lfloor v(M)/n\rfloor$ can be feasible, and zero is one of them. Their maximum is the required attained RMMS.
\end{proof}

\begin{observation}
[Residual inheritance]\label{lem:inheritance}
Let $\rho=\RMMS_n(v;M)$. If $D_1,\ldots,D_k$ are disjoint, $k<n$, and $v(D_j)<\rho$ for all $j$, then $\RMMS_{n-k}\left(v;M\setminus\bigcup_jD_j\right)\ge\rho.$ \end{observation}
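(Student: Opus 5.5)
The plan is to verify directly that $\rho$ is residual self-feasible for the reduced instance $(v,M',n-k)$, where $M'=M\setminus\bigcup_jD_j$. Once that is shown, \Cref{def:rmms} gives $\RMMS_{n-k}(v;M')\ge\rho$ immediately, since RMMS is the largest residually self-feasible threshold. By \Cref{lem:rmms-integer} (or directly by the definition of the maximum) the value $\rho$ is attained, so $\rho$ itself is residual self-feasible for $(v,M,n)$. This is the property we will use.

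Fix any $0\le k'<n-k$ and any pairwise disjoint bundles $E_1,\ldots,E_{k'}\subseteq M'$ with $v(E_j)<\rho$. We must show that $M'\setminus\bigcup_jE_j$ admits an $(n-k-k')$-partition with every bundle worth at least $\rho$. To do this, form the combined collection $D_1,\ldots,D_k,E_1,\ldots,E_{k'}$ in $M$. These bundles are pairwise disjoint because each $E_j$ lies in $M'$, which is disjoint from every $D_i$. Each of them has value strictly below $\rho$. Their number is $k+k'<n$.

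Applying residual self-feasibility of $\rho$ for $(v,M,n)$ to this combined collection, the set $M\setminus(\bigcup D_i\cup\bigcup E_j)=M'\setminus\bigcup E_j$ admits a partition into $n-(k+k')$ bundles, each of value at least $\rho$. This is exactly the required statement.

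The only delicate points are these. First, the case $k'=0$ is needed, and it is covered by the same argument applied to the original deletions alone. Second, we need attainment, since a supremum might not be feasible. For nonnegative real valuations we could argue that the set of feasible thresholds is closed downward and the conditions are preserved at the supremum. For the integer setting here, \Cref{lem:rmms-integer} supplies attainment directly. Beyond this there is no real obstacle; the proof is a concatenation of deletion sequences.
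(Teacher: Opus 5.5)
Your proposal is correct and is essentially the paper's proof: you concatenate the $k$ original deletions with any $k'<n-k$ further disjoint deletions of value below $\rho$ taken from the residual instance. You then apply residual self-feasibility of $\rho$ for $(v,M,n)$ to the combined $k+k'<n$ disjoint bundles. Your explicit treatment of attainment is sound, via Observation~\ref{lem:rmms-integer} for integer valuations, or in general because only finitely many partitions exist and feasible thresholds are downward closed. It makes precise a point the paper leaves implicit when it defines RMMS as the \emph{largest} residually self-feasible threshold.
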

\begin{proof}
Take any additional $\ell<n-k$ disjoint bundles of value below $\rho$ from the residual instance. Together with the original $k$ bundles, these are $k+\ell<n$ admissible deletions from $M$. Residual self-feasibility of $\rho$ in the original instance supplies a partition of the remaining goods into $n-k-\ell$ bundles, each worth at least $\rho$. 
\end{proof}

\subsection{A failed division certifies a smaller upper threshold}
A \emph{residual divider} on $(v,F,q)$ returns a complete $q$-partition with minimum $L$ satisfying
\begin{equation}\label{eq:rmms-divider}
 L\ge\RMMS_q(v;F).
\end{equation}
It need not evaluate RMMS or maximize its returned minimum. An exact MMS partition is a sufficient implementation by Observation  \ref{lem:rmms-integer}.

\begin{lemma}[Safe RMMS threshold decrease]\label{lem:rmms-update}
Suppose $t\ge\rho=\RMMS_n(v;M)$ and $D_1,\ldots,D_k$ are disjoint, $k<n$, with $v(D_j)<t$. Let a divider satisfying \eqref{eq:rmms-divider} return minimum $L<t$ on their complement. Then
\begin{equation}\label{eq:rmms-decrease}
 \rho\le b:=\max\{L,v(D_1),\ldots,v(D_k)\}<t.
\end{equation}
\end{lemma}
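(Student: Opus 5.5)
The upper bound $b<t$ is immediate. Every deleted bundle satisfies $v(D_j)<t$ by hypothesis, and $L<t$ by assumption. The maximum of finitely many numbers that are each below $t$ is itself below $t$. All the substance is in the lower bound $\rho\le b$, which I will prove by contradiction.

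Assume $b<\rho$. Then every deleted bundle satisfies $v(D_j)\le b<\rho$. The bundles $D_1,\ldots,D_k$ are disjoint, $k<n$, and each is worth strictly less than $\rho$. This is exactly the hypothesis of residual inheritance (Observation~\ref{lem:inheritance}), which gives
\[
\RMMS_{n-k}\Bigl(v;M\setminus\textstyle\bigcup_jD_j\Bigr)\ge\rho .
\]
The divider was run on this complement with $q=n-k$ bundles, so the divider guarantee \eqref{eq:rmms-divider} yields
\[
L\ge\RMMS_{n-k}\Bigl(v;M\setminus\textstyle\bigcup_jD_j\Bigr)\ge\rho .
\]
But then $b\ge L\ge\rho$, contradicting $b<\rho$. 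Hence $\rho\le b$.

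This argument uses only the hypothesis $t\ge\rho$ in an indirect way. Its real role is to make the statement nonvacuous for the algorithm: the failing threshold $t$ sits at or above $\rho$, and the update replaces it by $b$, which lies in $[\rho,t)$. I also have to make sure $\rho$ is an attained value, so that ``$v(D_j)<\rho$'' is the right strictness for invoking Observation~\ref{lem:inheritance}. Observation~\ref{lem:rmms-integer} provides this for integer valuations, and Definition~\ref{def:rmms} defines RMMS as a maximum, so the threshold $\rho$ itself is residual self-feasible.

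The main obstacle is the edge case $k=0$. There are no deleted bundles, so $b=L$, and residual inheritance degenerates to $\RMMS_n(v;M)\ge\rho$, which holds trivially. The divider guarantee then gives $L\ge\rho$ directly, so this case needs only a brief remark. The only other point to check is that the divider's $q$ matches $n-k$. In the engine this holds because $q=|P|$ and the deleted bundles are exactly the $|H|=n-|P|$ assigned bundles.
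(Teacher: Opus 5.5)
Your proof is correct and matches the paper's argument: $b<t$ is immediate from the hypotheses, and if $b<\rho$ then Observation~\ref{lem:inheritance} gives residual RMMS at least $\rho$ on the complement, so the divider guarantee forces $L\ge\rho>b$, which contradicts $L\le b$. Your remarks on the attainment of $\rho$ and on the case $k=0$ are fine; note that the hypothesis $t\ge\rho$ is not actually used in the proof, only in how the lemma is applied by the algorithm.
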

\begin{proof}
Every value in the maximum is below $t$, proving the strict decrease. If $\rho>b$, every deleted bundle has value strictly below $\rho$, so \cref{lem:inheritance} gives residual RMMS at least $\rho$. The divider must then return $L\ge\rho>b$, contradicting $L\le b$. Thus $b\ge\rho$. No value of RMMS was evaluated.
\end{proof}

Initialize each upper threshold to
\begin{equation}\label{eq:rmms-start}
 t_i=\lfloor U_i/n\rfloor\ge\rho_i.
\end{equation}
In a failed division use the actually assigned bundles as $D_1,\ldots,D_k$, lower $t_d$ by \eqref{eq:rmms-decrease}, and restart. Rejection guarantees the strict inequalities required by the lemma. The partition routine, rather than a comparison with an unknown RMMS value, determines when to update.

\subsection{The elementary EFX variant of an epoch}
For clarity and fidelity to the residual-share construction, we use an inclusion-minimal EFX repair in this section rather than the EFL-specific repair of \cref{alg:repair}. It is sufficient for the fixed-agent pseudopolynomial theorem and gives a simple integer progress bound.

Start an epoch with empty bundles, $H=\{i:t_i=0\}$, and $P=N\setminus H$. Maintain \eqref{eq:engine-own} and \eqref{eq:engine-reject}, together with EFX among assigned agents. Use the RMMS-feasible divider and form minimal desired candidates exactly as in \cref{sec:engine}. The proper-subset inequality \eqref{eq:minimal-candidate} therefore holds.

If some proper subset of a candidate strictly improves an assigned agent, choose such a subset and trim it until no proper subset strictly improves \emph{any} assigned agent, using the assigned utilities frozen at the start of this call. The predicate ``some assigned agent strictly prefers this set to its current bundle'' is upward closed. One deletion scan therefore computes an inclusion-minimal improving set $S$. It remains a proper subset of its candidate. Give $S$ to any assigned agent who strictly prefers it, and release that agent's old bundle.

To detect this branch it is enough to test all single-good deletions $B-g$ of each candidate: every proper subset is contained in such a deletion. If none improves any assigned agent, no smaller proper subset can do so. Empty sets never strictly improve an assigned agent.

If the repair branch does not apply, allocate a nonempty rejection-preserving matching from \cref{lem:matching}.

\begin{lemma}[The RMMS epoch invariants]\label{lem:rmms-epoch}
Every successful round preserves threshold satisfaction, rejection, and assigned-agent EFX. A repair strictly improves one assigned agent and changes no other assigned utility. An admission assigns at least one new agent. Upon success the epoch returns a partial EFX allocation meeting all thresholds.
\end{lemma}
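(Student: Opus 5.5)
The plan is to mirror the proofs of \Cref{lem:repair}, \Cref{lem:admission}, and \Cref{prop:epoch}, replacing EFL by EFX throughout. First I check the initial state. Agents in $H=\{i:t_i=0\}$ hold empty bundles, so threshold satisfaction holds, and the allocation among them is trivially EFX. Every $i\in P$ has $t_i>0$, so it values each empty assigned bundle at $0<t_i$, which gives rejection. Then I treat the two kinds of rounds separately.

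\textbf{Repair round.} Let $u_j=v_j(A_j)$ be the frozen utilities, and let $S\subsetneq B$ be the inclusion-minimal improving set. The recipient $a$ satisfies $v_a(S)>u_a\ge t_a$, so threshold satisfaction holds. No other assigned agent's bundle changes, so no other assigned utility changes. Since $S$ is a proper subset of a minimal desired candidate, \eqref{eq:minimal-candidate} gives $v_i(S)<t_i$ for every $i\in P$, which preserves rejection.

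For EFX, minimality is the key point. For every assigned $j$ and every $g\in S$, the set $S-g$ does not strictly improve $j$, so $v_j(S-g)\le u_j$. This is exactly $\vefx_j(S)\le u_j$, and for $j=a$ it holds a fortiori because $a$'s utility went up. The recipient's new value also exceeds her old value, so she still does not EFX-envy any remaining bundle. The released bundle $A_a$ returns to $F$ and is no longer a rival.

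One detail needs care: $S$ is nonempty and improving. This holds because empty sets never strictly improve, and the detection test on the single deletions $B-g$ guarantees that some improving proper subset exists. The deletion scan is valid because the improving predicate is upward closed.

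\textbf{Admission round.} I reuse \Cref{lem:admission} nearly verbatim, upgrading EFL to EFX in each comparison.
\begin{itemize}
\item An old agent $i$ facing a new bundle $B$: the repair branch did not apply, so $v_i(B-g)\le u_i$ for all $g\in B$. Hence $\vefx_i(B)\le u_i$.
\item A new agent facing an old bundle $A_j$: rejection gives $v_i(A_j)<t_i\le v_i(A_i)$.
\item Two new agents: the argument via \eqref{eq:minimal-candidate} already gives $\vefx_i(B_s)<t_i\le v_i(B_r)$.
\end{itemize}
Rejection for agents that remain unmatched comes from \Cref{lem:matching}, whose hypotheses hold because every candidate is desired. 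The matching is nonempty, so $|H|$ grows.

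\textbf{Success.} When $P=\varnothing$, all agents are assigned, and the invariants give a partial EFX allocation meeting every threshold.

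The main obstacle is the repair EFX check, specifically showing that the frozen-utility inclusion-minimal $S$ is not EFX-envied by any assigned agent, including $a$ after the swap. Minimality with respect to strict improvement for all assigned agents gives this directly, as argued above. Termination and counting are not part of this statement, so no breakpoint argument is needed here.
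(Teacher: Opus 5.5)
Your proposal is correct and follows essentially the same route as the paper: inclusion-minimality of the improving set $S$ gives $v_i(S-g)\le v_i(A_i)$ for every assigned observer, which is the incoming EFX condition, and \eqref{eq:minimal-candidate} preserves rejection. In the admission case, the failure of the one-good-deletion test makes old agents EFX toward every candidate, and the remaining comparisons are handled exactly as in \Cref{lem:admission}. Your aside about $j=a$ is harmless but unnecessary, since the recipient never compares her bundle against itself.
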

\begin{proof}
In a repair, $S$ is a proper subset of a candidate, so every unassigned agent still rejects it by \eqref{eq:minimal-candidate}. Its recipient strictly improves and remains above its threshold. For every assigned observer $i$ and every $g\in S$, inclusion-minimality gives
$v_i(S-g)\le v_i(A_i)$ at the old allocation. This is exactly the incoming EFX requirement for the replacement bundle. The recipient's outgoing comparisons persist because it improves, and other allocated bundles are unchanged or released.

In an admission, old assigned agents are EFX toward every candidate: otherwise a one-good deletion would trigger the repair branch. A new recipient meets its threshold and does not envy old assigned bundles by rejection. Every proper subset of another new candidate lies below its threshold, so its EFX comparisons with new recipients also hold. The matching preserves rejection by remaining unassigned agents. The invariants are initially true, including for zero-threshold agents. Induction proves the claim, and at success all agents are assigned.
\end{proof}

\begin{algorithm}[t]
\caption{Descending upper thresholds for RMMS and EFL}\label{alg:rmms}
\begin{algorithmic}[1]
\Require $n,m,$ and valuation function $v_i$ for every agent $i$
\If{$n=1$}\State Give the agent all goods and return.\EndIf
\State $t_i\gets\lfloor v_i(M)/n\rfloor$ for every $i$.
\Loop
 \State Reset $A_i\gets\varnothing$, $F\gets M$, and $H\gets\{i:t_i=0\}$.
 \State $\mathit{restart}\gets\mathsf{false}$.
 \While{$P:=N\setminus H$ is nonempty}
  \State Choose $d\in P$; divide $F$ into $|P|$ bundles satisfying \eqref{eq:rmms-divider}.
  \State Let $L$ be their minimum under $v_d$.
  \If{$L<t_d$}
   \State $t_d\gets\max\bigl(\{L\}\cup\{v_d(A_j):j\in H\}\bigr)$.
   \State $\mathit{restart}\gets\mathsf{true}$; break the inner loop.
  \EndIf
  \State Trim the bundles into inclusion-minimal desired candidates.
  \If{a proper subset of a candidate improves an assigned agent's value}
   \State Trim to a minimal improving set $S$; give it to an improving assigned agent.
   \State Release its old bundle and discard the temporary candidates.
  \Else
   \State Allocate a nonempty rejection-preserving matching; admit its recipients.
  \EndIf
 \EndWhile
 \If{not $\mathit{restart}$}
  \State \Return $\Complete(A,F)$.
 \EndIf
\EndLoop
\end{algorithmic}
\end{algorithm}

\begin{lemma}[A pseudopolynomial number of rounds]\label{lem:rmms-rounds}
The algorithm has at most $U$ threshold decreases. Each epoch has at most $W$ successful rounds and at most one failed division, where $W=\sum_iU_i\le nU$.
\end{lemma}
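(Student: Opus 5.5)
We need to prove three bounds: the number of threshold decreases, the number of successful rounds per epoch, and the number of failed divisions per epoch.

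\emph{Threshold decreases.} Integer valuations make every threshold $t_i$ a nonnegative integer. Initially $t_i=\lfloor U_i/n\rfloor$ by \eqref{eq:rmms-start}. An update for divider $d$ replaces $t_d$ by $b=\max\{L,v_d(A_j):j\in H\}$. This $b$ is the value of an actual subset of goods, hence an integer. By \cref{lem:rmms-update}, applied with the rejection invariant supplying $v_d(A_j)<t_d$, we have $b<t_d$. So each update lowers $\sum_i t_i$ by at least one. The same lemma keeps $t_i\ge\rho_i\ge0$, using the induction hypothesis $t_d\ge\rho_d$, which holds initially by \cref{lem:rmms-integer}. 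Therefore the number of decreases is at most $\sum_i\lfloor U_i/n\rfloor\le\sum_iU_i/n\le U$.

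\emph{Failed divisions per epoch.} A failed division immediately sets the restart flag and breaks the inner loop. So each epoch contains at most one failure.

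\emph{Successful rounds per epoch.} Each successful round is either an admission or a repair. We use the potential $\Phi=\sum_{i\in H}v_i(A_i)$, an integer between $0$ and $\sum_iU_i=W$.
\begin{itemize}
\item By \cref{lem:rmms-epoch}, a repair strictly increases one assigned agent's integer utility and changes no other assigned utility. So it raises $\Phi$ by at least $1$.
\item An admission adds newly assigned agents with nonnegative utilities, so $\Phi$ does not decrease. There are at most $n$ admissions, since $H$ only grows within an epoch.
\end{itemize}
This gives at most $W+n$ rounds, not the claimed $W$, so the admissions need a sharper count. Every agent admitted with $t_i>0$ receives a bundle worth $v_i(A_i)\ge t_i\ge1$, so its admission raises $\Phi$ by at least one. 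Agents with $t_i=0$ are in $H$ from the start and are never admitted. Hence every successful round raises the integer potential $\Phi$ by at least one. Since $\Phi$ starts at $0$ and never exceeds $W$, there are at most $W$ successful rounds. Finally, $W=\sum_i U_i\le nU$ is immediate.

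The main subtlety is this potential argument. It requires $\Phi$ to count only assigned agents, so that a recipient's released bundle does not affect anyone else's utility. It also requires positive thresholds to be at least $1$, which holds by integrality.
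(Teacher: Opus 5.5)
Your proof is correct and follows the paper's argument essentially step for step. It uses integer thresholds that strictly decrease from an initial sum of at most $U$, the observation that a failed division ends the epoch, and the potential $\sum_{i\in H}v_i(A_i)$, which every repair and every admission (through a positive integer threshold) raises by at least one.
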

\begin{proof}
By \cref{lem:rmms-update}, every update preserves $t_i\ge\rho_i$ and strictly reduces one nonnegative integer threshold. Initially $\sum_i t_i=\sum_i\lfloor U_i/n\rfloor\le W/n\le U.$ Thus there are at most $U$ updates.

Within a fixed epoch use the potential
$\Psi=\sum_{i\in H}v_i(A_i)$. It starts at zero, never decreases, and is at most $W$. A repair increases it by a positive integer. Every admission gives at least one new agent a bundle meeting a positive integer threshold, so it also increases $\Psi$ by at least one. Therefore at most $W$ successful rounds occur. A failed division immediately ends the epoch. No comparison of $\Psi$ across restarts is used.
\end{proof}

\begin{theorem}[Residual-divider reduction]\label{thm:rmms-reduction}
Suppose a divider satisfying \eqref{eq:rmms-divider} runs in time at most $T(n,m,U)$ for every residual instance under consideration. Then \cref{alg:rmms} returns exact RMMS and EFL with running time
\begin{equation}\label{eq:rmms-oracle-time}
 O\left((U+1)(W+1)\,[T(n,m,U)+nm^2+n^3]
       +(n+m+1)^6\right).
\end{equation}
It uses no RMMS-value oracle.
\end{theorem}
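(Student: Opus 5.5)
The plan has three parts: correctness at termination, a bound on the number of epochs, and a bound on the work inside one epoch. Throughout, the thresholds $t_i$ of \cref{alg:rmms} are treated as upper certificates for $\rho_i$.

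\textbf{Correctness.} First I will show by induction over updates that $t_i\ge\rho_i$ holds throughout. Initially this is \eqref{eq:rmms-start} together with \cref{lem:rmms-integer}. Now consider a failed division by divider $d$. The assigned bundles $(A_j)_{j\in H}$ are disjoint, there are $|H|<n$ of them, and by the rejection invariant of \cref{lem:rmms-epoch} each satisfies $v_d(A_j)<t_d$. The divider satisfies \eqref{eq:rmms-divider} on the complement and returns $L<t_d$. So \cref{lem:rmms-update} applies, and the new value $b$ satisfies $\rho_d\le b<t_d$. Hence the invariant survives, and every update strictly decreases an integer threshold; $b$ is an integer because it is a subset value.

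When an epoch ends without restart, \cref{lem:rmms-epoch} gives a partial EFX (hence EFL) allocation with $v_i(A_i)\ge t_i\ge\rho_i$ for every agent. The completion $\Complete(A,F)$ then allocates the free goods while preserving EFL and every agent's value, by \cref{lem:completion} applied to this EFL partial allocation with $H=N$. The output is therefore complete, EFL and exact RMMS. No RMMS value is ever compared against, only the divider's output, so the algorithm uses no RMMS-value oracle.

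\textbf{Running time.} By \cref{lem:rmms-rounds} there are at most $U$ failed epochs, hence at most $U+1$ epochs. Each epoch has at most $W$ successful rounds plus one final division, so there are $O(W+1)$ rounds per epoch.

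Each round makes one divider call, costing $T(n,m,U)$, and then does engine work, which I will bound as follows:
\begin{itemize}
\item Evaluating $L$ and trimming candidates takes $O(nm)$ value updates with one scan per candidate, so $O(nm)$ per candidate pass.
\item Testing the repair branch checks all single-good deletions $B-g$ against all assigned agents; maintaining bundle values makes this $O(nm)$ over all candidates. Computing an inclusion-minimal improving set by one deletion scan costs $O(nm)$, or $O(nm^2)$ if values are recomputed naively. I will charge $O(nm^2)$ to be safe.
\item The matching of \cref{lem:matching} is on a bipartite graph with at most $n$ vertices per side. A maximum matching plus the alternating search costs $O(n^3)$, and building the edge set costs $O(nm)$.
\end{itemize}
This gives the per-round bracket $T+nm^2+n^3$, and multiplying by the round count gives $(U+1)(W+1)[T+nm^2+n^3]$.

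The additive $(n+m+1)^6$ term accounts for the one-time completion step and the preprocessing. For the completion procedure of \cref{lem:completion} I will assume a polynomial bound of this form, e.g.\ at most $m$ good insertions, each possibly triggering envy-cycle-type rotations costing $O((n+m)^5)$.

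\textbf{Main obstacle.} The step I expect to be the main obstacle is pinning down the completion cost precisely, since its proof is deferred. I would verify this by inspecting the completion routine in the appendix rather than the epoch analysis, which is routine.
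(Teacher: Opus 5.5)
Your proposal is correct and follows the paper's proof essentially step for step. The ingredients are the same: the invariant $t_i\ge\rho_i$ maintained via \cref{lem:rmms-update} and the rejection invariant, success yielding a partial EFX allocation that \cref{lem:completion} completes, and at most $(U+1)(W+1)$ divider calls with $O(nm^2)+O(n^3)$ engine work per round. The completion cost you flag as the main obstacle is likewise only asserted in the paper, and it does fit the stated bound. Stage~1 of \cref{lem:completion} makes at most $nm$ singleton swaps at $O(nm)$ each. Stage~2 makes at most $m$ source additions, separated by $O(n^2)$ cycle rotations of $O(n^2)$ work each after an $O(nm)$ value precomputation. This gives $O(n^2m^2+n^4m)\le O((n+m+1)^6)$ overall.
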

\begin{proof}
Initialization and every update preserve $t_i\ge\rho_i$. The epoch invariants and round bound preclude an infinite final epoch. If an unassigned agent remains, the next division either fails, causing another bounded threshold decrease, or produces a valid successful round. The process therefore eventually assigns every agent and returns a partial EFX allocation above all $t_i$. Since EFX implies EFL, Lemma~\ref{lem:completion} shows that it can be completed to EFL without losing any achieved utility.

There are at most $(U+1)(W+1)$ divider calls. Trimming, detecting a proper improving subset, and minimal-improvement trimming cost $O(nm^2)$ per round with direct additive evaluation; matching costs $O(n^3)$. Completion is bounded by the final term in \eqref{eq:rmms-oracle-time}. All ordinary numeric data are integer subset values or thresholds at most $U$, apart from polynomial-size indices. This completes the proof.
\end{proof}

\subsection{Implementing the residual divider}

For a single nonnegative additive valuation, computing an
MMS-optimal partition means dividing the goods into $q$ bundles
so as to maximize the value of the least valuable bundle.
For a fixed number of bundles $q$, an exact
pseudopolynomial-time algorithm based on dynamic programming
is known~\cite{ChiarelliEtAl2023}. We briefly describe this
dynamic program and analyze its running time.

\begin{lemma}[Exact MMS partitioning]\label{lem:dp}
Let $v$ be a nonnegative integer additive valuation on a set $F$ of
$f$ goods, and write $S=v(F)$. An MMS-optimal $q$-partition of $F$
can be computed in $O\bigl(q^2(f+1)(S+1)^{q-1}\bigr)$ elementary arithmetic and indexing operations, with polynomial
operand lengths.
\end{lemma}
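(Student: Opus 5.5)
The plan is a standard load-vector dynamic program. Fix an order $g_1,\ldots,g_f$ of the goods and process them one at a time. The state after the first $s$ goods records which load vectors $(x_1,\ldots,x_q)$ are achievable, where $x_j$ is the value placed in bundle $j$ so far. Because the loads always sum to $v(\{g_1,\ldots,g_s\})$, which is known from $s$ alone, the last coordinate is determined by the others. We therefore keep only $(x_1,\ldots,x_{q-1})\in\{0,\ldots,S\}^{q-1}$ and store a Boolean table $R_s$ of size $(S+1)^{q-1}$. Here $R_0$ contains only the zero vector. To obtain $R_{s+1}$ from $R_s$, we send $g_{s+1}$ to one of the $q$ bundles: either we add $v(g_{s+1})$ to coordinate $j<q$, or we leave the stored coordinates unchanged, which places the good in bundle $q$. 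Any entry exceeding $S$ is discarded, although this never happens since loads are bounded by $S$.

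Correctness follows by induction on $s$. A vector lies in $R_s$ exactly when some assignment of $g_1,\ldots,g_s$ realizes it, with the $q$-th load recovered as the prefix sum minus $\sum_{j<q}x_j$. Once all $f$ goods are processed, we scan $R_f$ and maximize $\min\{x_1,\ldots,x_{q-1},S-\sum_{j<q}x_j\}$. This maximum is $\MMS_q(v;F)$, since every partition yields a reachable vector and every reachable vector comes from a partition.

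To output an actual partition and not just its value, each newly set entry of $R_{s+1}$ stores a back-pointer: the index $j$ of the bundle that received $g_{s+1}$ and, implicitly, the predecessor vector. Tracing these pointers back from the optimal final vector reconstructs the bundles. Storing one pointer per state per layer uses $(f+1)(S+1)^{q-1}$ cells.

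For the running time, there are $f+1$ layers, each with $(S+1)^{q-1}$ states. Each state has $q$ transitions, and each transition touches $O(q)$ coordinates when the vector is copied or indexed, giving $O(q)$ work per transition. This yields $O(q^2(f+1)(S+1)^{q-1})$ in total, and the final scan costs $O(q(S+1)^{q-1})$, which is of the same order. Operands are integers bounded by $S$, together with table indices bounded by $(S+1)^{q-1}$, whose bit lengths are polynomial in $q$ and the encoding length.

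No step here is a real obstacle. The one point that needs care is the dimension reduction to $q-1$ coordinates, which is exactly where the exponent $q-1$ comes from. I would also note that $q=1$ is the trivial case with a single state, and that empty bundles are permitted, consistent with the paper's convention for indexed partitions.
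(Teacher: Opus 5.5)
Your proposal is correct and follows essentially the same route as the paper's proof. Both use a layered dynamic program over the goods with states $(x_1,\ldots,x_{q-1})$, recover the last load from the prefix sum, store predecessor pointers for backtracking, and bound the cost as $(f+1)(S+1)^{q-1}$ states times $q$ transitions of $O(q)$ work each; phrasing the layers as Boolean reachability tables is the same as the paper's merging of identical load vectors.
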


\begin{proof}
For $q=1$, the unique partition consists of $F$ itself,
so the claim is immediate. Assume henceforth that $q\ge 2$.

Let $F=\{g_1,\ldots,g_f\}$, and write
$w_j=v(g_j)$ and $S=v(F)$.
We construct a dynamic program that processes the goods
one by one. After processing the first $j$ goods, a state
$(x_1,\ldots,x_{q-1})$ records the values of the first
$q-1$ bundles in some partition of these goods.
We do not need to store the value of the last bundle,
since it is uniquely determined by $x_q=\sum_{h=1}^{j}w_h-\sum_{a=1}^{q-1}x_a.$ Initially, no goods have been processed, so the only
state is $(0,\ldots,0)$. When processing good $g_j$,
we consider all $q$ possible bundles to which it can
be assigned. Assigning it to bundle $a<q$ increases
coordinate $x_a$ by $w_j$, while assigning it to
bundle $q$ leaves the stored vector unchanged.

We first show that the dynamic program is exact.
By induction on $j$, the states after processing the
first $j$ goods are precisely the load vectors
attainable by partitioning those goods into $q$
bundles. Indeed, every partition of the first $j$
goods is obtained by assigning $g_j$ to one of the
$q$ bundles of a partition of the first $j-1$ goods.
Conversely, every transition corresponds to a valid
assignment of $g_j$.

Moreover, two partial partitions that yield the same
state have identical loads in all $q$ bundles, since
the last load is determined by the processed total.
Therefore, merging such states does not remove any
possible final load vector.

After all goods have been processed, we select a
reachable state maximizing
\[
  \min\left\{
    x_1,\ldots,x_{q-1},
    S-\sum_{a=1}^{q-1}x_a
  \right\}.
\]
Since every $q$-partition of $F$ is represented by
a reachable state, this maximum equals
$\MMS_q(v;F)$. By storing a predecessor and the
chosen bundle for each state, we can recover an
optimal partition by backtracking.

It remains to bound the running time.
Every coordinate is an integer between $0$ and $S$.
Thus, there are at most $(S+1)^{q-1}$ states in
each of the $f+1$ layers.

Each state has $q$ possible transitions, and
processing a transition requires at most $O(q)$
arithmetic and indexing operations. The total
running time is therefore $O\bigl(q^2(f+1)(S+1)^{q-1}\bigr).$ Finally, all stored loads are integers bounded
by $S$, so their binary encoding lengths are
$O(\log(S+2))$. Hence all arithmetic operations
have polynomial operand lengths.
\end{proof}

\begin{proof}[Proof of \Cref{thm:main-rmms}]
For each residual instance $(v,F,q)$, the partition
$\C=(C_1,\ldots,C_q)$ returned by \cref{lem:dp} satisfies $\min_{j\in[q]}v(C_j)=\MMS_q(v;F)\ge\RMMS_q(v;F).$ Thus the divider requirement~\eqref{eq:rmms-divider} holds. Since
$q\le n$, $f\le m$, and $S\le U$, the divider runs in $T(n,m,U)=O\bigl(n^2(m+1)(U+1)^{n-1}\bigr)$ arithmetic and indexing operations. Substituting this bound into
\eqref{eq:rmms-oracle-time} in \cref{thm:rmms-reduction}, using
$W\le nU$, and accounting for polynomial operand lengths gives
bit complexity $\poly(n,m,U,\ell)(U+1)^{n-1}.$ For fixed $n$, this is pseudopolynomial. The algorithm does not
compute RMMS values; it only uses exact MMS partitions of residual
instances.
\end{proof}

\begin{remark}[Sufficient partition primitive]
The reduction requires only~\eqref{eq:rmms-divider}, not MMS
optimality. Any divider satisfying this condition can replace the
exact MMS routine in \cref{thm:rmms-reduction}; the standard dynamic
program above suffices for the fixed-agent pseudopolynomial bound.
\end{remark}

\section{Hardness of selecting a certificate between MXS and RMMS}
\label{app:mxsh:hardness}
\paragraph{Motivation.}
It is natural to ask whether the interval between MXS and RMMS contains a share that is easier to compute than either endpoint.  Such a share would be algorithmically useful: a value $s_i$ satisfying $\operatorname{MXS}_n(v_i;M) \le s_i \le \operatorname{RMMS}_n(v_i;M)$ would already dominate the MXS requirement while remaining within the residual-feasibility range captured by RMMS.  If such a value could be computed in polynomial time together with an appropriate residual splitter, it could be used as a fixed numerical threshold in the residual allocation framework and combined with EFL and other residual-valid guarantees, such as the APS floors developed in this paper.  This motivates searching for a polynomially computable intermediate certificate rather than attempting to compute MXS or RMMS exactly.

Computing the exact minimum EFX share is NP-hard even for two agents~\cite[Theorem~5]{CGRSV22}.
That fact alone does not imply hardness of computing an intermediate share:
for two agents, the minimum value of a one-valuation greedy EFX partition
lies between MXS and RMMS. This section establishes hardness for the
intermediate-share problem itself. The reduction uses three agents and
forces MXS and RMMS to coincide, with their common value distinguishing
the two outcomes of a \textsc{Partition} instance.

\begin{remark}
    Akrami et al.~\cite{AFMMR26} define the \emph{strong EEFX share} as the
    value threshold beyond which EEFX-feasibility becomes monotone.  It is
    therefore natural to ask whether one can replace this share by a
    polynomially computable certificate share that is at least as large, and
    then feed this certificate into the allocation framework of this paper.
    However, since this value lies between MXS and RMMS this particular route cannot work in
    polynomial time either, unless $\mathsf{P}=\mathsf{NP}$.  The obstruction is not
    to EEFX itself; it is to simultaneously requiring a numerical threshold
    to dominate the strong EEFX share and to be self-feasible (hence usable as
    a residual allocation threshold).
\end{remark}

\mxsrmmsenhard*

\subsection{Definitions and the share comparison}
\label{subsec:mxsh:definitions}
    
    A bundle $T$ is EEFX-feasible if its complement has an $(n-1)$-partition
    whose every bundle has EFX score at most $v(T)$. The minimum value of such
    a $T$ is $\MXS_n(v;M)$. A threshold $t$ is residual-valid if, for every
    $k<n$ and every collection of disjoint bundles $D_1,\ldots,D_k$ with
    $v(D_j)<t$, the complement of their union has an $(n-k)$-partition of
    minimum value at least $t$. RMMS is the largest residual-valid threshold.
    The following known comparison~\cite{AFMMR26} is recalled with an additive
    proof to make the reduction self-contained.
    
\begin{lemma}
\label{lem:mxsh:comparison}
For nonnegative additive valuations, $\MXS_n(v;M)\leq\RMMS_n(v;M)\leq\MMS_n(v;M).$ \end{lemma}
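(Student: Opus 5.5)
The second inequality is immediate from the definition: taking $k=0$ in residual validity, any residual-valid $t$ comes with an $n$-partition of $M$ whose every bundle has value at least $t$. Hence $t\le\MMS_n(v;M)$, and taking the largest such $t$ gives $\RMMS_n(v;M)\le\MMS_n(v;M)$. The real content is $\MXS_n(v;M)\le\RMMS_n(v;M)$.

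By \Cref{lem:rmms-integer}-style attainment, or simply by maximality, let $\rho=\RMMS_n(v;M)$. We must exhibit an EEFX-feasible bundle of value at most $\rho$. The plan is to use the maximality of $\rho$. Any threshold $t>\rho$ is not residual-valid, so there exist $k<n$ and disjoint bundles $D_1,\dots,D_k$ with $v(D_j)<t$ whose complement $F$ has no $(n-k)$-partition with all bundles worth at least $t$.

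Partition $F$ into $n-k$ bundles by $\LGLB(v,F,n-k)$, and let $L$ be the minimum value; by the choice of the $D_j$, $L<t$. Set $b=\max\{L,v(D_1),\dots,v(D_k)\}<t$. Exactly as in the proof of \Cref{thm:joint-certificate}, the following two facts make the bundle of value $b$ an EFX-feasible bundle in the $n$-partition $(D_1,\dots,D_k,C_1,\dots,C_{n-k})$:
\begin{itemize}
\item by \Cref{lem:LGLB-score}, every $C_j$ has EFX score at most $L\le b$;
\item every $D_j$ has EFX score at most $v(D_j)\le b$.
\end{itemize}
Hence $\MXS_n(v;M)\le b<t$. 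Since this holds for every $t>\rho$, we conclude $\MXS_n(v;M)\le\rho$. The argument is self-contained given \Cref{lem:LGLB-score}, and it does not require \Cref{thm:apsx-split}.

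The main subtlety is the one just exploited: the failure of residual validity at $t$ must be witnessed for all partitions of the complement. In particular it holds for the LGLB partition, so $L<t$ is legitimate. The only care needed is the limiting step $t\downarrow\rho$. If one prefers an attained witness, finiteness of the set of possible subset values yields some $t>\rho$ close enough that $b$ is itself a subset value at most $\rho$. Alternatively, since $\MXS$ is a minimum over finitely many bundle values, the strict bound $\MXS<t$ for all $t>\rho$ directly gives $\MXS\le\rho$.
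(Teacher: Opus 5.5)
Your proof is correct and is essentially the paper's argument. The paper shows directly that $t=\MXS_n(v;M)$ is residual-valid: if deletions below $t$ left an $\LGLB$ residual minimum below $t$, the combined $n$-partition would contain an EFX-feasible bundle of value $b<t$. You run the same step in contrapositive form for every $t>\rho$, using the same ingredients, namely \Cref{lem:LGLB-score}, the bound $\vefx_v(D_j)\le v(D_j)$, and the distinguished bundle of value exactly $b$.
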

\begin{proof}
The second inequality is the no-deletion case of residual validity.
For the first, let $t=\MXS_n(v;M)$. The case $t=0$ is immediate.
The greedy rule $\LGLB$, which assigns goods in nonincreasing value order
to a currently least-valued bundle, has the following property: every
bundle's EFX score is at most the partition's minimum value. Indeed, the
last good inserted into a bundle is one of its minimum-valued goods, and
its load immediately before that insertion is no larger than every
other final load.

Suppose $k<n$ deleted bundles each have value below $t$, but the greedy
$(n-k)$-partition of the remaining goods has minimum value $\ell<t$.
The combined $n$-partition has every EFX score at most $b=\max\bigl(\{\ell\}\cup\{v(D_j):j\in[k]\}\bigr)<t.$ One of its bundles has value exactly $b$, so it is EEFX-feasible. This
contradicts the definition of $t$. This argument includes $k=0$.
Thus $t$ is residual-valid.
\end{proof}

\subsection{The three-agent construction}
\label{subsec:mxsh:construction}

Start with an instance of the NP-complete problem
\textsc{Partition}~\cite{GJ78}: positive integers
$a_1,\ldots,a_r$ with $\sum_{j=1}^{r}a_j=2B.$ The question is whether some subset sums to $B$. Restricting to even
total does not affect NP-hardness, since odd-total instances are
immediate no-instances. Define
\begin{equation}
 C=2B+1,\qquad L=4(rC+B).
 \label{eq:mxsh:parameters}
\end{equation}

    Create $2r+3$ goods, with three identical agents with the following valuation:
\begin{align}
 v(e_j)&=4(C+a_j) &&(j\in[r]),\notag\\
 v(d_j)&=4C &&(j\in[r+1]),\label{eq:mxsh:values}\\
 v(z)&=L+1, \notag\\
 v(u)&=1.\notag
\end{align}
Let $E=\{e_1,\ldots,e_r\}$, $F=\{d_1,\ldots,d_{r+1}\}$, and
$R=E\cup F$. The set $R$ is the core, $z$ is the marker good, and $u$
is the unit good. Useful identities are
\begin{equation}
 v(R)=2L+4C,\qquad
 v(E)=L+4B,\qquad
 v(F)=L+4B+4.
 \label{eq:mxsh:totals}
\end{equation}

\begin{lemma}[Core subset values]
\label{lem:mxsh:core}
If $S\subseteq R$ and
$v(S)\leq L+1$, then $v(S)\leq L$, $|S|\leq r$, and $R\setminus S$
contains a good of value $4C$. Consequently,
\begin{equation}
 \vefx_v(R\setminus S)=2L-v(S).
 \label{eq:mxsh:core-score}
\end{equation}
Moreover, there is a core subset of value exactly $L$ if and only if the
source \textsc{Partition} instance is a yes-instance.
\end{lemma}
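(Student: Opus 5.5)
The plan is to work entirely with the normalized arithmetic of the goods in $R$. Every core good has value $4(C+a_j)$ or $4C$, so the value of any $S\subseteq R$ equals $4\bigl(|S|C+\sum_{e_j\in S}a_j\bigr)$. Since $\sum_{e_j\in S}a_j\le 2B<C$, this representation is unique and can be read off: $|S|=\lfloor v(S)/(4C)\rfloor$, and the $a$-part is the remainder divided by four. I will use this to pin down the cardinality of a cheap subset, and then derive the remaining claims from it.

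\textbf{Cardinality and value bound.} Suppose $v(S)\le L+1=4rC+4B+1$. If $|S|\ge r+1$, then $v(S)\ge 4(r+1)C=4rC+8B+4>L+1$, a contradiction. Hence $|S|\le r$. Every core value is divisible by $4$, so $v(S)\le L+1$ forces $v(S)\le L$, because $L$ is itself a multiple of $4$. Next, $F$ has $r+1$ goods of value $4C$ and $|S|\le r$, so some $d_j$ lies outside $S$. Therefore $R\setminus S$ contains a good of value exactly $4C$, and this is the minimum possible value of any core good. It follows that $\vefx_v(R\setminus S)=v(R\setminus S)-4C=v(R)-v(S)-4C$. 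Substituting $v(R)=2L+4C$ from \eqref{eq:mxsh:totals} gives \eqref{eq:mxsh:core-score}. I would verify \eqref{eq:mxsh:totals} beforehand as routine arithmetic: $v(E)=4rC+8B=L+4B$, $v(F)=4(r+1)C=L+4B+4$ using $C=2B+1$, and summing these gives $2L+4C$ since $8B+4=4C$.

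\textbf{Partition equivalence.} For a yes-instance with $\sum_{j\in J}a_j=B$, take $S=\{e_j:j\in J\}\cup\{d_1,\ldots,d_{r-|J|}\}$. This is valid because $|J|\le r$ and $F$ has enough goods. Then $v(S)=4(rC+B)=L$. Conversely, suppose $v(S)=L$. By the first part, $|S|\le r$. Write $v(S)=4(|S|C+\alpha)$ with $0\le\alpha\le 2B<C$. Comparing this with $L=4(rC+B)$ and using $B<C$, uniqueness of the base-$C$ representation forces $|S|=r$ and $\alpha=B$. So the $e_j$ contained in $S$ give a subset of the $a_j$ summing to $B$.

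The only delicate point is the uniqueness step, namely ensuring that the "carry" from the $a$-part can never reach $C$. This holds because the full sum of the $a_j$ is $2B<C$, which is exactly why $C=2B+1$ was chosen. Everything else is bookkeeping.
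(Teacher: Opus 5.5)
Your proof is correct and follows essentially the same route as the paper. Both arguments use divisibility by four to get $v(S)\le L$, the bound $4(r+1)C>L+1$ to get $|S|\le r$, a leftover value-$4C$ good for the EFX score, and uniqueness of $|S|C+\sum_{e_j\in S}a_j$ (via $2B<C$) together with padding by $d_j$'s for the \textsc{Partition} equivalence. The only difference is that you also verify the totals identity explicitly, which the paper only states.
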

\begin{proof}
$v(S)$ is divisible by four. Since $L$ is divisible by four, $v(S)\leq L+1$ implies $v(S)\leq L$.
Every core good is worth at least $4C$, and
$L<4(r+1)C$, so $|S|\leq r$. There are $r+1$ goods of value $4C$,
at least one of which remains outside $S$. All other core goods are
larger, proving \eqref{eq:mxsh:core-score}.

$v(S)=L$ is equivalent to $|S|C+\sum_{j:e_j\in S}a_j=rC+B.$ Since $0\leq\sum_{j:e_j\in S}a_j\leq2B$ and $C>B$, this forces $|S|=r$
and $\sum_{j:e_j\in S}a_j=B$. Conversely, any subset of the $a_j$ summing
to $B$ gives a core subset of value $L$ by padding its element goods
with value-$4C$ goods until it contains $r$ goods.
\end{proof}

\begin{lemma}[The maximin share is constant across the reduction]
\label{lem:mxsh:mms}
Every instance \eqref{eq:mxsh:values} satisfies $\MMS_3(v;M)=L+2.$ The partition $(\{z,u\},E,F)$ is EFX and an MMS partition.
\end{lemma}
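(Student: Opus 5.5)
Two things must be shown: the partition $(\{z,u\},E,F)$ certifies $\MMS_3(v;M)\ge L+2$ and is EFX, and no 3-partition has all three bundles worth at least $L+3$. For the lower bound, the values are $v(\{z,u\})=L+2$, $v(E)=L+4B$ and $v(F)=L+4B+4$, and all three are at least $L+2$ since $B\ge1$.

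For EFX, we compare each bundle with the EFX scores of the others.
\begin{itemize}
\item The EFX score of $\{z,u\}$ is $v(z)=L+1$, which is below every bundle value.
\item The EFX score of $E$ is $L+4B-4(C+\min_j a_j)<L$.
\item The EFX score of $F$ is $L+4B+4-4C=L-4B$, since $C=2B+1$.
\end{itemize}
All of these are at most $L+2$, the smallest bundle value, so the partition is EFX.

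For the upper bound, the total value is
\[
v(M)=v(R)+L+2=3L+4C+2=3L+8B+6.
\]
Suppose some partition $(P_1,P_2,P_3)$ has every bundle worth at least $L+3$. Let $P_1$ be the bundle containing $z$.

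First suppose $P_1$ contains no core good. Then the other two bundles split $R$, possibly together with $u$, and each is worth at least $L+3$. The core goods are all divisible by four, so a bundle made only of core goods that is worth at least $L+3$ is worth at least $L+4$. Hence the core part of a bundle is at least $L+4$ if it lacks $u$, and at least $L+2$ (in fact $L+4$ by divisibility) if it has $u$. Both core parts are therefore at least $L+4$, which is consistent with $v(R)=2L+4C$. So counting alone does not give the contradiction here.

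This shows the naive counting fails, and the statement itself needs rechecking. In this case the total leaves room, since $2(L+4)\le 2L+4C$. I would therefore try to exhibit a better partition directly. Keep $\{z,u\}$ together, worth $L+2$, and split $R$ into two halves each worth at least $L+2$. That does not improve the minimum, because $\{z,u\}$ is still $L+2$. To beat $L+2$, some core good must join $z$. If $P_1\ni z$ contains a core good $g$, then $v(P_1)\ge L+1+4C$. The remaining core goods have value $2L+4C-v(g)\le 2L$, and together with possibly $u$ they must fill two bundles each worth at least $L+3$. That requires at least $2L+6$, but at most $2L+1$ is available, a contradiction.

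It remains to rule out the case where $P_1$ contains no core good, so $P_1\subseteq\{z,u\}$. Then $v(P_1)\le L+2<L+3$, a contradiction. This closes the argument: every partition has a bundle worth at most $L+2$, so $\MMS_3(v;M)=L+2$.

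The main point to handle carefully is the split on whether $z$'s bundle contains a core good, using $v(\text{core good})\ge 4C>1$.
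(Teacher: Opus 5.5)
Your proof is correct and follows the paper's argument: the same witness partition $(\{z,u\},E,F)$ with the same EFX scores, and the same case split on whether $z$'s bundle contains a core good. In the second case, your direct bound $v(R)-4C+v(u)=2L+1<2(L+3)$ replaces the paper's count of $r$ core goods per bundle but reaches the same $2L+1$. The middle paragraph saying the statement ``needs rechecking'' should be deleted: in that case the bundle containing $z$ lies in $\{z,u\}$ and is worth at most $L+2$, exactly as you note at the end.
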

\begin{proof}
By \eqref{eq:mxsh:totals}, this partition's values are
$L+2$, $L+4B$, and $L+4B+4$, whose minimum is $L+2$ because $B\geq1$.
Its scores satisfy $\vefx_v(\{z,u\})=L+1,\quad \vefx_v(E)=L+4B-4(C+\min_j a_j)<L,\quad \vefx_v(F)=4rC<L.$ It is therefore EFX, and it gives the MMS lower bound.

For the upper bound, consider any three-partition. If the bundle containing
$z$ contains no core good, its value is at most $L+2$.
Otherwise at most $2r$ core goods remain for the other two bundles.
A bundle containing at most $r-1$ core goods has value at most $4(r-1)C+8B+1=L-4B-3<L+2,$ even if it receives $u$ and all the source-weight contribution.
Hence a partition with minimum above $L+2$ would have to put exactly $r$
core goods in each of these other two bundles. Their combined value is
then at most $8rC+8B+1=2L+1$, which is smaller than $2(L+2)$.
This is impossible.
\end{proof}

\subsection{Exact MXS and RMMS values}
\label{subsec:mxsh:gap}

\begin{lemma}[The MXS gap]
\label{lem:mxsh:mxs}
The construction satisfies
\[
 \MXS_3(v;M)=
 \begin{cases}
 L+1,&\text{if the source instance is a yes-instance},\\
 L+2,&\text{if the source instance is a no-instance}.
 \end{cases}
\]
\end{lemma}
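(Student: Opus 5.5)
We need to prove two things: MXS is at most $L+2$ in all instances, at most $L+1$ in yes-instances, and at least $L+1$ always, with value $L+1$ impossible in no-instances. Equivalently, we classify all EFX-feasible bundles of value at most $L+1$.

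\emph{Upper bounds.} By \Cref{lem:mxsh:mms}, the partition $(\{z,u\},E,F)$ is EFX and its bundle $\{z,u\}$ has value $L+2$, so $\MXS_3(v;M)\le L+2$ always. For a yes-instance, \Cref{lem:mxsh:core} supplies $S\subseteq R$ with $v(S)=L$. Consider the partition $(\{z\},\,S+u,\,R\setminus S)$. The bundle $\{z\}$ has value $L+1$. Its EFX-feasibility needs three checks:
\begin{itemize}
\item $\vefx_v(S+u)=L$, obtained by removing $u$;
\item $\vefx_v(R\setminus S)=2L-L=L$, by \eqref{eq:mxsh:core-score};
\item the EFX score of a singleton is $0$.
\end{itemize}
Hence $\MXS_3\le L+1$ in this case.

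\emph{Lower bound $L+1$.} Let $T$ be EFX-feasible in some $3$-partition $(T,X,Y)$ with $v(T)\le L+1$. We derive constraints in cases according to where $z$ lies.

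\textbf{Case $z\notin T$.} Then $T\subseteq R+u$, and $v(T\cap R)\le L$ by \Cref{lem:mxsh:core}. Say $z\in X$.
\begin{itemize}
\item If $X$ contains any other good, then $\vefx_v(X)\ge v(z)=L+1>v(T)$ unless $v(T)=L+1$. Since core values are divisible by four, $v(T)=L+1$ forces $v(T\cap R)=L$ and $u\in T$. Even then, removing the smallest good of $X$ leaves $z$ plus something positive, so the score exceeds $L+1$. Contradiction.
\item So $X=\{z\}$ and $Y$ contains the rest of $R$ together with $u$ if $u\notin T$. If $u\in Y$, removing $u$ gives $\vefx_v(Y)=v(R\setminus(T\cap R))=2L+4C-v(T\cap R)\ge L+4C>v(T)$. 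If $u\in T$, then $\vefx_v(Y)=2L-v(T\cap R)$ by \eqref{eq:mxsh:core-score}, and we need $2L-v(T\cap R)\le v(T\cap R)+1$. Together with $v(T\cap R)\le L$ and divisibility by four, this forces $v(T\cap R)=L$. That makes $v(T)=L+1$ and the instance a yes-instance.
\end{itemize}

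\textbf{Case $z\in T$.} Then $v(T)\le L+1$ forces $T=\{z\}$. The remaining goods $R+u$ are split between $X$ and $Y$. Put $S=X\cap R$ with $v(S)\le v(Y\cap R)$, so $v(S)\le L+2C$.
\begin{itemize}
\item If $v(S)\le L+1$, then the bundle containing $R\setminus S$ has score at least $2L-v(S)$, computed as in the previous case (when $u$ sits with it, removing $u$ only raises the bound). We need this to be at most $L+1$, so $v(S)\ge L-1$, hence $v(S)=L$: a yes-instance.
\item If $v(S)>L+1$, then $v(S)\ge L+4$. Both core parts have value at least $L+4$, and removing a minimal good from each still leaves more than $L+1$ because the total is $2L+4C$. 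Making this precise uses $v(S)\le L+2C$, so the other side is at least $L+2C$; removing at most $4(C+\max_j a_j)$ leaves roughly $L+2C-4C-8B$, which is not obviously large.
\end{itemize}

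This last subcase is the main obstacle. I would refine it by counting goods: both parts have value above $L$, so by \Cref{lem:mxsh:core} each contains at least $r+1$ core goods. Since $|R|=2r+1$, that is impossible. So this subcase cannot occur.

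Combining the cases: every EFX-feasible bundle has value at least $L+1$, and value exactly $L+1$ requires a core subset of value $L$. With divisibility by four ruling out intermediate values, this gives $L+2$ for no-instances.
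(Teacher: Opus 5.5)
\textbf{Gap in the last subcase.} Your upper bounds are fine, and so is most of the case analysis. The gap is in the last subcase of Case $z\in T$, where $T=\{z\}$ and both core parts $X\cap R$, $Y\cap R$ have value at least $L+4$. The counting argument you use there reverses \Cref{lem:mxsh:core}. That lemma gives $v(S)\le L+1\Rightarrow |S|\le r$. It does not say that a core set of value above $L$ must contain $r+1$ goods, and that converse is false: $E$ has exactly $r$ goods but $v(E)=L+4B>L+1$. So the subcase really does occur as a partition, for instance $(\{z\},E,F\cup\{u\})$, where both core parts exceed $L+1$. What has to be shown is that $\{z\}$ is not EFX-feasible in such partitions, and your argument does not show this.

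\textbf{The missing idea.} Look at the bundle holding $u$ rather than the lighter core part. Because $v(u)=1$ is strictly below every core value, the bundle $W\in\{X,Y\}$ containing $u$ satisfies $\vefx_v(W)=v(W\cap R)$ exactly. In your last subcase this is at least $L+4>L+1=v(T)$, which is an immediate contradiction. It is cleaner to define $S$ from the start as the core part of $u$'s bundle, which is what the paper does. Then:
\begin{itemize}
\item feasibility gives $v(S)\le L+1$, hence $v(S)\le L$ by \Cref{lem:mxsh:core};
\item the other bundle is exactly $R\setminus S$, with score $2L-v(S)$;
\item $2L-v(S)\le L+1$ together with divisibility forces $v(S)=L$.
\end{itemize}
This needs neither the split into $v(S)\le L+1$ versus $v(S)>L+1$ nor the normalization $v(S)\le L+2C$.

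\textbf{A harmless slip.} In Case $z\notin T$, you claim $\vefx_v(X)>L+1$ whenever $X\ni z$ contains another good. This fails for $X=\{z,d_j\}$, whose score is exactly $L+1$. Indeed, in a yes-instance $S+u$ is EFX-feasible in $(S+u,\{z,d_j\},R\setminus(S+d_j))$ for any value-$4C$ good $d_j\in R\setminus S$. It does no damage, because in that branch you had already derived $v(T\cap R)=L$, i.e.\ a yes-instance, which is all the lemma needs.
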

\begin{proof}
First, no bundle $T$ of value below $L+1$ is EEFX-feasible.
Such a bundle cannot contain $z$. In any putative complement certificate,
$z$ must be a singleton: if its bundle contained another good, deleting
that other good would leave value at least $L+1>v(T)$.
Thus the other certificate bundle must contain every remaining good.
Put $S=T\cap R$:

If $u\notin T$, the other certificate bundle is $(R\setminus S)\cup\{u\}$,
and its score is $v(R)-v(S)=2L+4C-v(S)>v(S)=v(T),$ since $v(S) \leq L$. If $u\in T$, then $v(S)+1<L+1$, so divisibility gives
$v(S) \leq L-4$. Its rival's score is
$2L-v(S) \geq L+4>v(S)+1=v(T)$ by Lemma~\ref{lem:mxsh:core}.
Both cases contradict EEFX-feasibility. Thus $\MXS_3\geq L+1$.

In a yes-instance, take a core subset $S$ with $v(S)=L$.
The bundle $S\cup\{u\}$ has value $L+1$, and its complement is certified
by the two bundles $\{z\}$ and $R\setminus S$, whose scores are $0$ and
$L$, respectively. Hence $\MXS_3=L+1$.

In a no-instance, a bundle of value $L+1$ that does not contain $z$ would have to be
$S\cup\{u\}$ with $v(S)=L$, which is impossible by
Lemma~\ref{lem:mxsh:core}. The only remaining candidate of value $L+1$
is $\{z\}$. Suppose its complement $R\cup\{u\}$ had a two-bundle
certificate. Write its bundle containing $u$ as $S\cup\{u\}$. 
Its score is $v(S)$, so $v(S)\leq L+1$ and hence $v(S)\leq L$.
The other bundle's score is $2L-v(S)$ by Lemma~\ref{lem:mxsh:core}.
For both scores to be at most $L+1$, one needs $L-1\leq v(S)\leq L.$ Divisibility forces $v(S)=L$, again a contradiction with the assumption of having a no-instance. Thus every
EEFX-feasible bundle has integer value at least $L+2$.
The partition in Lemma~\ref{lem:mxsh:mms} gives an EEFX-feasible bundle
$\{z,u\}$ of value $L+2$, proving equality.
\end{proof}

\begin{lemma}[The interval collapses]
\label{lem:mxsh:rmms}
For the construction,
\begin{equation}
 \MXS_3(v;M)=\RMMS_3(v;M)=
 \begin{cases}
 L+1,&\text{in a yes-instance},\\
 L+2,&\text{in a no-instance}.
 \end{cases}
 \label{eq:mxsh:gap}
\end{equation}
\end{lemma}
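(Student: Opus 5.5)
By \Cref{lem:mxsh:comparison} we have $\MXS_3(v;M)\le\RMMS_3(v;M)\le\MMS_3(v;M)$, and by \Cref{lem:mxsh:mms} the right end equals $L+2$. Combined with \Cref{lem:mxsh:mxs}, the no-instance case is immediate: there $\MXS_3=L+2$, so all three shares coincide at $L+2$. The only remaining work is the yes-instance case. There we must show $\RMMS_3(v;M)\le L+1$, i.e.\ that the threshold $t=L+2$ is \emph{not} residual-valid. Since RMMS is attained and integer-valued by \Cref{lem:rmms-integer}, this gives $\RMMS_3\le L+1=\MXS_3$, and hence equality.

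To refute residual validity of $L+2$, I will exhibit one deleted bundle $D_1$ with $v(D_1)<L+2$ such that the complement has no $2$-partition with both parts of value at least $L+2$. In a yes-instance, \Cref{lem:mxsh:core} supplies a core subset $S$ with $v(S)=L$. I will take $D_1=S\cup\{u\}$, of value $L+1<L+2$. The complement is $\{z\}\cup(R\setminus S)$, with total value
\[
(L+1)+(2L+4C-L)=2L+4C+1 .
\]
This total is at least $2(L+2)$, so a simple counting argument will not suffice. The argument must instead be structural.

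\textbf{Structural step.} Split the complement into two bundles, where $X$ contains $z$ and $Y$ is the other bundle. Then $Y\subseteq R\setminus S$. Moreover $v(X)\ge L+2$ requires $X$ to contain some core good besides $z$, because $v(z)=L+1$. Hence
\[
v(Y)\le v(R\setminus S)-4C=L+4C-4C=L<L+2 ,
\]
using that every core good is worth at least $4C$. So no valid split exists, and $L+2$ fails with $k=1$. I expect this step, choosing the right deletion so that the marker good $z$ is isolated and wastes its surplus, to be the main point. The bound $v(R\setminus S)=L+4C$ from \eqref{eq:mxsh:totals} makes it go through cleanly.

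With $\RMMS_3=L+1$ in yes-instances and $L+2$ in no-instances, \eqref{eq:mxsh:gap} follows. Note also that I should double-check the sandwich orientation, i.e.\ that \Cref{lem:mxsh:comparison} is applied for $n=3$ with nonnegative additive $v$, which holds here.
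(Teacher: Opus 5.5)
Your proposal is correct and follows essentially the paper's proof: the same deletion $D_1=S\cup\{u\}$ of value $L+1$, the same split according to whether $z$ is alone or shares a bundle with a core good, and the same sandwich for the no-instance and for the lower bound. The only difference is that you refute just $t=L+2$ and then use that RMMS is an attained integer at most $L+2$, whereas the paper rules out every $t>L+1$ directly. Your structural step already does this verbatim, since $v(z)=L+1<t$ and $v(Y)\le L<t$, so the integrality appeal is not needed.
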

\begin{proof}
In a no-instance, Lemmas~\ref{lem:mxsh:comparison},
\ref{lem:mxsh:mms}, and~\ref{lem:mxsh:mxs} give $L+2=\MXS_3\leq\RMMS_3\leq\MMS_3=L+2.$ In a yes-instance, let $S\subseteq R$ have value $L$ and delete
$D=S\cup\{u\}$, of value $L+1$. Put $Q=R\setminus S$.
The remaining goods are $Q\cup\{z\}$, where $v(z)=L+1,\qquad v(Q)=L+4C>L+1,\qquad \vefx_v(Q)=L<L+1.$ Next, we show their two-agent MMS is exactly $L+1$. For the upper bound, in any two-partition, either $z$
is alone, giving a bundle of value $L+1$, or its bundle contains some $g\in Q$.
In the latter case, the other bundle is a subset of $Q\setminus\{g\}$
and has value at most $\vefx_v(Q)=L<L+1$.

Therefore every threshold $t>L+1$ fails residual validity: the single
bundle $D$ has value $L+1<t$, while its complement has two-agent MMS
$L+1<t$. It follows that $\RMMS_3\leq L+1$. The lower bound follows
from $\MXS_3=L+1$ and Lemma~\ref{lem:mxsh:comparison}.
\end{proof}

\paragraph{Polynomial reduction.}
The construction has $2r+3$ goods, all with positive integer values
$O(rB)$, and polynomial binary encoding length. By
\eqref{eq:mxsh:gap}, any number satisfying \eqref{eq:mxsh:sandwich}
is forced to equal $L+1$ in a yes-instance and $L+2$ in a no-instance.
Comparing the output with $L+1+1/2$ therefore decides \textsc{Partition}.
This proves Theorem~\ref{thm:mxsh:main} for three agents, including the
promise that MXS and RMMS coincide. This is an ordinary NP-hardness
reduction; it does not claim strong NP-hardness.

\subsection{More agents and nondegenerate valuations}
\label{subsec:mxsh:extensions}

\begin{lemma}[Huge-good padding preserves both endpoints]
\label{lem:mxsh:padding}
Let $v$ be a positive additive valuation on a nonempty set $M_0$, and let
$p\geq2$. Add $h$ goods, each of value $G>v(M_0)$, and increase the
number of agents from $p$ to $p+h$. Then MXS, RMMS, and MMS are unchanged.
\end{lemma}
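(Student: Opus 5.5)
Write $M'=M_0\cup\{y_1,\dots,y_h\}$ with $v(y_s)=G>v(M_0)$, $n'=p+h$. I will prove the three equalities separately, each by a two-way bound. The structural fact used throughout is that any bundle containing a huge good is worth more than all of $M_0$. Consequently, in a good partition huge goods should sit alone, and deleting a huge good from a bundle reduces it below $G$.

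\textbf{MMS.} For $\MMS_{n'}(v;M')\ge\MMS_p(v;M_0)$, take an MMS $p$-partition of $M_0$ and add the $h$ singletons $\{y_s\}$. For the reverse inequality, take any $n'$-partition. At most $h$ bundles contain a huge good, so at least $p$ bundles lie entirely inside $M_0$. Merge the extra non-huge bundles (and the $M_0$-parts of huge bundles) into those $p$ bundles; this only increases values. Each value is at least the old minimum, since the old minimum is at most the value of some $M_0$-only bundle, which is at most $v(M_0)<G$.

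\textbf{MXS.} For the upper bound on $\MXS_{n'}$, pad an MXS witness $(B_1,\dots,B_p)$ of $M_0$ with huge singletons. A singleton has EFX score $0$, so the distinguished bundle remains EFX-feasible. For the lower bound, take any $n'$-partition with an EFX-feasible bundle $T$. First suppose $T$ contains a huge good. Then $v(T)>v(M_0)\ge\MMS_p\ge\MXS_p$, so we are done.

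Otherwise $v(T)<G$. Any bundle containing a huge good together with another good has EFX score at least $G>v(T)$. Any bundle with two huge goods likewise has score $\ge G$. Hence every huge good is a singleton, occupying $h$ bundles. The remaining $p$ bundles, $T$ among them, partition $M_0$ with all EFX scores at most $v(T)$. So $T$ is EFX-feasible in a $p$-partition of $M_0$, giving $v(T)\ge\MXS_p$.

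\textbf{RMMS.} I expect this to be the main obstacle, because deletions may contain huge goods; note, however, that a deletion of value below the threshold cannot. Since $\RMMS\le\MMS\le v(M_0)/p<G$, all thresholds involved are below $G$.

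For $\RMMS_{n'}\ge\RMMS_p=\rho$: take $k<n'$ disjoint deletions, each of value $<\rho$. None contains a huge good. If $k<p$, residual feasibility in $M_0$ gives $p-k$ bundles of value $\ge\rho$; add the $h$ huge singletons, with leftover goods added anywhere. If $k\ge p$, split into $n'-k\le h$ bundles, each containing at least one huge good (each worth $\ge G>\rho$), absorbing all remaining goods.

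For $\RMMS_{n'}\le\rho$: take $t>\rho$ with $t<G$ (thresholds $\ge G$ fail already by the MMS part). By the definition of $\rho$ there are $k<p$ deletions in $M_0$, each of value $<t$, such that $M_0\setminus\bigcup D_j$ has no $(p-k)$-partition with all parts $\ge t$. Use the same deletions in $M'$. Suppose the complement had an $(n'-k)$-partition with all parts $\ge t$. By the merging argument from the MMS step, at least $p-k$ of its bundles avoid the huge goods. Merging the $M_0$-leftovers into them yields a $(p-k)$-partition of $M_0\setminus\bigcup D_j$ with all parts $\ge t$, a contradiction. Hence $t$ is not residual-valid in $M'$, and so $\RMMS_{n'}\le\rho$.

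(The attained-maximum caveat is handled by taking the supremum, as in Observation~\ref{lem:rmms-integer}.)
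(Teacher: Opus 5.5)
Your proof is correct and follows essentially the same route as the paper: you pad with huge singletons for one direction, and for the other you select the $p$ (respectively $p-k$) huge-free bundles and merge the leftover original goods into them, so the RMMS part reduces to the same equivalence of residual validity for thresholds $t<G$. In the MXS lower bound you show each huge good must be a singleton because any larger bundle containing it has EFX score at least $G>v(T)$, which is more precise than the paper's appeal to bundle values.
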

\begin{proof}
    For MMS, appending $h$ huge singletons gives the lower bound. In any
    $(p+h)$-partition, at least $p$ bundles contain no huge good. Select $p$
    such bundles and append any other original goods to them; their minimum
    gives an original $p$-partition with at least the enlarged partition's
    minimum. This proves the upper bound.

    An original MXS witness for the instance on \(M_0\) can be extended to the enlarged instance by adding each huge good as a singleton bundle. This immediately shows that the MXS value in the enlarged instance is at most the original MXS value. For the reverse direction, consider a minimum-value MXS witness in the enlarged instance. Its value is at most the original MXS value, and hence is at most \(v(M_0)<G\). Therefore, its distinguished bundle cannot contain any huge good. Moreover, every huge good must appear as a singleton in the corresponding certificate partition, since any bundle containing a huge good has value at least \(G\), strictly larger than the value of the distinguished bundle. Removing these huge singleton bundles leaves an MXS witness for the original instance on \(M_0\). Hence, the MXS value is unchanged.
    
    Both RMMS values are at most the common MMS, which is below $G$.
    Fix a threshold $0<t<G$. Admissible deleted bundles contain no huge
    good. For $k<p$ deletions, a partition of the enlarged remainder into
    $p+h-k$ bundles meeting $t$ exists if and only if the original remainder
    can be partitioned into $p-k$ such bundles. One direction appends the huge
    singletons; the other selects $p-k$ bundles without huge goods and appends
    all remaining original goods. When $k\geq p$, the enlarged instance has
    at most $h$ remaining agents, so the huge goods alone suffice.
    Thus residual validity is equivalent in the two instances throughout
    the range containing their RMMS values.
\end{proof}

    For any fixed \(n\ge 3\), apply Lemma~\ref{lem:mxsh:padding} with \(p=3\) and \(h=n-3\). Since the exact gap in \eqref{eq:mxsh:gap} is preserved under this padding, the first part of Theorem~\ref{thm:mxsh:main} follows for every fixed \(n\ge 3\).

\begin{proposition}[Nondegenerate positive integer inputs]
\label{prop:mxsh:nondegenerate}
For every fixed $n\geq3$, the intermediate-share problem remains NP-hard
when all singleton values are positive integers and all subset values
are distinct.
\end{proposition}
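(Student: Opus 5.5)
Starting from the three-agent construction of \Cref{subsec:mxsh:construction}, I will perturb the valuation so that all subset values become distinct, while the MXS and RMMS values still separate yes-instances from no-instances. After that, \Cref{lem:mxsh:padding} lifts the result to every fixed $n\ge3$. The natural perturbation is to scale every value by a large factor $K$ and add a distinct power-of-two perturbation. Concretely, I index the $2r+3$ goods as $g_1,\dots,g_{2r+3}$ and set $v'(g_i)=K\,v(g_i)+2^{i-1}$, with $K=2^{2r+4}$. Every subset $S$ then has value $Kv(S)+\sigma(S)$, where $\sigma(S)=\sum_{g_i\in S}2^{i-1}$ is a distinct integer in $[0,2^{2r+3})$, and $\sigma(S)<K$. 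It follows that distinct subsets have distinct values, and that comparisons between subset values under $v'$ agree with those under $v$ whenever the $v$-values differ. The weights stay positive integers of polynomial bit length.

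The key step is to show that the shares of $v'$ lie in disjoint ranges in the two cases. I aim for the following.
\begin{itemize}
\item In a yes-instance, $\RMMS_3(v')<K(L+2)$.
\item In a no-instance, $\MXS_3(v')\ge K(L+2)$.
\end{itemize}
Since $\MXS\le\RMMS$ by \Cref{lem:mxsh:comparison}, any $s$ in the sandwich \eqref{eq:mxsh:sandwich} then satisfies $s<K(L+2)$ exactly on yes-instances. This decides \textsc{Partition}, with no promise needed.

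\textbf{Yes-instance bound.} I reuse the deletion from \Cref{lem:mxsh:rmms}, namely $D=S\cup\{u\}$ with $v(S)=L$. Its value is $v'(D)=K(L+1)+\sigma(D)<K(L+2)$. The complement $Q\cup\{z\}$ has two-bundle MMS under $v'$ below $K(L+2)$. The reason is that each of the cases used in that proof had $v$-value at most $L+1$, which gives $v'$-value at most $K(L+1)+\sigma<K(L+2)$. So any threshold $t\ge K(L+2)$ fails residual validity.

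\textbf{No-instance bound.} Take any EEFX-feasible $T$ under $v'$ with certificate partition $(T,X,Y)$. The $v'$-EFX scores are $K\cdot(\text{$v$-score computed with the $v'$-minimum good})+\sigma$. I will check that the case analysis in \Cref{lem:mxsh:mxs} survives. In each case there, infeasibility came from a strict integer gap of at least $1$ in $v$-values between a rival's score and $v(T)$; scaling by $K$ gives a gap of at least $K$, which dominates the $\sigma$ terms. One subtlety is that the good minimizing $v'$ in a bundle may differ from the one minimizing $v$. However, the minimizer under $v'$ also minimizes $v$ up to ties, because $K$ dominates, so the $v$-score is the same. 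Thus every $v'$-EEFX-feasible $T$ has $v(T)\ge L+2$, and hence $v'(T)\ge K(L+2)$.

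I expect the main obstacle to be this robustness check: confirming that every inequality in \Cref{lem:mxsh:core}--\Cref{lem:mxsh:mxs} was a strict integer inequality, so that it survives a perturbation smaller than $K$. This matters especially in the divisibility-by-four steps, which only gave gaps of at least $1$. Those remain sufficient because $\sigma<K$. Finally, padding with huge goods of distinct values $G_j=2^{j}G$, for $G>v'(M_0)$, keeps nondegeneracy. The proof of \Cref{lem:mxsh:padding} only used that each added good exceeds $v'(M_0)$, so it transfers verbatim.
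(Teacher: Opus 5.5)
Your proposal is correct and follows essentially the paper's route: a binary power-of-two perturbation dominated by the integer gaps (the paper adds $2^{j-1}/2^{q+4}$ and then scales by $2^{q+4}$, which yields exactly your form $K v(g_i)+2^{i-1}$). As in the paper, every perturbed EEFX certificate is an original certificate, which forces $\MXS\ge L+2$ (scaled) on no-instances, and the deletion $D=S\cup\{u\}$ still caps $\RMMS$ strictly below that level on yes-instances. The only difference is that you perturb the three-agent instance before padding with distinct huge goods, whereas the paper pads first with identical huge goods and then perturbs all goods; this is immaterial, since the padding lemma indeed only uses that each added good exceeds $v'(M_0)$.
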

\begin{proof}
Take an integer instance of the preceding reduction, including the
padding when needed. Index its $q$ goods by $g_1,\ldots,g_q$ and put
\[
 \widetilde v(g_j)=v(g_j)+\frac{2^{j-1}}{2^{q+4}},\qquad
 \Delta=\sum_{j=1}^{q}\frac{2^{j-1}}{2^{q+4}}<\frac1{16}.
\]
All strict comparisons between original integer subset values are
preserved. Equal original subset values become distinct because the
perturbation encodes the subset in binary. Thus $\widetilde v$ is
nondegenerate. Furthermore, every EEFX certificate under $\widetilde v$
is also a certificate under $v$: an originally violated comparison
has integer gap at least one and cannot be reversed by perturbations
of total size below $1/16$.

In a no-instance, Lemma~\ref{lem:mxsh:mxs} and the padding lemma give
$\MXS_n(v;M)=L+2$. Hence every new EEFX-feasible bundle has original
value at least $L+2$, and therefore $\MXS_n(\widetilde v;M)\geq L+2.$ In a yes-instance, keep the deleted bundle $D$ from
Lemma~\ref{lem:mxsh:rmms}. Its new value is at most $L+1+\Delta$.
Its original complement has $(n-1)$-agent MMS $L+1$; with padding this
follows again by the MMS part of Lemma~\ref{lem:mxsh:padding}.
Adding nonnegative perturbations of total size $\Delta$ increases the
MMS of any fixed remaining set by at most $\Delta$: for each partition,
a previously minimum-valued bundle increases by at most $\Delta$.
Thus every threshold above $L+1+\Delta$ still fails on deletion $D$, and $\RMMS_n(\widetilde v;M)\leq L+1+\Delta<L+1+\frac1{16}.$ Comparison with $L+1+1/2$ separates the two cases.
Finally, multiply every value by $Q=2^{q+4}$, obtaining positive integers
$Qv(g_j)+2^{j-1}$. All share values scale by $Q$, so compare the output
with $Q(L+1+1/2)$. The encoding length remains polynomial. This proves the
nondegenerate version and completes Theorem~\ref{thm:mxsh:main}.
\end{proof}

\subsection{Consequences and scope}
\label{subsec:mxsh:scope}

\begin{corollary}[No polynomial MXS-dominating residual certificate]
\label{cor:mxsh:residual}
Unless $P=NP$, no deterministic polynomial-time algorithm can compute,
for every positive integer additive valuation, a residual-valid floor
$t\geq\MXS_n(v;M)$. This holds already for three agents, even if the
algorithm is not required to output an implementation of a residual
splitter.
\end{corollary}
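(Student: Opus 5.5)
The plan is a reduction from the \textsc{Partition}-based hardness in Theorem~\ref{thm:mxsh:main}. Suppose, for contradiction, that a deterministic polynomial-time algorithm $\mathcal{A}$ outputs, for every positive integer additive valuation, a residual-valid floor $t$ with $t\ge\MXS_n(v;M)$. A residual-valid threshold is at most the largest residual-valid threshold, so
\[
\MXS_n(v;M)\le t\le\RMMS_n(v;M).
\]
This is exactly the sandwich \eqref{eq:mxsh:sandwich}.

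Feed $\mathcal{A}$ the three-agent instance built from a \textsc{Partition} instance in \Cref{subsec:mxsh:construction}. This instance has polynomial encoding length and positive integer values. By Lemma~\ref{lem:mxsh:rmms} the interval collapses to the single value $L+1$ in a yes-instance and $L+2$ in a no-instance. Hence the output $t$ is forced to equal $L+1$ or $L+2$ accordingly. Comparing $t$ with $L+\tfrac32$ decides \textsc{Partition} in polynomial time, which forces $\mathrm{P}=\mathrm{NP}$.

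For general fixed $n\ge3$, use the padded instance from Lemma~\ref{lem:mxsh:padding}, which preserves both endpoints, and run the same comparison.

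The reduction only uses the numerical value $t$. Nothing requires $\mathcal{A}$ to also output a residual splitter, which gives the last sentence of the corollary.

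The only point needing care is the first step: that residual validity of $t$ implies $t\le\RMMS_n(v;M)$. This follows because RMMS is defined as the largest residual-valid threshold, and it is attained for integer valuations by Observation~\ref{lem:rmms-integer}. Everything else is immediate from Theorem~\ref{thm:mxsh:main}.
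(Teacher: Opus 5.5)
Your proposal is correct and follows the paper's proof. Like the paper, you observe that residual validity forces $t\le\RMMS_n(v;M)$, so any such algorithm would solve the intermediate-share problem of Theorem~\ref{thm:mxsh:main}; you simply unfold that theorem's \textsc{Partition} reduction and padding step explicitly, and attainment of RMMS is not even needed for the first step, since $t$ is at most the supremum of the residual-valid thresholds.
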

\begin{proof}
Residual validity implies $t\leq\RMMS_n(v;M)$, so such an algorithm
would solve the intermediate-share problem.
\end{proof}

Consequently, a polynomial construction of independent
MXS-dominating residual floors cannot be used as a general numerical
preprocessing step for simultaneous MXS, approximate APS, and EFL.

\section{Conclusion}\label{sec:conclusion}

We studied the simultaneous satisfaction of share-based and envy-based fairness for additive valuations, focusing on APS, MXS, RMMS, and EFL. Our results show that requiring exact EFL is compatible with substantially stronger share guarantees than was previously known. In polynomial time, we obtain exact EFL together with $(1-\varepsilon)$-MXS and $\frac{2n}{(3n-1)}$-APS. 
We also provide a pseudopolynomial algorithm simultaneously that achieves exact EFL, exact MXS, and $\frac{2n}{3n-1}$-APS, matching the strongest known general APS approximation for equal entitlements while imposing the additional MXS and EFL requirements. We also give a fixed-agent pseudopolynomial algorithm for exact RMMS together with exact EFL.
Moreover, we showed that the problem of computing a number between MXS share and RMMS share given a valuation function is
weakly NP-hard.

Several questions remain open. Most prominently, it would be interesting to determine whether exact MXS together with EFL and the $\frac{2n}{3n-1}$-APS guarantee can be computed in polynomial time, rather than pseudopolynomial time. Similarly, obtaining a polynomial-time algorithm for exact RMMS together with EFL would strengthen the algorithmic counterpart of the known existence results. More generally, our results suggest studying which share notions admit efficiently computable certificates that remain valid under the removal of allocated bundles. Developing such certificates may provide a systematic route to combining increasingly strong share guarantees with exact envy-based fairness.
Finally, it remains open whether simultaneous approximate MMS (or APS) and EFL can achieve a factor strictly above $\frac{2n}{3n-1}$.

\paragraph{AI acknowledgment.}
    The authors used ChatGPT Pro (OpenAI) as a research and writing assistant during the preparation of this paper. In particular, it was used to help explore and formalize the authors' ideas for extending the RMMS--EFL existence proof to an algorithmic, polynomial-time framework. All mathematical arguments, statements, and proofs in the paper were subsequently checked and verified, and polished by the authors, who take full responsibility for the content.

{\small
\setlength{\bibsep}{2pt}
\bibliographystyle{plainnat}
\bibliography{references}
}

\appendix
\section{Deferred proofs of \cref{app:apsx}}\label{sec:aps certificate appendix}

\characlemma*

\begin{proof}
Let $\mathcal B=\{B\subseteq M:v(B)\ge\lambda\}$ be the family of bundles whose value is at least $\lambda$. If this family is empty, both assertions are false; otherwise we argue as follows.

By the definition of APS, $\lambda\le\APS_{1/n}(v;M)$ if and only if for every normalized nonnegative price vector $p$, there exists a bundle $B\in\mathcal B$ whose price is at most $1/n$. Equivalently, $\max_{p\in\Delta(M)} \min_{B\in\mathcal B}p(B) \le\frac1n,$ where $\Delta(M)$ denotes the set of probability distributions on the goods, and $p(B)= \sum_{g\in B} p_g$.

Consider the finite zero-sum game in which the row player chooses a good $g\in M$, the column player chooses a bundle $B\in\mathcal B$, and the payoff for the row player is $1$ if $g\in B$ and $0$ otherwise. By the minimax theorem,
\[
\max_{p\in\Delta(M)}
\min_{B\in\mathcal B}p(B)
=
\min_{x\in\Delta(\mathcal B)}
\max_{g\in M}
\Pr_{B\sim x}[g\in B].
\]
Hence $\lambda\le\APS_{1/n}(v;M)$ if and only if there exists a distribution $x$ over $\mathcal B$ such that $\Pr_{B\sim x}[g\in B]\le\frac1n \qquad\text{for every }g\in M.$ Multiplying every probability in $x$ by $n$ and denoting the resulting coefficients by $\gamma_j$ gives $\sum_j\gamma_j=n$ and $\sum_{j:g\in B_j}\gamma_j\le1 \qquad\text{for every }g\in M.$ Every bundle in the support belongs to $\mathcal B$, so it has value at least $\lambda$. This proves \eqref{eq:fractional}.

We now prove the two additional properties. Consider any subfamily whose coefficients have total mass $p$. Since every bundle in the subfamily has value at least $\lambda$, $\sum_j\gamma_jv(B_j)\ge p\lambda.$ By additivity and the incidence constraints,
\[
\sum_j\gamma_jv(B_j)
=
\sum_{g\in M}
v(g)
\sum_{j:g\in B_j}\gamma_j
\le
\sum_{g\text{ used by the subfamily}}v(g).
\]
Thus the total value of the goods used by the subfamily is at least $p\lambda$.

Finally, let $G\subseteq M$ contain $h$ goods. The coefficient mass of bundles intersecting $G$ is at most $\sum_{g\in G} \sum_{j:g\in B_j}\gamma_j \le \sum_{g\in G}1 = h.$ This proves the last claim.
\end{proof}

\section{Deferred proofs of \cref{sec:engine}}
\label{sec:fixed thresholds appendix}

\lemcomplete*

\begin{proof}
The completion consists of two stages.

\paragraph{Stage 1: establish domination of every free singleton.}
Initially, an agent may value a free good more than her own bundle. While there exist an agent $i$ and a free good $g$ such that $v_i(g)>v_i(A_i),$ replace $A_i$ by the singleton $\{g\}$ and return the old bundle $A_i$ to the free pool.

The recipient strictly improves. We claim that EFL is preserved.

For every other agent $j$, $\vefl_j(\{g\})=0.$ Since valuations are nonnegative, $v_j(A_j)\ge0=\vefl_j(\{g\}),$ so no other agent EFL-envies the new singleton bundle. Every other allocated bundle remains unchanged.

For the recipient $i$, her own utility strictly increases. Therefore every EFL comparison that was satisfied before the swap remains satisfied afterward. 
Hence each singleton swap preserves EFL.

During this stage, an agent's bundle changes only when that agent herself receives a free singleton. After her first such swap, she holds a singleton. Every subsequent swap replaces it by another singleton of strictly larger value. There are at most $m$ distinct goods, so each agent can perform only polynomially many such swaps; in particular, the total number is at most $nm$.

When this stage terminates, we have
\[
v_i(g)\le v_i(A_i)
\qquad
\text{for every }i\in N
\text{ and every free good }g\in F.
\]
We call this the \emph{singleton-domination property}.

\paragraph{Stage 2: source additions.}
Construct the strict-envy graph of the current allocation. It has one vertex for each agent, and a directed edge $i\to j$ whenever $v_i(A_j)>v_i(A_i).$ If the envy graph contains a directed cycle, rotate the bundles along that cycle. Every participating agent receives a bundle she strictly prefers to her current bundle. Thus every participant strictly increases her own value, while the multiset of allocated bundles is unchanged.

A cycle rotation preserves EFL. For a participant, the value of her own bundle increases, so all of her EFL comparisons become weakly easier to satisfy. For an agent outside the cycle, her own bundle is unchanged and the multiset of rival bundles is unchanged. Hence all EFL comparisons remain valid.

The singleton-domination property is also preserved because the set of free goods does not change and no agent's own utility decreases.

Repeatedly eliminate envy cycles until the envy graph is acyclic. Every finite acyclic directed graph contains a vertex of indegree zero. Choose such an agent $s$. No other agent strictly envies $s$, so $v_i(A_s)\le v_i(A_i) \qquad \text{for every }i\ne s.$ Now choose any free good $g\in F$ and add it to $A_s$. We verify that EFL is preserved.

Fix another agent $i\ne s$. Since $s$ has indegree zero, $v_i(A_s)\le v_i(A_i).$ By singleton domination, $v_i(g)\le v_i(A_i).$ Using $g$ in the definition of the EFL score of $A_s+g$, we obtain $\vefl_i(A_s+g) \le \max\{v_i(g),v_i(A_s)\} \le v_i(A_i).$ Therefore agent $i$ does not EFL-envy the enlarged bundle $A_s+g$.

Every other rival bundle is unchanged. Agent $s$ weakly increases her own utility by receiving $g$, so her own EFL comparisons remain valid as well. Thus the source addition preserves EFL.

Remove $g$ from the free pool and repeat: eliminate envy cycles, select an indegree-zero agent, and give that agent one free good. The process terminates after at most $m$ source additions.

Between two source additions, the multiset of allocated bundles is fixed. Each cycle rotation strictly improves every participating agent. Since there are only $n$ bundles in the current multiset, each agent can move through at most $n$ distinct bundle values before the next source addition. Hence only polynomially many cycle rotations can occur between two additions.

Therefore the entire completion procedure uses polynomially many arithmetic operations, never decreases any agent's own value, preserves EFL throughout, and terminates with a complete allocation. 
\end{proof}

Simply adding free goods to sources without first establishing singleton domination need not preserve EFL. This is why the completion procedure includes the first stage.

\end{document}